\documentclass[acmsmall,nonacm,screen]{acmart}
\usepackage{macros}

\title{Bidirectional Elaborators \`a la Carte}
\subtitle{Extended version with supplementary appendices}

\author{Andrew Slattery}
\email{aws46@cam.ac.uk}
\orcid{0009-0008-4933-821X}
\affiliation{%
  \department{Computer Laboratory}
  \institution{University of Cambridge}
  \city{Cambridge}
  \country{United Kingdom}
}

\author{Jonathan Sterling}
\email{js2878@cl.cam.ac.uk}
\orcid{0000-0002-0585-5564}
\affiliation{%
  \department{Computer Laboratory}
  \institution{University of Cambridge}
  \city{Cambridge}
  \country{United Kingdom}
}

\begin{document}

\begin{abstract}
Surface syntax in proof assistants like Rocq, Lean, Agda, and Idris is highly implicit, lacking many details that are needed for user-written code to denote precisely defined mathematical objects. \emph{Elaboration} is an algorithm that accounts for these details by translating surface syntax to an explicit enough core syntax. The reliability and predictability of elaboration relies on several critical properties of the core type system, including \emph{decidability of judgemental equality} and the \emph{injectivity of type constructors}; these dependencies are witnessed in a concrete system by explicit calls to conversion checking and weak-head reduction subroutines.

We introduce a dependently typed monadic domain specific language for the executable specification of correct-by-construction elaboration algorithms that is abstracted from any particular representation of normal forms or algorithm for conversion checking. In particular, we represent a bidirectionally typed surface language for Martin-L\"of type theory by shallow embedding in this DSL so that the translation of surface terms into core terms amounts to elementary equational calculation. This translation is correct by construction in the sense that it cannot produce ill-typed terms, and is automatically stable under judgemental equality of core terms and even under substitution; from the latter property, we obtain a new denotational interpretation of the \emph{suspension} of elaboration problems. Finally, a concrete elaboration algorithm is extracted by algebraic means from a presheaf model of the DSL built out of the \emph{bi-initial natural model} of Martin-L\"of type theory. \end{abstract}

\maketitle

\section{Introduction}

Present-day proof assistants based on dependent type theory (\emph{e.g.}\ Rocq, Lean, Agda, and Idris) are all based on \emph{elaboration}: rather than merely ``checking'' that user-written code is well-formed, elaboration is a process that transforms user-written code to a more explicit core syntax \emph{or} fails with an error.
The difference between surface syntax and its core representation is that the latter must contain all the information needed to precisely describe and distinguish a well-defined mathematical object, whereas surface syntax often omits annotations and coercions that would be burdensome for a programmer to write. The design of surface and core syntax are dialectically linked, conditioned both by ergonomic and mathematical concerns:
\begin{enumerate}
  \item \emph{Ergonomics:} it is desirable for a programmer to avoid typing in bureaucratic annotations, arguments, and coercions when these can be reliably and predictably inserted by a machine.
  \item \emph{Mathematics:} whatever information is made implicit in the surface syntax must be unambiguously and effectively reconstructible: this depends on mathematical metatheorems about the core language, \emph{e.g.} normalisation, decidability, strengthening, \emph{etc.}
\end{enumerate}

Experts know well what kinds of decisions in core language design might impede the implementability of a reliable surface language, and (conversely) what constraints on core languages may be required in order to enable  various desirable surface language features. This folklore has not, however, prevented the promulgation of high-profile surface languages such as Lean's in which (for example) the relation of \emph{definitional equality} fails to be transitive;\footnote{Lean's language reference~\citep{lean4-reference-typesystem} comments that its definitional equality is ``reflexive and symmetric, but not transitive''.} for a user, such a failure might result in needing more than one proof step to verify $x=y$ even when $x$ and $y$ are related by a sequence of definitional equations. Scenarios like this challenge the objectivity of proof assistants, which rests on a carefully negotiated division of labour between operator and machine.

Although questionable designs like this may be avoidable with a bit of care, we believe that the bigger problem is an air of ``black magic'' between core language metatheory and the design of suitable surface languages. We aim to address this gap by introducing a flexible mathematical framework in which a surface language is described modularly in terms of \textbf{\emph{elaboration combinators}} in a monadic domain specific language (DSL), similar to the idea of parser combinators~\citep{hill:1996,hutton-meijer:1998}. Then, simple-minded calculational reasoning in the metalanguage may be used to reason about the behaviour of surface language code under elaboration.

Our DSL's metalanguage is \emph{extensional} Martin-L\"of type theory~\citep{hofmann:1997},\footnote{Everything in this paper can be re-done (with a bit of work) in \emph{intensional} type theory. Doing so would detract from our informal exposition, but it will be necessary for the computerised implementation of our framework in future work.} whose dependent types are used to ensure that elaboration combinators are correct by construction in the following sense:

\begin{enumerate}
  \item Elaboration always terminates and can only produce well-typed terms.
  \item Elaboration cannot distinguish definitionally equal core language terms, and therefore respects the full $\beta/\eta$-congruence of the core language.
  \item Elaboration is stable under core language substitution; elaboration problems can therefore be solved in any order without changing the end result.
\end{enumerate}

Our viewpoint on the desirable properties of elaboration is similar to that of Kov\'acs, who has recently argued for \emph{stable elaboration} as a design principle~\citep{kovacs2ltt-principles}. We admit that no existing production system satisfies all of these desiderata, but our experience building and using interactive proof assistants over many years inspires our conviction that these are both desirable \emph{and} achievable---with a bit of mathematics and a bit of engineering. This paper aims to provide the mathematics.

\subsection{Background}

\subsubsection{Algebraic core languages and metatheory}\label{sec:intro:algebraic-core-language}

In its history, the \emph{syntax} of type theory has been treated in a variety of ways that span from the very concrete (inference rules over terms generated by a grammar) to the very abstract (anything satisfying the universal property of the initial model). We prefer to ground the discussion in the structural r\^ole of syntax in the \emph{applications} of type theory---which are mainly in the area of proof assistants and programming languages. To be useful in these areas, an approach to syntax must be compatible with the statement and verification of key metatheoretic lemmas including the \emph{injectivity of type constructors} and \emph{decidability of judgemental equality}, both of which follow from \emph{normalisation}.

For many years, metatheoretic results of these kinds were primarily established with respect to extremely concrete descriptions of syntax---at great cost. In the past decade, however, the category theoretic \emph{glueing method}, long well-understood for simple types~\citep{altenkirch-hofmann-streicher:1995,fiore:2002}, has finally been adapted to the far more difficult case of normalisation for dependent types, first by \citet{coquand:2019} and later refined by \citet{sterling:2021:thesis,sterling:2025:grothendieck} and \citet{bocquet:2026}. With these streamlined methods in hand, the metatheory of dependent types can be established directly on \emph{any} representation of the initial model of the type theory viewed as a \textbf{\emph{second-order generalised algebraic theory}}~\citep{uemura:2021:thesis} or \emph{SOGAT}.

\paragraph{Second-order generalised algebraic theories}

A second-order generalised algebraic theory is described by a signature that interleaves sorts, operations, and equations in a restricted version of extensional type theory. Dependent function sorts for operator arguments are available only when the domain sort is designated as \emph{representable}.
For example, in the signature for Martin-L\"of type theory we have two generating sorts
\[
  \Tp:\SortIdent{Sort}\text{,}
  \qquad
  \Tm\colon \Tp\to\SortIdent{Sort}_{\mathit{repr}}
  \text{.}
\]
The sort of types $\Tp:\SortIdent{Sort}$ is not representable because bound variables cannot range over types, whereas the dependent sort of terms $\alpha:\Tp\vdash \Tm\prn{\alpha}:\SortIdent{Sort}_{\mathit{repr}}$ is representable because bound variables \emph{may} range over terms of a given type. The \emph{contexts} of a type theory arise only indirectly in a SOGAT: they are given by telescopes of representable sorts. %

Every SOGAT gives rise to a (2,1)-category of models, which in the case of Martin-L\"of type theory restricts to the usual notion of CwF~\citep{dybjer:1996} or natural model~\citep{awodey:2018:natural-models}. The \emph{bi-initial model} of a given SOGAT plays the r\^ole of syntax, and it is with respect to this thing that the normalisation results of \citet{sterling:2021:thesis}, \citet{bocquet:2026}, and others apply. These kinds of results should not be thought of as side-stepping traditional syntax; as \citet{uemura:2021:thesis} has shown, traditional syntax gives one representation of the bi-initial model of \emph{any} SOGAT, and therefore the abstract normalisation result applies directly to traditional syntax even though it (thankfully) does not refer to it. We suggest that it is an advantage to have a metatheory that is robust to \emph{any} change in representation of syntax.

\paragraph{The bi-initial model and its metatheory}

A model of a SOGAT consists of a category of contexts $\CX{\mathcal{M}}$ together with an algebra for the signature of sorts and operations and equations in the presheaf category $\Psh{\CX{\mathcal{M}}}$ such that each family of representable sorts is taken to a representable family~\citep{awodey:2018:natural-models,uemura:2021:thesis}. The role of presheaves here can be understood as follows in the case of our running example: the sort $\Tp:\SortIdent{Sort}$ should be interpreted by a family of sets $\prn[\big]{\Tp_{\mathcal{M}}\prn{\Gamma}}_{\prn{\Gamma\in\CX{\mathcal{M}}}}$ of types in context, equipped with a substitution action $\Tp_{\mathcal{M}}\prn{\Gamma}\to\Tp_{\mathcal{M}}\prn{\Delta}$ for any context morphism $\gamma\colon\Delta\to\Gamma$. Once suitable associativity and unit laws for the substitution action are imposed, we have exactly the notion of a presheaf.

In a model of Martin-L\"of type theory, the interpretation of a type constructor amounts to a natural transformation of presheaves. Written in the extensional type theoretic vernacular of $\Psh{\CX{\mathcal{M}}}$ viewed as a locally cartesian closed category, for example, we have an arrow
\[
  \TpPi_{\mathcal{M}}\colon \prn{\alpha:\Tp_{\mathcal{M}}}\times\prn{\Tm_{\mathcal{M}}\prn{\alpha}\to\Tp_{\mathcal{M}}}\to \Tp_{\mathcal{M}}
\]
and the normalisation result implies that when $\mathcal{M}\coloneq\InitML$ is the bi-initial model of Martin-L\"of type theory, the natural transformation $\TpPi_{\InitML}$ depicted above is a \textbf{\emph{monomorphism}}. This is the semantic version of the injectivity of type constructors.

\subsubsection{Bidirectional surface languages}

The ``official'' syntax for Martin-L\"of type theory contains many apparently redundant type annotations. For example, the application of a $\lambda$-abstraction of type $\TpPi\prn{\alpha,x\mathpunct{.}\beta\prn{x}}$ would involve multiple references to $\alpha$ and $\beta$:
\[
  \TmApp\prn{\alpha,x\mathpunct{.}\beta\prn{x},\TmLam\prn{\alpha,x\mathpunct{.}\beta\prn{x}, x\mathpunct{.}\Var{body}\prn{x}},\Var{arg}}
  \tag{fully annotated}
\]

When the injectivity of type constructors (\S~\ref{sec:intro:algebraic-core-language}) holds, an alternative syntax with fewer annotations can be justified as presenting the bi-initial model, allowing us to write
\[
  \TmApp\prn{\TmLam\prn{x\mathpunct{.}\Var{body}\prn{x}},\Var{arg}}
  \text{.}
  \tag{totally unannotated}
\]

Leveraging metatheory to remove all annotations seems splendid at first, but unfortunately it renders the typing problem undecidable and therefore prevents a reasonable implementation. Consequently, the design of surface languages is conditioned by the need for annotation schemes that are less verbose than the default but not so sparse as to impede the system's practical implementation. This search has led to the idea of \emph{bidirectional type checking}~\citep{coquand:1996,pierce-turner:2000}, which is based on the observation that if only neutral and normal forms are allowed, then the totally unannotated system \emph{does} have decidable type checking by means of a bidirectional information flow protocol~\citep{mcbride:ni}:
\begin{enumerate}
  \item Types flow \emph{out} of contexts \emph{through} neutrals: we may \textbf{synthesise} the type of a neutral $\Gamma\vdash E$.
  \item Types flow \emph{in} from the goal \emph{into} normals:
  we may \textbf{check} whether a normal form $\Gamma\vdash M:\alpha$ is well-typed.
\end{enumerate}

Of course, we probably do want to be able to write $\beta$-redexes in a suitable surface language. The \emph{bidirectional} way to achieve this is to force these redexes to come with an appropriate type annotation, \emph{e.g.}\ by extending the synthesisable terms by a constructor $\prn{\alpha\ni M}$ embedding an annotated normal. This leads to a middle ground in which some but not all annotations are present:
\[
  \TmApp\prn[\big]{
    \TpPi\prn{\alpha,x\mathpunct{.}\beta\prn{x}}
    \ni
    \TmLam\prn{x\mathpunct{.}\Var{body}\prn{x}},
    \Var{arg}
  }
  \tag{bidirectionally annotated}
\]

Following \citet{dagand:2013}, we prefer to think of bidirectionality as a property of a distinct surface language with separate sorts for checkable and synthesisable terms. Then, a \emph{bidirectional elaborator} translates the surface terms to core language terms, inserting all the omitted annotations.

\subsection{Key ideas and contributions}

\subsubsection{Bidirectional elaboration in a partiality monad}

In many prior works~\citep{dunfield-krishnaswami:2021,mcbride:ni,mihejevs-hedges:2025}, bidirectional type checking and elaboration have been viewed as an instance of \emph{well-moded logic programming}. This perspective is good for producing attractive displays of inference rules, but it leaves underspecified many important matters that we consider of the essence, including the proper scoping of hypothetical judgements as well as the suspension and resumption of temporarily unsolved elaboration problems.
For this reason, we pursue a \emph{functional} interpretation of bidirectional elaboration in a partiality monad, where input and output modes are mapped to actual inputs and outputs of monadic functions. This is similar to the proposal of \citet{atkey:2015} (see \S~\ref{sec:related-work} for discussion), but unlike \emph{op.\ cit.}\ we will not use the \emph{Maybe} monad.

\paragraph{A generic model of the core language}
The metalanguage for our monadic interpretation of bidirectional elaboration will be \emph{extensional type theory} in which we additionally assume a sequence of constants and equations corresponding to the structure of a model of the target type theory, including  constants
$\Tp:\SET$ and $\Tm\colon \Tp\to\SET$.
Such constants can be found in the extensional type theoretic vernacular of $\Psh{\CX{\InitML}}$, which is where we shall ultimately interpret our metalanguage.

\paragraph{Semantic domains for surface language sorts}

Relative to a monad $\PMC$ determining a notion of computation~\citep{moggi:1991} appropriate for elaboration, we define three semantic domains $\TypScript$, $\SynScript$, and $\ChkScript$ corresponding to surface types, synthesis-mode, and checking-mode surface  terms respectively:
\[
  \TypScript \coloneq \PMC\prn{\Tp}
  \qquad
  \SynScript\coloneq\PMC\prn[\big]{
    \prn{\alpha:\Tp}\times \Tm\prn{\alpha}
  }
  \qquad
  \ChkScript\coloneq
  \prn{\alpha:\Tp}\to \PMC\prn{\Tm\prn{\alpha}}
\]

In other words,
\begin{enumerate}
  \item a surface language type denotes a \emph{computation} of a core language type;
  \item a surface language synthesis-mode term denotes a \emph{computation} of a core language term equipped with its type;
  \item a surface language checking-mode term denotes a dependent mapping from core language types $\alpha$ to \emph{computations} of core language terms of type $\alpha$.
\end{enumerate}

Then, the \emph{constructors} of the surface language will be implemented as \textbf{combinators}.

\paragraph{A partiality monad for elaboration}

What monad $\PMC$ shall we use? Experience both on the theoretical and the practical side suggests that a \emph{partiality monad} is appropriate, and we are keen to clarify that the role of partiality is to introduce \emph{conditions} for successful elaborations rather than to introduce any form of recursion. In particular, we define
\[
  \PMC\prn{X} \coloneq \prn{\varphi:\PROP}\times \prn{\varphi\to X}
\]
where $\PROP$ is a universe of proof-irrelevant propositions.\footnote{In the actual development, we shall replace $\PROP$ with a more restrictive subuniverse in order to automatically obtain the effective computability of elaboration.} An element of $\PMC\prn{X}$ is a \emph{partial} element of $X$ whose support is given by the first component. With this in hand, we can begin to introduce some of the surface language constructors as combinators.

\begin{example}[Conversion]
  We may implement the \emph{conversion rule} or \emph{mode switch} as follows:

  \iblock{\small
    \mrow{\ElabConv\colon \SynScript\to\ChkScript}
    \mhang{
      \ElabConv~E~\alpha \coloneq \Do
    }{
      \commentrow{// synthesise the type of $E$.}
      \mrow{\prn{\Var{typ},\Var{tm}}\gets E}
      \commentrow{// wait for $\Var{typ}$ to be equal to $\alpha$.}
      \mrow{\_\gets \Await~{\Var{typ}=_{\Tp}\alpha}}
      \commentrow{// by equality reflection in the metalanguage we have $\Var{tm}:\Tm\prn{\alpha}$, so we may return the latter.}
      \mrow{\Return~\Var{tm}}
    }
  }

  In the above, `$\Await~{\Var{typ}=_{\Tp}\alpha}$' is the unique element of $\PMC\prn{\Var{typ}=_{\Tp}\alpha}$ whose support is the proposition $\Var{typ}=_{\Tp}\alpha$; in fact, it is the \emph{greatest} element of the partial order $\PMC\prn{\Var{typ}=_{\Tp}\alpha}$. The $\ElabConv$ combinator captures \emph{precisely} the functionality of the conversion rule in bidirectional elaboration: to check a synthesis-mode term $E$ at a specific type, you synthesise the type of the former and proceed further only if it is equal to the type of the latter. In the scope of a proof that the two types are equal, you are free to return the core term denoted by $E$.
\end{example}

Below we show one further example to illustrate the way that our approach leverages the metatheory of the core language.

\begin{example}[Function application]
  Now assume that the constant $\TpPi\colon \prn{\alpha:\Tp}\times\prn{\Tm\prn{\alpha}\to\Tp}\to \Tp$ is \emph{injective}. Then we may implement the elaboration combinator for function application:

  \iblock{\small
    \mrow{\ElabApp : \SynScript\to\ChkScript\to\SynScript}
    \mhang{
      \ElabApp~E~M\coloneq\Do
    }{
      \commentrow{// we synthesise $E$, hoping to get a function type.}
      \mrow{
        \prn{\Var{typ},\Var{fun}}\gets E
      }
      \commentrow{// because $\TpPi$ is injective, the preimage $\TpPi^{-1}\prn{\Var{typ}}$ is a proposition!}
      \mrow{
        \prn{\Var{dom},\Var{fam}}\gets \Await~\TpPi^{-1}\prn{\Var{typ}}
      }
      \commentrow{// in this scope, we have $\Var{typ}= \TpPi\prn{\Var{dom},\Var{fam}}$; we check $M$ against $\Var{dom}$.}
      \mrow{
        \Var{arg}\gets M\prn{\Var{dom}}
      }
      \commentrow{// in this scope, we have $\Var{arg}:\Tm\prn{\Var{dom}}$; we apply the function and return the result with its type.}
      \mrow{
        \Return~\prn{
          \Var{fam}\prn{\Var{arg}},
          \TmApp\prn{\Var{dom},\Var{fam},\Var{fun}, \Var{arg}}
        }
      }
    }
  }
\end{example}

\subsubsection{Category-theoretic interpretation of scope}

Hypotheses are classified by an auxiliary sort $\HypScript\coloneq \prn{\alpha:\Tp}\times\Tm\prn{\alpha}$. In other words, a hypothesis denotes a typed core term and not a computation thereof, in keeping with the call-by-value interpretation of elaboration scripts. Binders in the surface language are represented using the metalanguage's function space; for example, the elaboration combinator for the surface language dependent function type is specified as follows:

\iblock{
  \mrow{\ElabPi : \TypScript\to\prn{\HypScript\to\TypScript}\to\TypScript}
}

To implement $\ElabPi~A~B$ we would ideally bind $A$ and then bind $B$ to apply $\TpPi$ to the results. But that doesn't quite make sense: we have $B\colon\prn{\alpha:\Tp}\times\Tm\prn{\alpha}\to\PMC\prn{\Tp}$, which is not the kind of thing that we can bind. Here we come to one of our main contributions: an interpretation of scope as \textbf{\emph{higher-arity lax monoidal structure}}. In particular, for any $K:\SET$ and family of types $X\colon K\to \SET$ we can define a ``scoping'' combinator
that takes $f\colon \prn{k:K}\to\PMC\prn{X\prn{k}}$ to the evident partial element that is defined when \emph{all} the $f\prn{k}$ are defined:
\[
  \boldsymbol{\vartheta}_{K,X}\colon
  \prn[\big]{\prn{k:K}\to \PMC\prn{X\prn{k}}}
  \to
  \PMC\prn[\big]{
    \prn{k:K}\to X\prn{k}
  }
\]

The above turns out to generalise the lax cartesian monoidal structure~\citep{kock:1972} of the monad $\PMC$, as when we specialise $K\coloneq\mathbf{2}$ we obtain (up to isomorphism) precisely the double strength
$
  \PMC\prn{X\prn{0}}\times\PMC\prn{X\prn{1}}\to
  \PMC\prn{X\prn{0}\times X\prn{1}}
$.
Writing ``$\Scope~k:K~\In~f\prn{k}$'' for $\boldsymbol{\vartheta}_{K,X}\prn{f}$, we can now implement the elaboration combinator for the surface language dependent function type constructor:

\iblock{\small
  \mhang{
    \ElabPi~A~B\coloneq\Do
  }{
    \commentrow{// first we bind the domain}
    \mrow{\Var{dom}\gets A}
    \commentrow{// now we have $\Var{dom}:\Tp$; we then use the scoping combinator to bind the entire family.}
    \mrow{\Var{fam}\gets \Scope~x:\Tm\prn{\Var{dom}}~\In~B\prn{\Var{dom},x}}
    \commentrow{// now we have $\Var{fam}\colon \Tm\prn{\Var{dom}}\to\Tp$.}
    \mrow{\Return~\TpPi\prn{\Var{dom},\Var{fam}}}
  }
}

\subsubsection{Equational reasoning about elaboration}

Instead of treating the surface language as a mere notation for the core language, we are giving a direct and compositional denotational semantics into a monad that governs the (fallible) computation of core language terms. The use of denotational semantics means that surface language terms themselves have a useful \emph{equational theory}.
The simplest use of equational reasoning is to unfold surface language terms (construed as elaboration \emph{scripts}, \emph{i.e.} instances of elaboration combinators) to the core language terms they compute by expanding definitions and using the monad laws. These calculations, as we show in our case study (\S~\ref{sec:case-study}) are elementary and easy to carry out.

\paragraph{Multilinearity of elaboration}

A more sophisticated use of equational reasoning is to prove theorems about individual elaboration combinators. For example, each elaboration combinator is built from $\TypScript$, $\SynScript$, and $\ChkScript$; as these are all \emph{algebras} for the partiality monad, we can ask whether the elaboration combinators are \emph{multilinear} in these arguments; the answer, we have shown in \S~\ref{sec:multilinearity-of-elaboration}, is \emph{yes} (for the binder-forming combinators, so long as we grant a mild \emph{strengthening} property that our intended semantics satisfies).
Multilinearity in the context of elaboration combinators intuitively means that the well-formedness conditions of a complete surface language term come from those of its subterms; it implies multistrictness, \emph{i.e.}\ the preservation of undefined elements in all arguments. Multilinearity is one possible sanity condition that might be used to distinguish reasonable elaboration combinators from unreasonable ones (like $\mathtt{first} : \ChkScript\to\ChkScript\to\ChkScript$, which returns
its first argument and discards its second; this is not multilinear and would not be likely to appear in a competently designed surface language).

\subsubsection{Inequational reasoning and subject reduction}

Each monadic type $\PMC\prn{X}$ carries a standard \emph{partial ordering}: we have $\prn{\varphi,u}\preccurlyeq_{\PMC} \prn{\psi,v}$ when $\varphi\to \psi$ holds and, assuming $\varphi=\psi=\top$ we have $u=v$. This can be thought of as an ``information ordering'' on partial elements, inspired by domain theory. %
Taking the partial ordering seriously, we get new ways to reason about surface code by means of its denotation in the partiality monad. For example, \citet{mcbride:ni} has suggested a rewriting system for surface terms including ``subject reductions'' for surface $\beta$-redexes like the following:
\[
  \TmApp\prn[\big]{
    \TpPi\prn{\alpha,x\mathpunct{.}\beta\prn{x}}
    \ni
    \TmLam\prn{x\mathpunct{.}\Var{body}\prn{x}},
    \Var{arg}
  }
  \longmapsto
  \Var{body}\prn{
    \alpha\ni\Var{arg}
  }
  \tag{\citet{mcbride:ni}}
\]

Mc~Bride's reduction is reflected in our denotational model of bidirectional surface syntax as a (strict) \emph{inequation} in the poset $\SynScript \equiv \PMC\prn{\prn{\alpha:\Tp}\times\Tm\prn{\alpha}}$, as we discuss in \S~\ref{sec:subject-reduction}:
\[
  \TmApp\prn[\big]{
    \TpPi\prn{\alpha,x\mathpunct{.}\beta\prn{x}}
    \ni
    \TmLam\prn{x\mathpunct{.}\Var{body}\prn{x}},
    \Var{arg}
  }
  \preccurlyeq_{
    \SynScript
  }
  \prn[\big]{
    \beta\prn{\alpha\ni\Var{arg}}
    \ni
    \Var{body}\prn{
      \alpha\ni\Var{arg}
    }
  }
  \tag{Theorem~\ref{thm:subject-reduction}}
\]

The reason for the directionality of the inequation is that the left-hand side places constraints on $\beta$ and $\Var{body}$ that are not placed on the right: on the right, these only need be well-typed when instantiated with $\prn{\alpha\ni\Var{arg}}$. We suggest that the subject reduction laws for surface calculi should be viewed \emph{not} as reflecting a rewriting semantics for the core theory but instead as a guide for the safe optimisation and refactoring of surface expressions: an inequality $u\preccurlyeq v$ means that the proof becomes more complete when $u$ is replaced by $v$, without changing the core term it produces.

\subsubsection{Presheaf semantics for stable elaboration with suspension}

One of the major desiderata for \emph{stable elaboration} elucidated by \citet{kovacs2ltt-principles} is that ``if we reorder elaboration actions and metavariable solutions, the \emph{eventual} output must not change, up to conversion''. From a formal point of view, our partiality monad satisfies this criterion because it is a \emph{commutative} monad: elaboration actions that place constraints on metavariables can be reordered at will without changing the results.

This is at first slightly confusing. For example, if an elaboration action requires that $?\textsc{meta}=_{\Tp}\mathbb{N}$ in order to succeed, wouldn't this mean that it would fail if it was executed before solving $?\textsc{meta} \coloneq \mathbb{N}$, but succeed if executed afterwards? The key idea of our approach is, to the contrary, that the denotation of an elaboration action contains the \emph{condition} under which it may go ahead, but this condition does not need to be boolean. Therefore, under either ordering, the result is a partial element with support $?\textsc{meta}=_{\Tp}\mathbb{N}$, which may become true in various scopes. So rather than asserting a condition (and failing otherwise), we can think of an action like `$\Await~?\textsc{meta}=_{\Tp}\mathbb{N}$' as simply \textbf{suspending} the elaboration problem until such time as the metavariable is solved.

The intuitions above are ultimately vindicated by the interpretation of our metalanguage into \emph{presheaves} on the category of contexts of the target language. In the presheaf model, a proposition denotes a \emph{sieve}~\citep{maclane-moerdijk:1992}: rather than a boolean truth value, it is a \emph{predicate} that isolates the substitutions under which it becomes true. Then the force of `$\Await~\varphi$' is to further narrow this sieve to the contexts where $\varphi$ holds; a partial element therefore \emph{awaits} a substitution that will bring it into its support.

Although there is much more to investigate and we admittedly have little to say in this paper about metavariables, we believe that our presheaf semantics provides a workable starting point for understanding the suspension and resumption of elaboration.

\subsection{Discussion of related work}\label{sec:related-work}

We take inspiration liberally from prior works, especially those of \citet{atkey:2015} and \citet{uemura:2021:thesis}, as well as many conversations with Conor Mc~Bride:

\begin{enumerate}
  \item From Atkey's \emph{Algebraic Approach to Type Checking and Elaboration}~\citep{atkey:2015} we take the idea that surface syntax (for simple type systems) might denote monadic combinators in a functional programming language. At the end of his talk, Atkey asked ``Does this work for more complex type systems?'' Our results suggest that the answer is `yes', assuming one upgrades the setup appropriately. In addition to our successful treatment of full dependent types, our work differs from Atkey in a few ways: (1) context is implicitly governed by our novel $\Scope$ combinator, (2) our account of elaboration is correct by construction, (3) we account for the suspension of elaboration problems by using a partiality monad on presheaves instead of the maybe monad.
  \item From Uemura's general initiality proof for SOGATs~\citep{uemura:2021:thesis}, we take the idea of interpreting surface syntax using a partiality monad on presheaves. Whereas Uemura treats the interpretation of fully explicit raw syntax, we have shown how to interpret \emph{bidirectional} raw syntax by leveraging the metatheory of the core language; we further carefully arrange for our elaboration function to be effectively computable, employing ideas from synthetic topology.
  \item The asynchronicity suggested by our $\Await$-notation is inspired by Conor Mc~Bride's (paraphrased) assertion that a reliable elaborator has more in common with an \emph{operating system} than a straightline type checker. Our use of presheaves is also consonant with Mc~Bride's suggestion that a correct solution to the ``Good Bye, Lenin!'' problem (by which they refer to resuming suspended elaboration problems under an updated information environment) must involve a \emph{Kripke-style} interpretation of the function space.

\end{enumerate}

We are not the only ones to consider elaborating ``directly'' to semantics; \emph{SynthLean} (\citet{synthlean:cpp}) is a certifying elaborator in Lean that targets an arbitrary natural model of Martin-L\"of type theory. The SynthLean elaborator is, as the authors say, ``certifying rather than certified''---which means that, unlike ours, it is not guaranteed to terminate or to be defined on all well-typed expressions. We believe that our methods would be compatible with the goals of SynthLean/HoTTLean.

\paragraph{Semantic \emph{vs.}\ syntactic frameworks}
To discuss the remaining related work, we find it helpful to think of elaboration as a more sophisticated form of \emph{parsing}. Proposals for general accounts of parsing tend to fall into two categories:

\begin{enumerate}
  \item \emph{Semantic frameworks.} Parser combinator libraries like Parsec~\citep{leijen-meijer:2001} provide an API for compositionally producing and reasoning about parsers, using ideas from functional programming.
  \item \emph{Syntactic frameworks.} Parser generators like Yacc~\citep{johnson:1975:yacc}, Menhir~\citep{pottier:2006}, \emph{etc.} transform suitably constrained grammars into working parsers, using ideas from formal language theory.
\end{enumerate}

Similarly, we have proposed an API for compositionally producing and reasoning about elaborators without introducing any kind of formal grammar (\emph{i.e.}\ presentation by inference rules).  On the syntactic side, both \citet{mcbride:ni} and \citet{felicissimo:2025} have proposed syntactic frameworks that might be characterised as ``typechecker generators'': you give the framework a set of rules written in a suitably constrained format, and it gives you a typechecker.

We are very interested in \emph{both} sides of this space, and we think that there is potential for combining them. For example, one weakness of the cited syntactic approaches has been the failure to account for $\eta$-rules, which are table stakes for dependently typed systems in the present day. Our semantic approach effortlessly handles $\eta$-rules, and we wonder if it might one day serve as an alternative ``backend'' to the cited syntactic frameworks.

\section{Elementary structure of Martin-L\"of type theory, internally}\label{sec:elementary-model}

The goal of this paper is to describe a semantic version of the elaboration function which takes values in \emph{actual constituents} of the initial model of Martin-L\"of type theory---which are, by definition, intrinsically typed and identified up to judgemental equality. Such a treatment would be sound by construction without requiring any further proof. In this section, we describe the axiomatic form in which we shall treat the target model of the type theory.

Models of Martin-L\"of type theory are typically described in terms of Dybjer's \emph{categories with families}~\citep{dybjer:1996}, or equivalent structures like Awodey's \emph{natural models}~\citep{awodey:2018:natural-models}. In either case, one starts with a category of contexts and then the elementary structure (judgements, type connectives, \emph{etc.}) of the type theory is expressed in terms of presheaves and natural transformations over the category of contexts. In recent years, several authors~\cite{gratzer-sterling:2020,awodey:2025,bocquet:2026} have noted that the elementary structure can be studied independently of its instantiation over a category of contexts as, indeed, it can be expressed in any locally cartesian closed category. This point of view, in fact, can be traced back to Martin-L\"of's \emph{logical framework} presentation~\citep{nordstrom-petersson-smith:2001,martin-lof:1986} from the late 1980s.

At this stage, we do not yet have need of contexts and will therefore stick to the elementary axiomatisation of type theory's judgements and type connectives in the logical framework style. What follows may be read in the internal language of \emph{any} locally cartesian closed category (\emph{e.g.}\ presheaves on the category of contexts of the initial model of Martin-L\"of  type theory). %
We first describe the bare judgemental structure of Martin-L\"of type theory: types and terms.

\begin{definition}
  A \emph{judgemental structure} is defined to be a type $\Tp:\SET$ together with a family of types $\Tm\colon \Tp\to\SET$. We will at times write $\widetilde{\Tm}$ for the dependent pair type $\prn{\alpha:\Tp}\times\Tm\prn{\alpha}$.
\end{definition}

Of course, a judgemental structure is nothing more than a \emph{container} in the sense of \citet{abbott-altenkirch-ghani:2005}, which determines a polynomial \emph{extension} functor $\Fam\colon\SET\to\SET$:
\[
  \Fam\prn{X} \coloneq
  \prn{\alpha:\Tp}\times
  \prn{
    \Tm\prn{\alpha}\to X
  }
\]

Next, we describe individual connectives over a judgemental structure.

\begin{definition}[Dependent function types]
  A \emph{dependent function type structure} is given by constants

  \iblock{
    \mrow{
      \TpPi \colon
      \Fam\prn{\Tp}
      \to \Tp
    }
    \mrow{
      \TmLam \colon
        \prn[\big]{
          \alpha:\Tp,
          \beta\colon\Tm\prn{\alpha}\to\Tp,
          b\colon \prn{x:\Tm\prn{\alpha}} \to \Tm\prn{\beta\prn{x}}
        }\to
        \Tm\prn{
          \TpPi\prn{\alpha,\beta}
        }
    }
    \mrow{
      \TmApp \colon
      \prn[\big]{
        \alpha:\Tp,
        \beta\colon \Tm\prn{\alpha}\to\Tp,
        f :\Tm\prn{\TpPi\prn{\alpha,\beta}},
        x : \Tm\prn{\alpha}
      }
      \to
      \Tm\prn{\beta\prn{x}}
    }
  }

  \noindent
  satisfying the usual $\beta$- and $\eta$-laws:
  \begin{align*}
    \prn[\big]{
      \ldots,
      b\colon \prn{x:\Tm\prn{\alpha}} \to \Tm\prn{\beta\prn{x}},
      x:\Tm\prn{\alpha}
    }&\to
    \TmApp\prn{\alpha,\beta,\TmLam\prn{\alpha,\beta,b},x} =
    b\prn{x}
    \\
    \prn{
      \ldots,
      f : \Tm\prn{\TpPi\prn{\alpha,\beta}}
    }&\to
    f = \TmLam\prn{\alpha,\beta, \lambda x\mathpunct{.} \TmApp\prn{\alpha,\beta,f,x}}
  \end{align*}
\end{definition}

\begin{definition}[Dependent pair types]
  A \emph{dependent pair type structure} is given by constants

  \iblock{
    \mrow{
      \TpSg \colon
      \Fam\prn{\Tp}
      \to \Tp
    }
    \mrow{
      \TmPair \colon
        \prn[\big]{
          \alpha:\Tp,
          \beta\colon\Tm\prn{\alpha}\to\Tp,
          a : \Tm\prn{\alpha},
          b:\Tm\prn{\beta\prn{a}}
        }\to
        \Tm\prn{
          \TpSg\prn{\alpha,\beta}
        }
    }
    \mrow{
      \TmFst \colon
      \prn[\big]{
        \alpha:\Tp,
        \beta\colon \Tm\prn{\alpha}\to\Tp,
        u : \Tm\prn{\TpSg\prn{\alpha,\beta}}
      }
      \to
      \Tm\prn{\alpha}
    }
    \mrow{
      \TmSnd \colon
      \prn[\big]{
        \alpha:\Tp,
        \beta\colon \Tm\prn{\alpha}\to\Tp,
        u : \Tm\prn{\TpSg\prn{\alpha,\beta}}
      }
      \to
      \Tm\prn{\beta\prn{\TmFst\prn{\alpha,\beta,u}}}
    }
  }

  \noindent
  satisfying $\beta$- and $\eta$-laws:
  \begin{align*}
    \prn{\ldots,a : \Tm\prn{\alpha},b:\Tm\prn{\beta\prn{a}}}
    &\to
    \TmFst\prn{\alpha,\beta,\TmPair\prn{\alpha,\beta,a,b}} = a
    \\
    \prn{\ldots,a : \Tm\prn{\alpha},b:\Tm\prn{\beta\prn{a}}}
    &\to
    \TmSnd\prn{\alpha,\beta,\TmPair\prn{\alpha,\beta,a,b}} = b
    \\
    \prn{\ldots,u:\Tm\prn{\TpSg\prn{\alpha,\beta}}}
    &\to
    u = \TmPair\prn{
      \alpha,\beta,
      \TmFst\prn{\alpha,\beta,u},
      \TmSnd\prn{\alpha,\beta,u}
    }
  \end{align*}
\end{definition}

\begin{definition}[Intensional identity types]
  An \emph{intensional identity type structure} is given by constants

  \iblock{
    \mrow{
      \TpId \colon \prn{\alpha:\Tp,u:\Tm\prn{\alpha},v:\Tm\prn{\alpha}}
      \to \Tp
    }
    \mrow{
      \TmRefl \colon \prn{\alpha:\Tp,u:\Tm\prn{\alpha}}\to
      \Tm\prn{\TpId\prn{\alpha,u,u}}
    }
    \mrow{
      \TmJ\colon
        \begin{pmatrix*}[l]
          \alpha:\Tp,
          u:\Tm\prn{\alpha},v:\Tm\prn{\alpha},
          p:\Tm\prn{\TpId\prn{\alpha,u,v}},\\
          \beta:\prn{
            x:\Tm\prn{\alpha},
            y:\Tm\prn{\alpha},
            z:\Tm\prn{\TpId\prn{\alpha,x,y}}
          }\to\Tp,\\
          b : \prn{
            x:\Tm\prn{\alpha}
          }\to
          \Tm\prn{\beta\prn{x,x,\TmRefl\prn{\alpha,x}}}
        \end{pmatrix*}
        \to \Tm\prn{\beta\prn{u,v,p}}
    }
  }

  \noindent
  satisfying the usual computation rule
  \[
    \prn{
      \ldots,
      b : \prn{x:\Tm\prn{\alpha}}\to \Tm\prn{\beta\prn{x,x,\TmRefl\prn{\alpha,x}}}
    }\to
    \TmJ\prn{\alpha,u,u,\TmRefl\prn{\alpha,u},\beta,b}
    = b\prn{u}\text{.}
  \]
\end{definition}

We also include an ``answer type'' in order to have a base case.\footnote{When formulating a type theory about which one will prove metatheoretic results, it is important to have a base type: otherwise, the initial model of the type theory will actually contain no types! The \emph{unit type} could be used, but if the unit type is the only base type, it trivialises the normalisation theorem. For this reason, we include a base type with two constructors; we could include an eliminator, but this is not needed to illustrate our methods.}

\begin{definition}[Answer type structure]
  An \emph{answer type structure} is given by constants $\TpAns:\Tp$ and $\TmYes,\TmNo:\Tm\prn{\TpAns}$.
\end{definition}

\section{A partiality monad for synthetic elaboration}\label{sec:partiality-monad}

We intend to interpret correct-by-construction elaboration algorithms in the partiality monad of a topos equipped with a dominance~\citep{rosolini:1986}. We recall some basic definitions.

\subsection{Dominance and partiality (relative) monad; the \texorpdfstring{$\Do$}{do}-notation}

A partial element of a type $X$ consists of a \emph{support} $\varphi:\PROP$ under which it is defined, followed by a function $\varphi\to X$. Because we ultimately intend our elaborators to be executable, we don't want to allow all propositions to be used as supports; on the other hand, it would be too strong to require the support to be decidable (a Boolean), as we would not be able to interpret something like the conversion rule that is defined just when two given types are equal.\footnote{Indeed, although it is externally the case that type equality is decidable, this will not be internally decidable. Later on, we shall choose a notion of proposition that refers to external decidability.} At this point, we will just axiomatise whatever closure conditions we need relative to a \emph{class} of propositions, which we later on will instantiate in an appropriate way. For such a class to give rise to a partiality monad, we need it to satisfy Rosolini's \emph{dominance} property~\cite{rosolini:1986}.

\begin{definition}[{\citet{rosolini:1986}}]\label{def:dominance}
	A \emph{dominance} is a class $\OProp\subseteq\PROP$ of propositions containing $\top$, such that for $\varphi\in \OProp$ and $\psi\in \OProp^{\varphi}$, the dependent pair type $\prn{p:\varphi}\times \psi\prn{p}$ lies in $\OProp$.
	We shall call the dominance \emph{strict} if we additionally have $\bot\in\OProp$.
\end{definition}

Now let $\OProp$ be an arbitrary fixed dominance.

\begin{definition}[Open and closed propositions and subsets]
	Elements of $\OProp$ are called \emph{open} propositions. A subset $U\subseteq X$ is called open when its characteristic map $\chi\colon X\to\PROP$ factors (necessarily uniquely) through $\OProp\subseteq \PROP$. %
	A proposition $\varphi$ is called \emph{closed} when its negation is open; similarly, a subset is closed when its complement is open.
\end{definition}

\begin{definition}[Discrete and Hausdorff types]
	A type $X$ is called \emph{discrete} when its equality predicate $X\subseteq X\times X$ is open; dually, $X$ is called \emph{Hausdorff} when its equality predicate is closed.
\end{definition}

\begin{definition}
	An \emph{open partial element} of a type $X$ is an open proposition $\varphi\in\OProp$ together with a family $x\colon \varphi\to X$ of elements of $X$. We shall write $\PMC\prn{X}$ for the \emph{open partial element classifier} of $X$:
	\[
	\PMC\prn{X} \coloneq \prn{\varphi:\OProp}\times X^{\varphi}
	\]
\end{definition}

When it is clear from context, we will simply say ``partial'' instead of ``open partial''.

\begin{notation}
	Given $u:\PMC\prn{X}$ we shall write $u\IsDef \coloneq \pi_1\prn{u}$ for the proposition that $u$ is defined; when we have $p:u\IsDef$, we shall write $u_p$ for the value $\pi_2\prn{u}\prn{p}:X$ of $u$.
\end{notation}

\citet{rosolini:1986} observes that the partial element classifier associated to a dominance forms a monad, but we will be interested in a more refined \emph{relative 2-monad} structure~\cite{altenkirch-chapman-uustalu:2015,fiore-et-al:2016} on $\PMC$ that incorporates the standard partial ordering of partial elements. Working two-dimensionally in this way will let us treat the extension operator as itself monotone, and it is the structural reason (\S~\ref{sec:algebras}) that every map between algebras of $\PMC$ is automatically a \emph{lax} multimorphism.

\begin{definition}[Standard partial ordering of partial elements]\label{def:pmc-poset}
	Given a type $X$,  we equip the partial element classifier $\PMC\prn{X}$ with its standard partial ordering
	$
	u \preccurlyeq_{\PMC\prn{X}} v\Longleftrightarrow
	u\IsDef\to \prn{v\IsDef \land u=v}
	$.
\end{definition}

Consequently, we may view the partial element classifiers as giving a functor $\PMC\colon \SET\to\POSET$.

\begin{definition}[Discretely-ordered inclusion]
	Every type $X$ may be regarded as a poset $\DiscIncl\prn{X}\coloneq\prn{X,{=_X}}$ ordered by equality. Any map $X\to A$ into a poset is automatically monotone as a function of $\DiscIncl\prn{X}$, so $\mathbf{Pos}\prn{\DiscIncl\prn{X},A}\cong A^X$. Moreover $\DiscIncl$ preserves products: $\DiscIncl\prn[\big]{\prod_{i:I}X_i}\cong\prod_{i:I}\DiscIncl\prn{X_i}$.
\end{definition}

\begin{construction}[Relative 2-monad structure on partial element classifiers]\label{con:L-relative-monad}
	The partial element classifier functor $\PMC\colon\SET\to\POSET$ is in fact a is in fact a \emph{relative 2-monad}~\cite{altenkirch-chapman-uustalu:2015,fiore-et-al:2016} on the discretely-ordered inclusion functor $\DiscIncl\colon\SET\to\POSET$. The unit is defined as follows:

	\iblock{
		\mrow{
			\eta_X\colon \DiscIncl\prn{X}\to \PMC\prn{X}
		}
		\mrow{
			\eta_X\prn{x}\IsDef \coloneq \top
		}
		\mrow{
			\eta_X\prn{x}_{p} \coloneq x
		}
	}

	Given $f\colon \DiscIncl\prn{X}\to \PMC\prn{Y}$, the Kleisli extension $f^\dagger \colon \PMC\prn{X}\to\PMC\prn{Y}$ is defined as follows:

	\iblock{
		\mrow{
			f^\dagger\prn{u}\IsDef \coloneq \prn{p:u\IsDef}\times f\prn{u_p}\IsDef
		}
		\mrow{
			f^\dagger\prn{u}_{\prn{p,q}} \coloneq f\prn{u_p}_q
		}
	}
\end{construction}

Beyond the usual relative monad laws, the two-dimensional structure asks that the extension operator be \emph{locally monotone}: if $f\preccurlyeq g$ pointwise then $f^\dagger\preccurlyeq g^\dagger$, which is routine.

We will at times follow the notation of functional programming languages like Haskell.

\begin{notation}[The $\Do$-notation]\label{notation:do-notation}
	We will sometimes write $\Return~x$ for $\eta_X\prn{x} : \PMC\prn{X}$.
	When it is convenient, we shall write ``$x\gets u; f\prn{x}$'' for an applied Kleisli extension $f^\dagger\prn{u}$. A block of Kleisli bindings is prefixed by the keyword ``$\Do$''.
\end{notation}

\subsection{Open-subsingleton subsets and partial elements; the \texorpdfstring{$\Await$}{await}-notation}

\begin{definition}
	A subset $U\subseteq X$ is said to be \emph{open subsingleton} when the existential quantification $\exists x:X\mathpunct{.}x\in U$ is open and for any $x,y\in U$ we have $x=y$.
\end{definition}

\begin{construction}[The generic partial element]
	Every open-subsingleton subset $U\subseteq X$ gives rise to a \emph{generic} partial element $\boldsymbol{\gamma}_U:\PMC\prn{U}$ defined like so:
	\begin{align*}
		\boldsymbol{\gamma}_U\IsDef &\coloneq \exists x:X\mathpunct{.} x\in U
		\\
		\prn{\boldsymbol\gamma_U}_p &\coloneq \textit{(the unique element of $U$ under these assumptions)}
	\end{align*}
\end{construction}

\begin{notation}[The $\Await$-notation]
	For any open-subsingleton subset $U\subseteq X$, we shall write ``$\Await~U$'' in the monadic metalanguage for the generic partial element $\boldsymbol{\gamma}_U:\PMC\prn{U}$. For a open proposition $\varphi\in \OProp$ we shall write ``$\Await~\varphi$'' instead of the more verbose ``$\Await~\brc{x:\mathbf{1}\mid \varphi}$''.
\end{notation}

\begin{example}[Equational assertions in discrete types]\label{ex:equational-assertion}
	When $X$ is a discrete type, then for any $u,v:X$ the equality proposition $\prn{u=_Xv}$ is open. Therefore, we may write
	``$
	\Await~{u=_Xv}
	$''
	for the trivial computation that is defined if and only if $u$ and $v$ are equal.
\end{example}

\begin{example}[Non-equality assertions in Hausdorff types]
	Dually to Example~\ref{ex:equational-assertion}, when $X$ is Hausdorff we may write ``$\Await~\prn{u\not=_X v}$'' for the trivial computation that is defined if and only if $u$ and $v$ are \emph{unequal}.
\end{example}

\subsection{Compact types and universal quantification; the $\Scope$-notation}

Any finite number of partial elements can be combined to form a single partial element; this is related to the \emph{commutative strength} of the partiality monad, which corresponds to the statement that $\PMC$ is a \emph{lax monoidal (relative pseudo-)monad}. Either way, we have a family of maps
\[
\boldsymbol{\vartheta}_{\Card{n},X}\colon \prn[\big]{\textstyle\prod_{k<\Card{n}} \PMC\prn{X_k}}
\to
\PMC\prn[\big]{
	\textstyle\prod_{k<\Card{n}} X_i
}
\]
for every finite cardinal $\Card{n}$ and family of types $\prn{X_k}_{\prn{k<\Card{n}}}$ subject to appropriate compatibility conditions. A cornerstone of our approach is to notice that the restriction to finite cardinals $\Card{n}$ is in fact \emph{not} essential: any arity $K$ can be used so long as $\OProp$ is closed under universal quantifications over $K$; in other words, $K$ must be \emph{compact} in the sense of synthetic topology~\citep[\S~7]{taylor:2000}:

\begin{definition}[Compact types]
	A type $X$ is called \emph{compact} when for every open predicate $\varphi\colon X\to \OProp$, the universal quantification $\forall x:X\mathpunct{.}\varphi\prn{x}$ is open.
\end{definition}

\begin{lemma}[Compactness is closed under compact sums]\label{lem:compact-sum}
	If $K$ is compact and $J_k$ is compact for each $k:K$, then $\widetilde{J}\coloneq\prn{k:K}\times J_k$ is compact.
\end{lemma}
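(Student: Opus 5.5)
To show that $\widetilde{J}$ is compact, we fix an open predicate $\varphi\colon \widetilde{J}\to\OProp$ and show that $\forall u:\widetilde{J}\mathpunct{.}\varphi\prn{u}$ is open. We do this by currying the quantifier over the dependent pair type into two nested quantifiers and applying the two compactness hypotheses one after the other.

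The first step is the logical equivalence, valid in any extensional type theory,
\[
  \prn[\big]{\forall u:\widetilde{J}\mathpunct{.}\varphi\prn{u}}
  \Longleftrightarrow
  \prn[\big]{\forall k:K\mathpunct{.}\forall j:J_k\mathpunct{.}\varphi\prn{k,j}}.
\]
From left to right we instantiate at $u\coloneq\prn{k,j}$. From right to left we decompose an arbitrary $u$ as $\prn{\pi_1 u,\pi_2 u}$ using the $\eta$-law for dependent pairs. Both sides are propositions, so by propositional extensionality (univalence for $\PROP$ in the ambient topos) they are equal as elements of $\PROP$. It therefore suffices to show that the right-hand side lies in $\OProp$.

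Second, for each fixed $k:K$ the map $j\mapsto\varphi\prn{k,j}$ is an open predicate on $J_k$, since it is the composite of $\varphi$ with the section $j\mapsto\prn{k,j}$ and openness is a pointwise factorisation through $\OProp$. Compactness of $J_k$ then gives $\psi\prn{k}\coloneq\forall j:J_k\mathpunct{.}\varphi\prn{k,j}\in\OProp$. This yields a predicate $\psi\colon K\to\OProp$. The one point to be careful about is that this must be an honest function into $\OProp$ and not merely a pointwise membership claim. This is harmless: membership in $\OProp$ is a proposition, so the pointwise witnesses assemble into a factorisation of $\psi$ through $\OProp\subseteq\PROP$, which is unique.

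Third, compactness of $K$ applied to $\psi$ gives $\forall k:K\mathpunct{.}\psi\prn{k}\in\OProp$, which is exactly the right-hand side of the displayed equivalence. No real obstacle is expected, since the argument is a direct currying of quantifiers. The only step requiring attention is the internal, uniform-in-$k$ reading of the hypothesis ``$J_k$ compact for each $k$'' used in the second step; the dominance axioms themselves are not needed.
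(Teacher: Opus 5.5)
Your argument is correct. You rewrite $\forall u:\widetilde{J}\mathpunct{.}\varphi(u)$ as $\forall k:K\mathpunct{.}\forall j:J_k\mathpunct{.}\varphi(k,j)$ and apply compactness of each $J_k$ and then of $K$; this is the routine currying argument the paper leaves implicit (it states the lemma without proof), and you are right that only the internal reading of ``$J_k$ compact for each $k$'' is needed, not the dominance axioms.
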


\subsection{Lax monoidal structure at compact arity; the \texorpdfstring{$\Scope$}{scope}-notation}

The partial element classifier monad is commutative and has a unique strength, correspondingly uniquely~\cite{kock:1972} to a \emph{lax (cartesian) monoidal structure} consisting of an appropriately coherent datum $\vartheta_{\mathbf{2}}\colon \PMC\prn{X}\times\PMC\prn{Y}\to \PMC\prn{X\times Y}$. One of our main insights is that this lax monoidal structure exists not only at finite arities but, in fact, at all \emph{compact} arities.

\begin{construction}[Lax monoidal structure at compact arity]\label{con:lax-monoidal-compact}
	Let $X\colon K\to \SET$ be a family of types indexed in some compact type $K$ for the dominance $\OProp$. Then we may define the \emph{lax monoidal structure of $\PMC$ at arity $K$} to be the following family of maps:

	\iblock{
		\mrow{
			\boldsymbol{\vartheta}_{K,X} \colon \prn[\big]{
				\prn{k:K}\to \PMC\prn{X\prn{k}}
			}
			\to
			\PMC\prn[\big]{
				\prn{k:K}\to X\prn{k}
			}
		}
		\mrow{
			\boldsymbol{\vartheta}_{K,X}\prn{f}\IsDef \coloneq \forall k:K\mathpunct{.} f\prn{k}\IsDef
		}
		\mrow{
			\boldsymbol{\vartheta}_{K,X}\prn{f}_p\coloneq \lambda k\mathpunct{.}f\prn{k}_{p\prn{k}}
		}
	}

	The above is natural in $X$.
\end{construction}

Clearly every finite type is compact, as the universal quantifier may be reduced to iterated conjunction, which we have from $\OProp$ being a dominance. In this case, the lax monoidal structure obtained from Construction~\ref{con:lax-monoidal-compact} coincides with the standard one.
We shall, however, eventually use Construction~\ref{con:lax-monoidal-compact} to interpret \emph{hypothetical judgements} and \emph{binders} in the denotational semantics of elaboration by assuming certain non-finite objects arising from syntax to be compact.

\begin{notation}[The $\Scope$-notation]
	Let $f\colon \prn{k:K}\to \PMC\prn{X\prn{k}}$ be a dependent function over a compact type $K$. We shall write $ \text{``}\Scope~k:K~\In~f\prn{k}\text{''}$ for $\boldsymbol{\vartheta}_{K,X}\prn{f} : \PMC\prn[\big]{\prn{k:K}\to X\prn{k}}$.
\end{notation}

In what follows, let $K$ be compact and let $J\colon K\to\SET$ be a family of compact types; we will write $\widetilde{J}\coloneq \prn{k:K}\times J_k$, which is also compact by Lemma~\ref{lem:compact-sum}. We check several important coherence laws for the lax monoidal structure below:

\begin{lemma}[Compatibility with extension]\label{lem:scope-extension}
	For a family of maps $u_k\colon X_k\to\PMC\prn{Z_k}$ with $k:K$, we have
	$
	\boldsymbol{\vartheta}_{K,Z}\circ\textstyle\prod_k u_k^{\dagger}
	=\prn[\big]{\boldsymbol{\vartheta}_{K,Z}\circ\textstyle\prod_k u_k}^{\dagger}\circ\boldsymbol{\vartheta}_{K,X}
	\colon\textstyle\prod_k\PMC\prn{X_k}\to\PMC\prn[\big]{\textstyle\prod_k Z_k}
	$
	or equivalently
		$
		\Scope~k:K~\In~\prn{
			x_k \gets w_k;
			u_k\prn{x_k}
		}
		=
		x \gets \prn{\Scope~k:K~\In~w_k};
		\Scope~k:K~\In~u_k\prn{x_k}\text{.}
	$
\end{lemma}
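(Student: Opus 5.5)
Both sides are open partial elements of $\prod_k Z_k$, so the plan is to compare them componentwise: first show that their supports are equal as propositions, then show that under any proof of the support their values agree. Since $\PROP$ is proof-irrelevant and we work in extensional type theory, equality of the supports together with pointwise agreement of the values yields equality in $\PMC\prn{\prod_k Z_k}$. Fix $w\colon\prod_k\PMC\prn{X_k}$.

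\emph{Supports.} By Construction~\ref{con:lax-monoidal-compact} and Construction~\ref{con:L-relative-monad}, the left-hand side has support
\[
  \forall k:K\mathpunct{.}\prn[\big]{\prn{p:w_k\IsDef}\times u_k\prn{(w_k)_p}\IsDef}.
\]
The right-hand side has support
\[
  \prn[\big]{P:\forall k\mathpunct{.}w_k\IsDef}\times\forall k\mathpunct{.}u_k\prn{(w_k)_{P(k)}}\IsDef.
\]
These are logically equivalent by the type-theoretic axiom of choice, i.e.\ distributing $\Pi$ over $\Sigma$. From left to right, send $h$ to $\prn{\lambda k\mathpunct{.}\pi_1(h\,k),\ \lambda k\mathpunct{.}\pi_2(h\,k)}$. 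From right to left, send $\prn{P,Q}$ to $\lambda k\mathpunct{.}\prn{P(k),Q(k)}$. Both are propositions, so by propositional extensionality they are equal. Openness plays no role here beyond ensuring both sides are well formed; for the right-hand side this uses the dominance property together with compactness of $K$.

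\emph{Values.} Given a proof $h$ of the left support, the left-hand value is $\lambda k\mathpunct{.}u_k\prn{(w_k)_{\pi_1(h k)}}_{\pi_2(h k)}$. Transporting $h$ to the corresponding $\prn{P,Q}$, the right-hand value unfolds as follows. The Kleisli extension applied at $\prn{\lambda k\mathpunct{.}(w_k)_{P(k)}}$ produces the $\boldsymbol{\vartheta}$ of $k\mapsto u_k\prn{(w_k)_{P(k)}}$, evaluated at $Q$. This gives $\lambda k\mathpunct{.}u_k\prn{(w_k)_{P(k)}}_{Q(k)}$, which is literally the same term, so the values agree by function extensionality.

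The main obstacle is only bookkeeping: each value is indexed by a proof of the support, so we must make sure the transported proofs line up. Proof irrelevance of $\PROP$ removes any dependence on the choice of proof. The equivalent $\Do$-notation form of the statement is just Notation~\ref{notation:do-notation} unfolded, so it needs no separate argument.
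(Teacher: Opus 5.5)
Your proof is correct. Both supports unfold as you state. They coincide by distributing $\forall$ over dependent conjunction of propositions. The values then agree term-for-term, and compactness of $K$ is needed only to make the two sides well formed.

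The paper does not write out a proof of this lemma. It treats it as a routine coherence check, remarking only that the lax-idempotency of the partiality monad is the structural reason for this commutativity. Your direct unfolding is exactly that verification, and it works uniformly at any compact arity, not only at the finite arities where the structural lax-idempotency results apply directly.
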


Lemma~\ref{lem:scope-extension} says that $\boldsymbol{\vartheta}$ is (at finite arity) the symmetric lax monoidal structure of the commutative monad $\PMC$. The lax-idempotency of $\PMC$---visible in the standard order on partial elements---is the structural reason for this commutativity~\citep{slattery:2023,slattery:2024:thesis}.

\begin{lemma}[Unit and naturality]\label{lem:theta-unit}
	We have $\boldsymbol{\vartheta}_{K,X}\circ\prod_{k}\eta_{X_k}=\eta_{\prod_k X_k}$. Moreover $\boldsymbol{\vartheta}_{\prn{K,-}}$ is natural: $\boldsymbol{\vartheta}_{K,Y}\circ\prod_k\PMC\prn{h_k}=\PMC\prn[\big]{\prod_k h_k}\circ\boldsymbol{\vartheta}_{K,X}$ for $h_k\colon X_k\to Y_k$.
\end{lemma}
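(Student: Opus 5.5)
Both claims are equalities of partial elements in $\PMC\prn{\prod_k Y_k}$ (resp.\ $\PMC\prn{\prod_k X_k}$). So the plan is to compare supports and values separately. Two partial elements $u,v$ are equal exactly when $u\IsDef = v\IsDef$ as propositions (by propositional extensionality in $\OProp\subseteq\PROP$) and, under any proof of this common support, $u_p = v_p$. Since supports are propositions, the choice of proof is irrelevant, and functional extensionality reduces equality of values to pointwise equality in each component $k:K$. Extensional type theory supplies both principles.

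\emph{Unit law.} Fix $x:\prod_k X_k$. The support of $\boldsymbol{\vartheta}_{K,X}\prn{\lambda k\mathpunct{.}\eta_{X_k}\prn{x_k}}$ is, by Construction~\ref{con:lax-monoidal-compact}, $\forall k:K\mathpunct{.}\top$, which is equivalent to $\top$; this is the support of $\eta_{\prod_k X_k}\prn{x}$. Under this support the value is $\lambda k\mathpunct{.}\eta_{X_k}\prn{x_k}_{p\prn{k}} = \lambda k\mathpunct{.}x_k = x$ by $\eta$ for dependent functions, which agrees with $\eta\prn{x}_p = x$.

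\emph{Naturality.} Fix $f:\prod_k\PMC\prn{X_k}$. First I unfold $\PMC\prn{h}$ as the action that keeps the support and postcomposes the value, i.e.\ $\PMC\prn{h}\prn{u} = \prn{u\IsDef, h\circ\pi_2\prn{u}}$; this agrees with $\prn{\eta\circ h}^\dagger$ up to the canonical equivalence $\prn{p:\varphi}\times\top\simeq\varphi$. On the left-hand side the support is $\forall k\mathpunct{.}\PMC\prn{h_k}\prn{f\prn{k}}\IsDef = \forall k\mathpunct{.}f\prn{k}\IsDef$, and the value is $\lambda k\mathpunct{.}h_k\prn{f\prn{k}_{p\prn{k}}}$. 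On the right-hand side the support is the support of $\boldsymbol{\vartheta}_{K,X}\prn{f}$, namely the same proposition. The value is $\prn{\prod_k h_k}\prn{\lambda k\mathpunct{.}f\prn{k}_{p\prn{k}}}$, which is again $\lambda k\mathpunct{.}h_k\prn{f\prn{k}_{p\prn{k}}}$.

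The only place needing care is the identification of supports such as $\prn{p:\varphi}\times\top$ with $\varphi$. This is where one must be explicit about whether $\PMC\prn{h}$ is defined via the Kleisli extension or directly. Because we work in extensional type theory with a universe of proof-irrelevant propositions, logically equivalent open propositions are equal, so the two presentations coincide. This is also why compactness of $K$ plays no role beyond guaranteeing that the $\forall$-supports lie in $\OProp$ at all. Everything else is definitional unfolding.
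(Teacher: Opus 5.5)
Your proposal is correct: the paper states this lemma without a written proof, and your argument is the routine unfolding of Construction~\ref{con:lax-monoidal-compact} that it leaves implicit. You compare supports using propositional extensionality ($\forall k\mathpunct{.}\top=\top$, and $\forall k\mathpunct{.}\PMC\prn{h_k}\prn{f\prn{k}}\IsDef=\forall k\mathpunct{.}f\prn{k}\IsDef$), then compare values pointwise, and you correctly handle the two equivalent presentations of $\PMC\prn{h}$.
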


\begin{lemma}\label{lem:theta-comp}
  We have $\boldsymbol{\vartheta}_{\mathbf{1},X}=1_X$.
	Given $Y\colon\widetilde{J}\to\SET$, we have
	$
	\boldsymbol{\vartheta}_{\widetilde{J},Y}
	=\boldsymbol{\vartheta}_{K,\prn{\prod_j Y}}\circ\textstyle\prod_{k:K}\boldsymbol{\vartheta}_{J_k,Y_k}
	$, or equivalently for each $u$ we have
	$\Scope~k:K~\In~\prn{\Scope~j:J_k~\In~u\prn{k,j}}=\Scope~\prn{k,j}:\widetilde{J}~\In~u\prn{k,j}$.
\end{lemma}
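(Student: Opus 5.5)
Both claims are equalities of partial elements, and I will prove each by comparing supports and then values. Equality in $\PMC\prn{Z}\coloneq\prn{\varphi:\OProp}\times Z^{\varphi}$ is an equality of dependent pairs. Since propositions are proof-irrelevant, it suffices to show two things: the supports are equal as propositions (by propositional extensionality, which holds in the extensional metalanguage), and under any proof of the common support the values agree. Function extensionality handles values that are functions.

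For the unit law $\boldsymbol{\vartheta}_{\mathbf{1},X}=1_X$, I first read the statement up to the canonical isomorphism $\prn{k:\mathbf{1}}\to X\prn{k}\cong X\prn{\star}$, on both the domain and the codomain. Take $u:\PMC\prn{X\prn{\star}}$, viewed as $\lambda\_\mathpunct{.}u$. Construction~\ref{con:lax-monoidal-compact} gives support $\forall k:\mathbf{1}\mathpunct{.}u\IsDef$, which is logically equivalent to $u\IsDef$, so the supports are equal. The value is $\lambda k\mathpunct{.}u_{p\prn{k}}$, which corresponds to $u_{p\prn{\star}}$ under the isomorphism. By proof irrelevance this is $u_{p'}$ for any proof $p'$ of $u\IsDef$.

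For the associativity law, fix $u:\prn{k:K}\to\prn{j:J_k}\to\PMC\prn{Y\prn{k,j}}$. The left side is $\boldsymbol{\vartheta}_{K}$ applied to $k\mapsto\boldsymbol{\vartheta}_{J_k}\prn{u\prn{k,-}}$. Unfolding the construction twice, its support is $\forall k:K\mathpunct{.}\forall j:J_k\mathpunct{.}u\prn{k,j}\IsDef$. The right side has support $\forall\prn{k,j}:\widetilde{J}\mathpunct{.}u\prn{k,j}\IsDef$. These are equivalent by currying of universal quantification over a $\Sigma$-type. The right-hand quantifier is open, and the scope on the right is well-formed at all, because $\widetilde{J}$ is compact by Lemma~\ref{lem:compact-sum}. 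Under a proof $p$ of the support, the left value is $\lambda k\mathpunct{.}\lambda j\mathpunct{.}u\prn{k,j}_{p\prn{k}\prn{j}}$ and the right value is $\lambda\prn{k,j}\mathpunct{.}u\prn{k,j}_{p'\prn{k,j}}$. These agree under the currying isomorphism $\prod_{\prn{k,j}:\widetilde{J}}Y\prn{k,j}\cong\prod_{k:K}\prod_{j:J_k}Y\prn{k,j}$, with proof irrelevance identifying the proof terms.

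I do not expect a real obstacle. The one point needing care is bookkeeping: the stated equations hold only modulo the implicit currying and unit isomorphisms. I will make these explicit by postcomposing with $\PMC$ of the isomorphism and using the naturality in Lemma~\ref{lem:theta-unit}.
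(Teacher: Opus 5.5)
Your proposal is correct and is essentially the intended argument: unfold Construction~\ref{con:lax-monoidal-compact} directly, identify the supports by currying the universal quantifier (openness comes from compactness of $K$, the $J_k$, and $\widetilde{J}$ via Lemma~\ref{lem:compact-sum}), and identify the values by proof irrelevance and function extensionality. One small slip is in your last sentence: the naturality in Lemma~\ref{lem:theta-unit} is naturality in the family at a \emph{fixed} arity, so it cannot transport along the currying or unit isomorphisms; you do not need it, because comparing $\PMC\prn{\mathrm{curry}}\circ\boldsymbol{\vartheta}_{\widetilde{J},Y}$ with $\boldsymbol{\vartheta}_{K,\prn{\prod_j Y}}\circ\prod_{k}\boldsymbol{\vartheta}_{J_k,Y_k}\circ\mathrm{curry}$ is the same support-and-value check you already carried out.
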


We will at times have use for the following lemma.

\begin{lemma}[Interchange over an inhabited scope]\label{lem:inhabited-scope}
	Let $K$ be a compact and \emph{inhabited} type. For any type $X$ and family of types $Y\colon K \to \SET$, given a partial element $u:\PMC\prn{X}$ and a function $c\colon X\to\prn{k:K}\to\PMC\prn{Y\prn{k}}$, we have
	$
	\Scope~k:K~\In~\prn{x\gets u;\ c\prn{x}\prn{k}}
	=
	x\gets u;\ \Scope~k:K~\In~c\prn{x}\prn{k}
	$
\end{lemma}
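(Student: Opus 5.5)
Since both sides are partial elements of $\PMC\prn[\big]{\prn{k:K}\to Y\prn{k}}$, the plan is to show separately that they have the same support and then that their values agree once that support holds. Equality of partial elements is equality of dependent pairs, and propositions are proof-irrelevant. So it is enough to show that the two supports are logically equivalent (and hence equal as elements of $\OProp\subseteq\PROP$, by propositional extensionality in the metalanguage) and that, given proofs of both, the values coincide.

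Unfolding Construction~\ref{con:lax-monoidal-compact} and the Kleisli extension of Construction~\ref{con:L-relative-monad}, the supports are as follows. The left-hand side has support $\forall k:K\mathpunct{.}\prn{p:u\IsDef}\times c\prn{u_p}\prn{k}\IsDef$. The right-hand side has support $\prn{p:u\IsDef}\times \forall k:K\mathpunct{.}c\prn{u_p}\prn{k}\IsDef$.

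The backward direction is immediate: from $p$ and $q$ on the right, we get $\lambda k\mathpunct{.}\prn{p,q\prn{k}}$ on the left. The forward direction is where \emph{inhabitedness} of $K$ is used, and this is the only non-routine point. Given $r:\forall k\mathpunct{.}\prn{p:u\IsDef}\times\cdots$, we choose an element $k_0:K$ and set $p\coloneq\pi_1\prn{r\prn{k_0}}:u\IsDef$. For each $k$, the second component of $r\prn{k}$ is a proof of $c\prn{u_{\pi_1\prn{r\prn{k}}}}\prn{k}\IsDef$. Because $u\IsDef$ is a proposition, $\pi_1\prn{r\prn{k}}=p$, so this transports to a proof of $c\prn{u_p}\prn{k}\IsDef$. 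This gives the right-hand support.

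If inhabitedness is only \emph{mere} (propositional truncation), the argument still goes through, because the goal $u\IsDef$ is itself a proposition, so we may eliminate the truncation into it. Without inhabitedness the statement fails: at $K$ empty, the left-hand side is always defined while the right-hand side requires $u\IsDef$. This is the main obstacle to keep track of, though it is a one-line step.

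For the values, assume the supports hold, with proofs $\prn{p,q}$ on the right. The right-hand value is $\lambda k\mathpunct{.}c\prn{u_p}\prn{k}_{q\prn{k}}$. The left-hand value at $k$ is $c\prn{u_{p_k}}\prn{k}_{q_k}$, where $\prn{p_k,q_k}$ is the $k$-th component of the left-hand support proof. By proof irrelevance, $p_k=p$ and $q_k=q\prn{k}$, so the values agree pointwise, and function extensionality finishes the proof.
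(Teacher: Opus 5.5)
Your proof is correct and complete. You compute both supports correctly. You use inhabitedness exactly where it is needed, namely to extract a single witness of $u\IsDef$ from $\forall k\mathpunct{.}\prn{p:u\IsDef}\times c\prn{u_p}\prn{k}\IsDef$. Proof irrelevance, together with propositional and function extensionality, then settles the comparison of values. Your observation that the statement fails at empty $K$ is also right.

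The paper states this lemma without giving a proof, so there is no argument of its own to compare against; your direct unfolding of supports and values is the routine verification it leaves implicit.

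An equivalent, slightly more structural packaging goes through Lemma~\ref{lem:scope-extension}. Instantiate it at the constant family $w_k\coloneq u$, and then observe that for inhabited $K$ one has $\Scope~k:K~\In~u=\PMC\prn{\Delta}\prn{u}$, where $\Delta\colon X\to X^K$ is the diagonal. This confines the use of inhabitedness to that single identity; the rest is naturality.
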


\subsection{Algebras; multilinear and multistrict functions}\label{sec:algebras}

Since $\PMC$ is a \emph{relative} 2-monad along $\DiscIncl\colon\SET\to\POSET$, it would not make sense for an algebra to be presented by a structure map $\PMC\prn{A}\to A$: the carrier $A$ is a \emph{poset}.\footnote{Of course, the partial element classifier also naturally gives a monad on $\POSET$, but we will not use this structure.} We use instead the extension-form presentation of \citet{altenkirch-chapman-uustalu:2015}, carried over to the two-dimensional setting as defined in \citet{arkor-saville-slattery:2025}; concretely, the extension operator of an algebra is required to be monotone both as a map out of $\PMC\prn{X}$ and in its argument, the latter being the algebraic analogue of the local monotonicity of $\prn{-}^\dagger$ recorded in Construction~\ref{con:L-relative-monad}.

\begin{definition}[Algebra]\label{def:algebra}
	An \emph{algebra} is a poset $A$ together with an operation assigning to each type $X$ and each map $f\colon X\to A$ a monotone map $f^\dagger\colon\PMC\prn{X}\to A$ such that $f\preccurlyeq g$ implies $f^\dagger\preccurlyeq g^\dagger$, subject to the following equations:
	\begin{enumerate}
		\item \textbf{(unit)} $f^\dagger\circ\eta_X=f$; and
		\item \textbf{(composition)} $\prn{f^\dagger\circ g}^\dagger=f^\dagger\circ g^{\dagger}$ for every $g\colon X'\to\PMC\prn{X}$.\footnote{Naturality of the extension operator is automatic: for $h\colon X'\to X$, taking $g\coloneq\eta_X\circ h$ in the associativity law and using the unit law yields $\prn{f\circ h}^\dagger=f^\dagger\circ\PMC\prn{h}$. We therefore need not impose it separately.}
	\end{enumerate}
\end{definition}

\begin{example}[Free algebras]
	Every type $Z$ determines a \emph{free} algebra, carried by the poset $\PMC\prn{Z}$, whose extension operator is given by Kleisli extension. The algebra laws are precisely the relative 2-monad laws of Construction~\ref{con:L-relative-monad}.
\end{example}

\begin{lemma}[Products of algebras]\label{lem:products}
	For any family of algebras $\prn{A_i}_{i:I}$, the product poset $\prod_{i:I}A_i$ carries an algebra structure computed pointwise: given $f\colon X\to \prod_{i:I}A_i$, we define $f^\dagger\prn{u}\coloneq\lambda i\mathpunct{.}\prn{\pi_i\circ f}^\dagger\prn{u}$, and is the product in $\mathcal{L}\text{-}\mathbf{Alg}$.
\end{lemma}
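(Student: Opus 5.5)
We prove Lemma~\ref{lem:products} in two stages: first that the pointwise operation is an algebra structure in the sense of Definition~\ref{def:algebra}, then that the product poset with its projections has the universal property of a product in $\mathcal{L}\text{-}\mathbf{Alg}$. We take morphisms of algebras to be monotone maps commuting with extension, i.e.\ $h\circ f^\dagger=\prn{h\circ f}^\dagger$. Throughout, the product poset $\prod_i A_i$ carries the pointwise order, and all reasoning is pointwise in $i:I$. This is legitimate in the extensional metalanguage by function extensionality.

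For the algebra structure, fix $f\colon X\to\prod_i A_i$ and set $f^\dagger\prn{u}\coloneq\lambda i\mathpunct{.}\prn{\pi_i\circ f}^\dagger\prn{u}$.
\begin{itemize}
\item \emph{Monotonicity in $u$.} Each component $\prn{\pi_i\circ f}^\dagger$ is monotone, and the order on $\prod_i A_i$ is pointwise.
\item \emph{Monotonicity in $f$.} If $f\preccurlyeq g$, then $\pi_i\circ f\preccurlyeq\pi_i\circ g$ for each $i$, so $\prn{\pi_i\circ f}^\dagger\preccurlyeq\prn{\pi_i\circ g}^\dagger$ in $A_i$. Hence $f^\dagger\preccurlyeq g^\dagger$.
\item \emph{Unit law.} We have $\pi_i\prn{f^\dagger\prn{\eta x}}=\prn{\pi_i\circ f}^\dagger\prn{\eta x}=\pi_i\prn{f\prn{x}}$ by the unit law in $A_i$.
\item \emph{Composition law.} Take $g\colon X'\to\PMC\prn{X}$. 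The key observation is $\pi_i\circ f^\dagger=\prn{\pi_i\circ f}^\dagger$, which holds by definition. Then
\[
\pi_i\circ\prn{f^\dagger\circ g}^\dagger=\prn{\pi_i\circ f^\dagger\circ g}^\dagger=\prn{\prn{\pi_i\circ f}^\dagger\circ g}^\dagger=\prn{\pi_i\circ f}^\dagger\circ g^\dagger=\pi_i\circ f^\dagger\circ g^\dagger,
\]
where the third equality is the composition law in $A_i$.
\end{itemize}
The same observation $\pi_i\circ f^\dagger=\prn{\pi_i\circ f}^\dagger$ says exactly that each projection $\pi_i$ is an algebra morphism.

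For the universal property, let $B$ be an algebra with algebra morphisms $h_i\colon B\to A_i$, and put $h\coloneq\langle h_i\rangle_i$.
\begin{itemize}
\item \emph{$h$ is monotone.} Each $h_i$ is monotone and the order is pointwise.
\item \emph{$h$ is an algebra morphism.} For $f\colon X\to B$ we have $\pi_i\circ h\circ f^\dagger=h_i\circ f^\dagger=\prn{h_i\circ f}^\dagger=\prn{\pi_i\circ h\circ f}^\dagger=\pi_i\circ\prn{h\circ f}^\dagger$. The second equality uses that $h_i$ commutes with extension, and the last is the definition of the product extension.
\item \emph{Uniqueness.} Any map commuting with the projections is $\langle h_i\rangle_i$, already by the universal property of products in $\POSET$.
\end{itemize}

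There is no real obstacle; the only point needing care is the composition law. Its proof hinges on rewriting $\pi_i\circ f^\dagger$ as $\prn{\pi_i\circ f}^\dagger$ before applying associativity in $A_i$, and this rewriting is definitional.
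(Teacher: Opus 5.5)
Your proof is correct, and it is the routine pointwise verification the paper relies on; the paper states Lemma~\ref{lem:products} without giving a proof, so your argument supplies the missing details. As you observe, the whole argument rests on the definitional identity $\pi_i\circ f^\dagger=\prn{\pi_i\circ f}^\dagger$, which gives the composition law, the linearity of the projections, and the linearity of the induced map $\langle h_i\rangle_i$.
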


\subsubsection{Multilinear and lax multilinear functions}

There is a standard notion of homomorphism between algebras in extension form: the \emph{linear} map.

\begin{definition}[Linear map]\label{def:linear-map}
	A monotone function $h\colon A\to B$ between algebras is called \emph{linear} when $h\circ f^\dagger=\prn{h\circ f}^\dagger$ for every $f\colon X\to A$. Algebras and linear maps form a category $\mathcal{L}\text{-}\mathbf{Alg}$.
\end{definition}

Because we do not wish to require that tensor products exist (as we shall discuss in Remark~\ref{rem:nonrepresentable}), we will have need of a more general notion of homomorphism in many variables, defined below.

\begin{definition}[Multilinear map]\label{def:multilinear}
	Let $K$ be compact, $\prn{A_k}_{k:K}$ a family of algebras, and $B$ an algebra. A monotone map $f\colon\prod_{k:K}A_k\to B$ is \emph{multilinear} when, for every family of types $\prn{X_k}_{k:K}$ and every family of maps $a_k\colon X_k\to A_k$, the following holds for each $u:\prod_k\PMC\prn{X_k}$:
	\[
	x \gets \prn{\Scope~k:K~\In~u_k};\ f\prn{\lambda k\mathpunct{.}a_k\prn{x_k}}
	=
	f\prn{ \lambda k\mathpunct{.}\ x\gets u_k;\ a_k\prn{x} }\text{.}
	\]
\end{definition}
\begin{lemma}[$1$-ary multilinearity]\label{lem:1ary}
	A $1$-ary multilinear map is exactly a linear map in the standard sense; in particular the identity map $1_A$ is multilinear.
\end{lemma}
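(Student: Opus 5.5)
The plan is to specialise Definition~\ref{def:multilinear} to the arity $K\coloneq\mathbf{1}$ and to collapse the scope using Lemma~\ref{lem:theta-comp}. Along the way we identify a family indexed by $\mathbf{1}$ with its single component; this is harmless because $\prod_{k:\mathbf{1}}A_k\cong A_{\star}$ and $\prod_{k:\mathbf{1}}\PMC\prn{X_k}\cong\PMC\prn{X_\star}$. We write $A\coloneq A_\star$ and $X\coloneq X_\star$.

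First, Lemma~\ref{lem:theta-comp} gives $\boldsymbol{\vartheta}_{\mathbf{1},X}=1$, so $\Scope~k:\mathbf{1}~\In~u_k$ is just $u$ (modulo the above isomorphism). The left-hand side of the multilinearity equation therefore reads $x\gets u;\ f\prn{a\prn{x}}$. In the extension-form notation this is $\prn{f\circ a}^\dagger\prn{u}$, where the extension is that of the algebra $B$ applied to $f\circ a\colon X\to B$. The $\Do$-notation inside $f\prn{\cdot}$ on the right-hand side is the extension of $A$, so the right-hand side reads $f\prn{a^\dagger\prn{u}}$. Hence multilinearity at arity $\mathbf{1}$ says precisely that, for every type $X$, every $a\colon X\to A$, and every $u:\PMC\prn{X}$,
\[
  \prn{f\circ a}^\dagger\prn{u}=f\prn{a^\dagger\prn{u}}\text{.}
\]
This is pointwise the equation $f\circ a^\dagger=\prn{f\circ a}^\dagger$ of Definition~\ref{def:linear-map}. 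Monotonicity is required by both definitions, so the two notions coincide.

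For the final claim, the identity $1_A$ is monotone, and $1_A\circ f^\dagger=f^\dagger=\prn{1_A\circ f}^\dagger$ holds trivially. So $1_A$ is linear, and hence multilinear by the equivalence just established.

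The only step needing care is the bookkeeping for the isomorphisms along $\mathbf{1}$. Specifically, one should check that the algebra structure on $\prod_{k:\mathbf{1}}A_k$ from Lemma~\ref{lem:products} transports to that of $A$. This follows because that structure is computed pointwise. One should also check that the extension on the left is really $B$'s and not a free-algebra extension, which holds because $f$ lands in the algebra $B$. Neither should pose a real obstacle.
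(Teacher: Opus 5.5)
Your proof is correct. The paper states this lemma without a separate proof, and yours is the evident verification: specialise Definition~\ref{def:multilinear} to $K\coloneq\mathbf{1}$, collapse the scope using $\boldsymbol{\vartheta}_{\mathbf{1},X}=1_X$ (Lemma~\ref{lem:theta-comp}), read off $\prn{f\circ a}^\dagger=f\circ a^\dagger$, and observe that the identity is trivially linear.

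One simplification: the check you flag about transporting the product algebra structure of Lemma~\ref{lem:products} is unnecessary. The multilinearity equation only uses the extension operators of the individual $A_k$ and of $B$, never an algebra structure on $\prod_k A_k$.
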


Multilinear maps of finite arity, then, equip the algebras with the structure of a \emph{multicategory} $\underline{\mathcal{L}}$.

\begin{remark}[Representability?]\label{rem:nonrepresentable}
  In general, we do not expect $\underline{\mathcal{L}}$ to be \emph{representable} by a monoidal category: tensor products of algebras can be constructed assuming that $\PMC$ preserve reflexive coequalisers, which amounts to a very strong condition on $\OProp$.
\end{remark}

The multilinearity equation of Definition~\ref{def:multilinear} asks for an \emph{equality}. In the two-dimensional setting we can also consider a notion of \emph{lax} multilinearity that replaces equality by inequality, but this turns out to be satisfied automatically by every monotone map, so we do not consider it further.

\subsubsection{Multistrict functions and undefinedness}

When the dominance is strict in the sense that $\OProp\ni\bot$, we can relate multilinearity to a more intuitive property: \emph{strictness}. First we note that any algebra supports an undefined element:

\begin{notation}[The $\mathbf{undefined}$ notation]
	For an algebra $A$, we shall write ``$\mathbf{undefined}_A$'' for the element $p\gets \Await~\bot; \mathbf{abort}_A\prn{p}$ of $A$ defined by extending the universal map $\mathbf{abort}_A\colon \bot\to A$.
\end{notation}

\begin{definition}[Multistrict map]
	A monotone map $f\colon \prod_{k:K}A_k\to B$ is \emph{multistrict} when for every family $a:\prod_{k:K}A_k$ such that $a_l=\mathbf{undefined}_{A_l}$ for some $l:K$, we have $f\prn{a} = \mathbf{undefined}_B$.
\end{definition}

\begin{lemma}\label{lem:multilinear-multistrict}
	Let $K$ be compact and have decidable equality (\emph{e.g.}\ a finite cardinal); then any multilinear morphism $f\colon \prod_{k:K}A_k\to B$ is multistrict.
\end{lemma}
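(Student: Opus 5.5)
We apply the multilinearity equation of Definition~\ref{def:multilinear} to a carefully chosen family of partial elements. Fix $a:\prod_{k:K}A_k$ with $a_l=\mathbf{undefined}_{A_l}$ for some $l:K$. Because $K$ has decidable equality, we can define by cases a family of types $X_k$ and maps $a_k'\colon X_k\to A_k$ together with $u:\prod_k\PMC\prn{X_k}$:
\begin{itemize}
  \item for $k=l$, set $X_l\coloneq\bot$, $u_l\coloneq\Await~\bot$ and $a'_l\coloneq\mathbf{abort}_{A_l}$;
  \item for $k\neq l$, set $X_k\coloneq\mathbf{1}$, $u_k\coloneq\eta_{\mathbf{1}}\prn{\star}$ and $a'_k\coloneq\lambda\_\mathpunct{.}a_k$.
\end{itemize}
Strictly, $X$, $u$ and $a'$ are defined by dependent elimination on the decision of $k=_K l$, and all subsequent reasoning proceeds by case analysis on that decision.

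First we compute the right-hand side of the multilinearity equation. For $k\neq l$, the unit law of the algebra $A_k$ (Definition~\ref{def:algebra}) gives $\prn{x\gets \eta\prn{\star};\ a'_k\prn{x}}=a_k$. For $k=l$, the element $x\gets\Await~\bot;\ \mathbf{abort}_{A_l}\prn{x}$ is by definition $\mathbf{undefined}_{A_l}=a_l$. Hence the argument family on the right is exactly $a$, and the right-hand side is $f\prn{a}$.

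Next we compute the left-hand side. By Construction~\ref{con:lax-monoidal-compact}, $\prn{\Scope~k:K~\In~u_k}\IsDef$ is $\forall k\mathpunct{.}u_k\IsDef$, which implies $u_l\IsDef=\bot$. So this support is equivalent to $\bot$, and the partial element equals $\Await~\bot$ composed with the unique map out of $\bot$. The function $g\coloneq\lambda x\mathpunct{.}f\prn{\lambda k\mathpunct{.}a'_k\prn{x_k}}$ is then extended along a partial element with empty support. Rewriting that element as $\PMC\prn{!}\prn{\Await~\bot}$ with $!\colon\bot\to\prod_kX_k$, we apply naturality of extension (the footnote to Definition~\ref{def:algebra}), $g^\dagger\circ\PMC\prn{!}=\prn{g\circ !}^\dagger$. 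Since $g\circ !=\mathbf{abort}_B$, the left-hand side equals $\mathbf{undefined}_B$, and multilinearity yields $f\prn{a}=\mathbf{undefined}_B$.

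The main obstacle is the bookkeeping in the last step: we must justify that the scoped partial element really is $\PMC\prn{!}\prn{\Await~\bot}$. This requires propositional extensionality, to identify the support with $\bot$, and function extensionality, both of which are available in the extensional metalanguage. Decidable equality on $K$ is used only to define the case split; without it, the family singling out $l$ could not be formed.
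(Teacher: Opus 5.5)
Your proof is correct. You instantiate the multilinearity equation at a family that is total ($u_k=\eta\prn{\star}$) away from $l$ and has empty support at $l$, defined by case split on the decision of $k=_K l$. This makes the right-hand side $f\prn{a}$ and the left-hand side $\mathbf{undefined}_B$, exactly as you compute. Two small points are harmless: $\Await~\bot$ actually inhabits $\PMC\prn{\brc{x:\mathbf{1}\mid\bot}}$ rather than $\PMC\prn{\bot}$, and strictness of $\OProp$ is implicitly assumed (it is needed anyway for $\mathbf{undefined}$ to exist).

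The paper states this lemma without a written proof. Your argument is the natural one that the compactness and decidable-equality hypotheses are set up to support.
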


We do not expect the converse to Lemma~\ref{lem:multilinear-multistrict} to hold. %

\section{Elaborators \`a la Carte, over a dominance}\label{sec:modular-elaborators}

We now describe a semantic notion of elaborator in the partiality relative 2-monad (\S\ref{sec:partiality-monad}) over an arbitrary dominance $\OProp$. Rather than fixing in advance a source language and target language, we will develop the two modularly in lockstep.

\begin{definition}[Elaborable judgemental structure]
  A judgemental structure is called \emph{elaborable} when $\Tp$ is discrete and for each $\alpha:\Tp$, the component $\Tm\prn{\alpha}$ is both discrete and compact.
\end{definition}

\begin{definition}[Strengthening]
  A judgemental structure is said to satisfy \emph{strengthening} when for each $\alpha:\Tp$ and open proposition $\varphi\in \OProp$, we have $\prn{\forall x:\Tm\prn{\alpha}\mathpunct{.}\varphi}\to\varphi$.
\end{definition}

We specify the semantic domains for elaboration over a judgemental structure:
\[
\TypScript \coloneq \PMC\prn{\Tp}
\qquad
\ChkScript \coloneq \prn{\alpha:\Tp}\to\PMC\prn{\Tm\prn{\alpha}}
\qquad
\SynScript \coloneq \PMC\prn[\big]{\widetilde{\Tm}}
\qquad
\HypScript \coloneq \widetilde{\Tm}
\]

\subsection{Structural combinators}

Everything that follows takes place over an elaborable judgemental structure.

\paragraph{Variables and substitution}
  Hypotheses are included into synthesisable expressions using the monad unit:

  \iblock{\small
    \mrow{\ElabVar : \HypScript\to\SynScript}
    \mrow{\ElabVar~x \coloneq \Return~x}
  }

  Conversely, a synthesisable expression $E:\SynScript$ can be ``substituted'' for a hypothesis in a binder $k\colon \HypScript\to A$ valued in an algebra $A$ using the algebra's extension operation:
  $
    x\gets E; k\prn{x}
  $.

\paragraph{Conversion and annotation}

  The conversion rule of Martin-L\"of type theory corresponds to the following combinator:

  \iblock{\small
    \mrow{
      \ElabConv : \SynScript\to\ChkScript
    }
    \mhang{
      \ElabConv~E~\tau \coloneq \Do
    }{
      \mrow{
        \prn{\sigma,s}\gets E
      }
      \commentrow{// The following uses the assumption that $\Tp$ is discrete.}
      \mrow{
        \_\gets\Await~\sigma =_{\Tp} \tau
      }
      \commentrow{// In this scope, we have $\sigma = \tau : \Tp$ by equality reflection in the metalanguage.}
      \mrow{\Return~s}
    }
  }

\begin{construction}[Type annotation]
  Conversely, a checkable expression may be annotated to be used in a synthesis-mode position.

  \iblock{\small
    \mrow{\ElabThe : \TypScript \to\ChkScript\to\SynScript}
    \mrow{
      \ElabThe~A~M\coloneq
      \sigma\gets A; s\gets M\prn{\sigma};
      \Return~\prn{\sigma,s}
    }
  }
\end{construction}

\subsection{Elaborating dependent function types}

\begin{definition}
  A dependent function type structure over an elaborable judgemental structure is said to be \emph{elaborable} when for each $\sigma:\Tp$ the preimage $\TpPi^{-1}\prn{\sigma}\subseteq \Fam\prn{\Tp}$ is open-subsingleton.
\end{definition}

The following definitions take place over an elaborable dependent function type structure. First the formation rule:

\iblock{
  \small
  \mrow{
    \ElabPi : \TypScript\to\prn{\HypScript\to\TypScript}\to\TypScript
  }
  \mhang{
    \ElabPi~A~B \coloneq \Do
  }{
    \mrow{\Var{dom}\gets A}
    \commentrow{// The following uses the assumption that $\Tm\prn{\Var{dom}}$ is compact.}
    \mrow{\Var{fam}\gets \Scope~x:\Tm\prn{\Var{dom}}~\In~B\prn{\Var{dom},x}}
    \mrow{
      \Return~\TpPi\prn{\Var{dom},\Var{fam}}
    }
  }
}

Next, we implement the introduction rule. In the case of dependent functions, it is possible to envision two different directionalities for $\lambda$-abstractions: checking mode without annotations, and synthesis-mode with a type annotation on the bound variable. We implement both below:

\begin{minipage}[t]{.45\textwidth}\small
\iblock{
  \mrow{
    \ElabLam :
      \prn{\HypScript\to\ChkScript}
      \to \ChkScript
  }
  \mhang{
    \ElabLam~M~\Var{typ} \coloneq \Do
  }{
    \commentrow{// The following uses the assumption that the preimage $\TpPi^{-1}\prn{\Var{typ}}$ is open-subsingleton.}
    \mrow{
      \prn{\Var{dom},\Var{fam}}\gets \Await~\TpPi^{-1}\prn{\Var{typ}}
    }
    \commentrow{// The following uses the assumption that $\Tm\prn{\Var{dom}}$ is compact.}
    \mrow{
      \Var{body}\gets \Scope~x:\Tm\prn{\Var{dom}}~\In~\allowbreak M\prn{\Var{dom},x}\prn{\Var{fam}\prn{x}}
    }
    \mrow{
      \Return~\TmLam\prn{\Var{dom},\Var{fam},\Var{body}}
    }
  }
}
\end{minipage}\quad
\begin{minipage}[t]{.45\textwidth}\small
\iblock{
  \mrow{
    \ElabLam' : \TypScript\to \prn{\HypScript\to\SynScript}\to\SynScript
  }
  \mhang{
    \ElabLam'~A~E \coloneq\Do
  }{
    \mrow{\Var{dom}\gets A}
    \commentrow{// The following uses the assumption that $\Tm\prn{\Var{dom}}$ is compact.}
    \mrow{\Var{body}\gets \Scope~x:\Tm\prn{\Var{dom}}~\In~E\prn{\Var{dom},x}}
    \mhang{
      \Return
    }{
      \mrow{
      \prn[\big]{
        \TpPi\prn{
          \Var{dom},
          \pi_1\circ\Var{body}
        },
        \allowbreak
        \TmLam\prn{
          \Var{dom},
          \pi_1\circ\Var{body},
          \pi_2\circ\Var{body}
        }
      }
      }
    }
  }
}
\end{minipage}
\smallskip

Function applications are implemented in synthesis mode as follows:

\iblock{\small
  \mrow{
    \ElabApp : \SynScript\to\ChkScript\to\SynScript
  }
  \mhang{
    \ElabApp~E~M \coloneq \Do
  }{
    \mrow{\prn{\Var{typ},\Var{fun}}\gets E}
    \commentrow{// The following uses the assumption that the preimage $\TpPi^{-1}\prn{\Var{typ}}$ is open-subsingleton.}
    \mrow{
      \prn{\Var{dom},\Var{fam}}\gets \Await~\TpPi^{-1}\prn{\Var{typ}}
    }
    \mrow{
      \Var{arg}\gets M\prn{\Var{dom}}
    }
    \mrow{
      \Return~\prn{
        \Var{fam}\prn{\Var{arg}},
        \TmApp\prn{\Var{dom},\Var{fam},\Var{fun},\Var{arg}}
      }
    }
  }
}

\subsection{Elaborating dependent pair types}
\begin{definition}
  A dependent pair type structure over an elaborable judgemental structure is said to be \emph{elaborable} when for each $\sigma:\Tp$ the preimage $\TpSg^{-1}\prn{\sigma}\subseteq\Fam\prn{\Tp}$ is open-subsingleton.
\end{definition}

We can now define appropriate elaborators over an elaborable dependent pair type structure.

\begin{minipage}[t]{.45\textwidth}\small
\iblock{
  \mrow{
    \ElabSg : \TypScript\to\prn{\HypScript\to\TypScript}\to\TypScript
  }
  \mhang{
    \ElabSg~A~B \coloneq \Do
  }{
    \mrow{\Var{base}\gets A}
    \commentrow{// The following uses the assumption that $\Tm\prn{\Var{base}}$ is compact.}
    \mrow{\Var{fam}\gets \Scope~x:\Tm\prn{\Var{base}}~\In~B\prn{\Var{base},x}}
    \mrow{
      \Return~\TpSg\prn{\Var{base},\Var{fam}}
    }
  }
  \row
  \mrow{
    \ElabPair : \ChkScript\to\ChkScript\to\ChkScript
  }
  \mhang{
    \ElabPair~M~N~\Var{typ} \coloneq \Do
  }{
    \commentrow{// The following uses the assumption that the preimage $\TpSg^{-1}\prn{\Var{typ}}$ is open-subsingleton.}
    \mrow{
      \prn{\Var{base},\Var{fam}}\gets \Await~\TpSg^{-1}\prn{\Var{typ}}
    }
    \mrow{
      \Var{fst} \gets M\prn{\Var{base}}
    }
    \mrow{
      \Var{snd} \gets N\prn{\Var{fam}\prn{\Var{fst}}}
    }
    \mrow{
      \Return~\TmPair\prn{
        \Var{base},
        \Var{fam},
        \Var{fst},
        \Var{snd}
      }
    }
  }
}
\end{minipage}\quad
\begin{minipage}[t]{.45\textwidth}\small
\iblock{
  \mrow{
    \ElabFst : \SynScript\to\SynScript
  }
  \mhang{
    \ElabFst~E \coloneq \Do
  }{
    \mrow{
      \prn{\Var{typ},\Var{pair}} \gets E
    }
    \commentrow{// The following uses the assumption that the preimage $\TpSg^{-1}\prn{\Var{typ}}$ is open-subsingleton.}
    \mrow{
      \prn{\Var{base},\Var{fam}}\gets \Await~\TpSg^{-1}\prn{\Var{typ}}
    }
    \mrow{
      \Return~\prn{
      	\Var{base},
      	\TmFst\prn{\Var{base},\Var{fam},\Var{pair}}
      }
    }
  }
  \row
  \mrow{
    \ElabSnd : \SynScript\to\SynScript
  }
  \mhang{
    \ElabSnd~E \coloneq \Do
  }{
    \mrow{
      \prn{\Var{typ},\Var{pair}} \gets E
    }
    \commentrow{// The following uses the assumption that the preimage $\TpSg^{-1}\prn{\Var{typ}}$ is open-subsingleton.}
    \mrow{
      \prn{\Var{base},\Var{fam}}\gets \Await~\TpSg^{-1}\prn{\Var{typ}}
    }
    \mhang{\Return}{
      \mrow{
        \prn{
        	\Var{fam}\prn{\TmFst\prn{\Var{base},\Var{fam},\Var{pair}}},\allowbreak
        	\TmSnd\prn{\Var{base},\Var{fam},\Var{pair}}
        }
      }
    }
  }
}
\end{minipage}

\subsection{Elaborating intensional identity types}

\begin{definition}
  An intensional identity type structure over an elaborable judgemental structure is said to be \emph{elaborable} when for each $\sigma:\Tp$ the preimage $\TpId^{-1}\prn{\sigma}\subseteq \prn{\alpha:\Tp}\times \Tm\prn{\alpha}\times \Tm\prn{\alpha}$ is open-subsingleton.
\end{definition}

The usual formation rule for identity types takes a type together with two elements of that type. Alternatively, we can leave the type out if the left-hand side is in synthesis mode. Either rule is easily implemented in our system.

\begin{minipage}[t]{.45\textwidth}\small
\iblock{
  \mrow{
    \ElabId : \TypScript\to\ChkScript\to\ChkScript\to\TypScript
  }
  \mhang{
    \ElabId~A~M~N \coloneq \Do
  }{
    \mrow{
      \Var{typ} \gets A
    }
    \mrow{
      \Var{lhs} \gets M\prn{\Var{typ}}
    }
    \mrow{
      \Var{rhs} \gets N\prn{\Var{typ}}
    }
    \mrow{
      \Return~\TpId\prn{\Var{typ},\Var{lhs},\Var{rhs}}
    }
  }
}
\end{minipage}\quad
\begin{minipage}[t]{.45\textwidth}\small
\iblock{
  \mrow{
    \ElabId_l : \SynScript\to\ChkScript\to\TypScript
  }
  \mhang{
    \ElabId_l~E~M\coloneq \Do
  }{
    \mrow{
      \prn{\Var{typ},\Var{lhs}}\gets E
    }
    \mrow{
      \Var{rhs}\gets M\prn{\Var{typ}}
    }
    \mrow{
      \Return~\TpId\prn{\Var{typ},\Var{lhs},\Var{rhs}}
    }
  }
}
\end{minipage}

We now come to a simple example of equational reasoning on the surface language:

\begin{lemma}\label{lem:id-in-terms-of-idl}
  We have $\ElabId~A~M~N = \ElabId_l~\prn{\ElabThe~A~M}~N$.
\end{lemma}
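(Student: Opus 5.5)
The plan is to unfold both sides into $\Do$-blocks and compare them using only the relative monad laws of Construction~\ref{con:L-relative-monad}: associativity of Kleisli extension (nested binds flatten) and the left unit law ($x\gets\Return~v;\,f\prn{x}=f\prn{v}$). No properties of the dominance, compactness, or the type connectives are needed.

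First, expand the right-hand side. Unfolding $\ElabId_l$ gives
\[
\prn{\Var{typ},\Var{lhs}}\gets \ElabThe~A~M;\ \Var{rhs}\gets N\prn{\Var{typ}};\ \Return~\TpId\prn{\Var{typ},\Var{lhs},\Var{rhs}}\text{,}
\]
and unfolding $\ElabThe~A~M$ gives $\sigma\gets A;\ s\gets M\prn{\sigma};\ \Return~\prn{\sigma,s}$. Next, apply associativity twice. This pulls the two binds of $\sigma$ and $s$ out to the front, so the term becomes $\sigma\gets A;\ s\gets M\prn{\sigma};\ \prn{\Var{typ},\Var{lhs}}\gets\Return~\prn{\sigma,s};\ \ldots$. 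Then use the left unit law to eliminate the bind on $\Return~\prn{\sigma,s}$, substituting $\Var{typ}\coloneq\sigma$ and $\Var{lhs}\coloneq s$. The result is
\[
\sigma\gets A;\ s\gets M\prn{\sigma};\ \Var{rhs}\gets N\prn{\sigma};\ \Return~\TpId\prn{\sigma,s,\Var{rhs}}\text{.}
\]
Up to renaming bound variables, this is the definition of $\ElabId~A~M~N$.

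The only step that needs care is the pattern-matching bind $\prn{\Var{typ},\Var{lhs}}\gets-$. I will read it as binding a variable $p:\widetilde{\Tm}$ and using $\pi_1\prn{p}$ and $\pi_2\prn{p}$ in the continuation. After the left unit law substitutes $p\coloneq\prn{\sigma,s}$, the projections reduce by the $\beta$-rules for dependent pairs in the extensional metalanguage. This also resolves the type of the continuation: $N\prn{\Var{typ}}$ becomes $N\prn{\sigma}$ definitionally, so no transport is required.

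I expect no real obstacle. The proof is purely equational, and the only bookkeeping is making sure the associativity rewrites are applied under binders; this is valid because Kleisli extension acts on functions.
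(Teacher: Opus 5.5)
Your proposal is correct and is the intended argument: the paper gives no separate proof and presents the lemma as a simple piece of equational reasoning, namely unfolding $\ElabId_l$ and $\ElabThe$, reassociating the binds, and applying the left unit law, exactly as you do. Reading the pattern-matching bind through projections, so that $N(\sigma)$ is well-typed after the $\beta$-step without any transport, is the right way to handle the dependent typing.
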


Assuming the identity type structure is elaborable, then the $\TmRefl$ constructor is trivial to implement in checking and synthesis mode variants.

\begin{minipage}[t]{0.45\textwidth}\small
\iblock{
  \mrow{
    \ElabRefl : \ChkScript
  }
  \mhang{
    \ElabRefl~\Var{idtyp} \coloneq \Do
  }{
    \commentrow{// The following uses our assumption that the preimage $\TpId^{-1}\prn{\sigma}$ is open-subsingleton.}
    \mrow{
      \prn{\Var{typ},\Var{lhs},\Var{rhs}}\gets \Await~\TpId^{-1}\prn{\Var{idtyp}}
    }
    \commentrow{// The following uses our assumption that $\Tm\prn{\Var{typ}}$ is discrete.}
    \mrow{
      \Await~\prn{\Var{lhs}=_{\Tm\prn{\Var{typ}}}\Var{rhs}}
    }
    \mrow{
      \Return~\TmRefl\prn{\Var{typ},\Var{lhs}}
    }
  }
}
\end{minipage}\quad
\begin{minipage}[t]{0.45\textwidth}\small
\iblock{
  \mrow{
    \ElabRefl' : \SynScript\to\SynScript
  }
  \mhang{
    \ElabRefl'~E \coloneq \Do
  }{
    \mrow{
      \prn{\Var{typ},\Var{tm}} \gets E
    }
    \mrow{
      \Return~\prn{\TpId\prn{\Var{typ},\Var{tm},\Var{tm}},\TmRefl\prn{\Var{typ},\Var{tm}}}
    }
  }
}
\end{minipage}

\begin{lemma}\label{lem:refl'-in-terms-of-refl}
  We have $\ElabRefl'~\prn{\ElabThe~A~M} = \ElabThe~\prn{\ElabId~A~M~M}~\ElabRefl$.
\end{lemma}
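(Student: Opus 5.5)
The plan is to unfold both sides using the definitions of $\ElabRefl'$, $\ElabThe$, $\ElabId$ and $\ElabRefl$. We then normalise them with the relative monad laws of Construction~\ref{con:L-relative-monad}, namely associativity of Kleisli extension and the unit laws. Both sides should reduce to
\[
  \sigma\gets A;\ s\gets M\prn{\sigma};\ \Return~\prn{\TpId\prn{\sigma,s,s},\TmRefl\prn{\sigma,s}}\text{.}
\]

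\emph{Left-hand side.} $\ElabRefl'~\prn{\ElabThe~A~M}$ is $\prn{\Var{typ},\Var{tm}}\gets\prn{\sigma\gets A;\ s\gets M\prn{\sigma};\ \Return~\prn{\sigma,s}};\ \Return~\prn{\TpId\prn{\Var{typ},\Var{tm},\Var{tm}},\TmRefl\prn{\Var{typ},\Var{tm}}}$. Reassociating the binds and applying the left unit law gives the normal form displayed above immediately.

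\emph{Right-hand side.} $\ElabThe~\prn{\ElabId~A~M~M}~\ElabRefl$ unfolds, after reassociation, to
\[
  \sigma\gets A;\ l\gets M\prn{\sigma};\ r\gets M\prn{\sigma};\ \ElabRefl\prn{\TpId\prn{\sigma,l,r}}\ \text{then}\ \Return~\prn{\TpId\prn{\sigma,l,r},-}\text{.}
\]
Two small facts are needed here.
\begin{enumerate}
  \item \textbf{Await on an inhabited subsingleton.} If $U\subseteq X$ is open-subsingleton and $u\in U$, then $\Await~U=\Return~u$. The support $\exists x\mathpunct{.}x\in U$ is true, and it equals $\top$ by propositional extensionality. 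The value is the unique element of $U$, which is $u$. Applied to $\TpId^{-1}\prn{\TpId\prn{\sigma,l,r}}\ni\prn{\sigma,l,r}$, the first line of $\ElabRefl$ therefore collapses to binding $\prn{\Var{typ},\Var{lhs},\Var{rhs}}\coloneq\prn{\sigma,l,r}$. What remains is $\Await~\prn{l=r};\ \Return~\TmRefl\prn{\sigma,l}$.
  \item \textbf{Contraction (diagonal) law.} For any $u:\PMC\prn{X}$ and any $g\colon\prn{x,y:X}\to\prn{x=y}\to\PMC\prn{Y}$, we have
  \[
    x\gets u;\ y\gets u;\ p\gets\Await~\prn{x=y};\ g\prn{x,y,p}\;=\;x\gets u;\ g\prn{x,x,\mathsf{refl}}\text{.}
  \]
\end{enumerate}

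I expect the contraction law to be the only nonroutine step. I would prove it directly from the explicit description of Kleisli extension. The support of the left-hand side is $\prn{p:u\IsDef}\times\prn{q:u\IsDef}\times\prn{u_p=u_q}\times g\prn{\ldots}\IsDef$. Since $u\IsDef$ is a proposition, any $p,q$ are equal, so $u_p=u_q$ holds automatically. This support is then logically equivalent to $\prn{p:u\IsDef}\times g\prn{u_p,u_p,\mathsf{refl}}\IsDef$, hence equal to it by propositional extensionality, and the values agree by proof irrelevance. Put differently, this is the idempotence of $\PMC$ that reflects its lax-idempotency.

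Applying the contraction law with $u\coloneq M\prn{\sigma}$ turns the right-hand side into $\sigma\gets A;\ l\gets M\prn{\sigma};\ \Return~\prn{\TpId\prn{\sigma,l,l},\TmRefl\prn{\sigma,l}}$. This is the normal form of the left-hand side, which completes the proof.
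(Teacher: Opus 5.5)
Your proof is correct, and it is the routine unfold-and-normalise calculation the paper has in mind; the paper states this lemma without writing that calculation out. The two facts you isolate are exactly what the calculation needs: $\Await$ on the inhabited subsingleton $\TpId^{-1}\prn{\TpId\prn{\sigma,l,r}}$ is just $\Return~\prn{\sigma,l,r}$, and the second run of $M\prn{\sigma}$ followed by $\Await~\prn{l=r}$ contracts to a single run.

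One small quibble concerns your aside that contraction ``reflects lax-idempotency''. Lax-idempotency gives commutativity, not duplicability: the downset monad on posets is lax-idempotent and commutative, yet it fails the diagonal law. Contraction instead comes from supports being proof-irrelevant propositions, and that is what your argument actually uses, so nothing in the proof is affected.
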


Martin-L\"of's identification eliminator requires an induction motive and a base case.

\iblock{\small
  \mrow{
    \ElabIdElim :
    \SynScript\to
    \prn{
      \HypScript\times\HypScript\times\HypScript\to\TypScript
    }
    \to\prn{
      \HypScript\to\ChkScript
    }
    \to\SynScript
  }
  \mhang{
    \ElabIdElim~E~C~M \coloneq \Do
  }{
    \mrow{
      \prn{\Var{idtyp},\Var{target}} \gets E
    }
    \commentrow{// The following uses our assumption that the preimage $\TpId^{-1}\prn{\sigma}$ is open-subsingleton.}
    \mrow{
      \prn{\Var{typ},\Var{lhs},\Var{rhs}}\gets \Await~\TpId^{-1}\prn{\Var{idtyp}}
    }
    \commentrow{// The following uses the assumption that each $\Tm\prn{\alpha}$ is compact.}
    \mhang{
      \Var{motive} \gets
    }{
      \mrow{\Scope~x:\Tm\prn{\Var{typ}}~\In}
      \mrow{\Scope~y:\Tm\prn{\Var{typ}}~\In}
      \mrow{\Scope~p:\Tm\prn{\TpId\prn{\Var{typ},x,y}}~\In}
      \mrow{C\prn[\big]{\prn{\Var{typ},x},\prn{\Var{typ},y},\prn{\TpId\prn{\Var{typ},x,y},p}}}
    }
    \mrow{
    	\Var{baseCase} \gets
    	\Scope~x:\Tm\prn{\Var{typ}}~\In~M\prn{x}\prn{\Var{motive}\prn{x,x,\TmRefl\prn{\Var{typ},x}}}
    }
    \mrow{
    	\Return~\prn{\Var{motive}\prn{\Var{lhs},\Var{rhs},\Var{target}},\TmJ\prn{\Var{typ},\Var{lhs},\Var{rhs},\Var{target},\Var{motive},\Var{baseCase}}}
    }
  }
}

\subsection{Elaborating the answer type}

Assuming an answer type structure over an elaborable judgemental structure, we will not need any further elaborability conditions.

\begin{minipage}[t]{0.45\textwidth}
\iblock{\small
  \mrow{\ElabAns : \TypScript}
  \mrow{\ElabAns \coloneq \Return~\TpAns}
}
\end{minipage}
\begin{minipage}[t]{0.45\textwidth}
\iblock{\small
  \mrow{\ElabYes, \ElabNo: \ChkScript}
  \mrow{\ElabYes~\Var{typ} \coloneq \Await~\prn{\Var{typ}=_{\Tp} \TpAns}; \Return~\TmYes}
  \mrow{\ElabNo~\Var{typ} \coloneq \Await~\prn{\Var{typ}=_{\Tp} \TpAns}; \Return~\TmNo}
}
\end{minipage}

\subsection{Case study: elaborating a complex type}\label{sec:case-study}

We can now illustrate a bidirectional elaboration script and its evaluation to a term. We will consider in particular the type of the \emph{symmetry} function that inverts an identification using Martin-L\"of's identification eliminator. We will show by simple equational reasoning how to execute the elaboration of the following script:

\iblock{\small
  \mrow{
    \ScriptIdent{symmetry.type} : \TypScript
  }
  \mhang{
    \ScriptIdent{symmetry.type} \coloneq
  }{
    \commentrow{//
      $\prn{x : \ElabAns}\,\prn{y:\ElabAns}\,\prn{p:x= y}
      \to
      y=x$
    }
    \mrow{
      \ElabPi~\ElabAns~\prn{
        \lambda x\mathpunct{.}
        \ElabPi~\ElabAns~\prn{
          \lambda y\mathpunct{.}
          \ElabPi~\prn{\ElabId_l~\prn{\ElabVar~x}~\prn{\ElabConv~\prn{\ElabVar~y}}}~\prn{
            \lambda p\mathpunct{.}
            \ElabId_l~\prn{\ElabVar~y}~\prn{\ElabConv~\prn{\ElabVar~x}}
          }
        }
      }
    }
  }
}

\begin{computation}\label{computation:symmetry.type}
  We can evaluate $\ScriptIdent{symmetry.type}$ as a partial element of $\Tp$ as follows.
  First we unfold the definition of $\ElabPi$:

  \iblock{\small
    \mrow{\alpha\gets \ElabAns}
    \mhang{
      \beta\gets \Scope~x:\Tm\prn{\alpha}~\In~\Do
    }{
      \mrow{\alpha'\gets\ElabAns}
      \mhang{\beta'\gets\Scope~y:\Tm\prn{\alpha'}~\In~\Do}{
        \mrow{\alpha''\gets \ElabId_l~\prn{\ElabVar~\prn{\alpha,x}}~\prn{\ElabConv~\prn{\ElabVar~\prn{\alpha',y}}}}
        \mrow{\beta''\gets\Scope~p:\Tm\prn{\alpha''}~\In~\ElabId_l~\prn{\ElabVar~\prn{\alpha',y}}~\prn{\ElabConv~\prn{\ElabVar~\prn{\alpha,x}}}}
        \mrow{\Return~\TpPi\prn{\alpha'',\beta''}}
      }
      \mrow{\Return~\TpPi\prn{\TpAns,\beta'}}
    }
    \mrow{\Return~\TpPi\prn{\TpAns,\beta}}
  }

  Next we unfold the definitions of $\ElabAns$ and $\ElabVar$:

  \iblock{\small
    \mhang{
      \beta\gets \Scope~x:\Tm\prn{\TpAns}~\In~\Do
    }{
      \mhang{\beta'\gets\Scope~y:\Tm\prn{\TpAns}~\In~\Do}{
        \mrow{\alpha''\gets \ElabId_l~\prn{\Return~\prn{\TpAns,x}}~\prn{\ElabConv~\prn{\Return~\prn{\TpAns,y}}}}
        \mrow{\beta''\gets\Scope~p:\Tm\prn{\alpha''}~\In~\ElabId_l~\prn{\Return~\prn{\TpAns,y}}~\prn{\ElabConv~\prn{\Return~\prn{\TpAns,x}}}}
        \mrow{\Return~\TpPi\prn{\alpha'',\beta''}}
      }
      \mrow{\Return~\TpPi\prn{\TpAns,\beta'}}
    }
    \mrow{\Return~\TpPi\prn{\TpAns,\beta}}
  }

  \begin{minipage}[t]{0.45\textwidth}
  Unrolling the definition of $\ElabId_l$:

  \iblock{\small
    \mhang{
      \beta\gets \Scope~x:\Tm\prn{\TpAns}~\In~\Do
    }{
      \mhang{\beta'\gets\Scope~y:\Tm\prn{\TpAns}~\In~\Do}{
        \mrow{
          y'\gets \ElabConv~\prn{\Return~\prn{\TpAns,y}}~\TpAns
        }
        \mhang{\beta''\gets\Scope~p:\Tm\prn{\TpId\prn{\TpAns,x,y'}}~\In~\Do}{
          \mrow{
            x'\gets \ElabConv~\prn{\Return~\prn{\TpAns,x}}~\TpAns
          }
          \mrow{
            \Return~\TpId\prn{\TpAns,y,x'}
          }
        }
        \mrow{\Return~\TpPi\prn{\TpId\prn{\TpAns,x,y'},\beta''}}
      }
      \mrow{\Return~\TpPi\prn{\TpAns,\beta'}}
    }
    \mrow{\Return~\TpPi\prn{\TpAns,\beta}}
  }
  \end{minipage}\quad
  \begin{minipage}[t]{0.45\textwidth}
  Unrolling the definition of $\ElabConv$:

  \iblock{\small
    \mhang{
      \beta\gets \Scope~x:\Tm\prn{\TpAns}~\In~\Do
    }{
      \mhang{\beta'\gets\Scope~y:\Tm\prn{\TpAns}~\In~\Do}{
        \mrow{\Await~\prn{\TpAns=_{\Tp}\TpAns}}
        \mhang{\beta''\gets\Scope~p:\Tm\prn{\TpId\prn{\TpAns,x,y}}~\In~\Do}{
          \mrow{\Await~\prn{\TpAns=_{\Tp}\TpAns}}
          \mrow{
            \Return~\TpId\prn{\TpAns,y,x}
          }
        }
        \mrow{\Return~\TpPi\prn{\TpId\prn{\TpAns,x,y},\beta''}}
      }
      \mrow{\Return~\TpPi\prn{\TpAns,\beta'}}
    }
    \mrow{\Return~\TpPi\prn{\TpAns,\beta}}
  }
  \end{minipage}
  \medskip

  The conditions of both $\Await$-statements are true, so we have:

  \iblock{\small
    \mhang{
      \beta\gets \Scope~x:\Tm\prn{\TpAns}~\In~\Do
    }{
      \mhang{\beta'\gets\Scope~y:\Tm\prn{\TpAns}~\In~\Do}{
        \mrow{\beta''\gets\Scope~p:\Tm\prn{\TpId\prn{\TpAns,x,y}}~\In~\Return~\TpId\prn{\TpAns,y,x}}
        \mrow{\Return~\TpPi\prn{\TpId\prn{\TpAns,x,y},\beta''}}
      }
      \mrow{\Return~\TpPi\prn{\TpAns,\beta'}}
    }
    \mrow{\Return~\TpPi\prn{\TpAns,\beta}}
  }

  Finally, we rewrite using the unit law of the $\Scope$-expression (Lemma~\ref{lem:theta-unit}) and the monad laws:

  \iblock{
    \mrow{
      \Return~\TpPi\prn{\TpAns, \lambda x\mathpunct{.}\TpPi\prn{\TpAns,\lambda y\mathpunct{.}\TpPi\prn{\TpId\prn{\TpAns,x,y},\lambda p\mathpunct{.}\TpId\prn{\TpAns,y,x}}}}
    }
  }
\end{computation}

\begin{remark}
  Because all of our reasoning is \emph{equational} and all functions are (automatically) congruent for equality, the individual steps of Computation~\ref{computation:symmetry.type} could have been carried out in any order, factored into reusable lemmas, \emph{etc.}. This is the benefit of explaining elaboration \emph{denotationally} rather than operationally.
\end{remark}

\subsection{Scoped elaboration in context}\label{sec:elaboration-in-context}

Much as \citet{mcbride:ni} has advocated, the \emph{context} of elaboration is not an explicit parameter to the elaboration process but rather emerges implicitly through the use of the $\Scope$ combinator. Nonetheless, an actual elaboration script (as written by a user in a proof assistant) will take place in the scope of a number of free hypotheses $x_1,\ldots,x_n:\HypScript$, and this script will be executed relative to an actual well-formed type context $x_1:A_1,\ldots,x_n:A_n$ of the same length. We show in this section that explicitly scoped elaboration scripts like this can be made sense of directly in our framework, in the presence of an elaborable judgemental structure.

\subsubsection{Telescopes as formal type contexts}

We define a length-indexed inductive type of telescopes to represent formal type contexts.

\iblock{\small
  \mhang{
    \mathbf{data}~\Tele \colon \mathbb{N}\to\SET~\mathbf{where}
  }{
    \mrow{\epsilon \colon \Tele~0}
    \mrow{
      \prn{\lhd}\colon \prn{\alpha:\Tp}\to \prn{
      \Tm\prn{\alpha}\to \Tele_n
      }
      \to
      \Tele_{n+1}
    }
  }
}

Then we define for each telescope the type of \emph{environments} as an iterated dependent pair type.

\iblock{\small
  \mrow{
    \Env_n\colon \Tele_n\to\SET
  }
  \mrow{
    \Env_0\prn{\epsilon} \coloneq \mathbf{1}
  }
  \mrow{
    \Env_{n+1}\prn{\alpha\lhd \Psi} \coloneq \prn{x:\Tm\prn{\alpha}}\times\Env_n\prn{\Psi\prn{x}}
  }
}

We will write $\widetilde{\Env}_n$ for $\prn{\Psi:\Tele_n}\times \Env_n\prn{\Psi}$ and $\widetilde{\Env}$ for $\prn{n:\mathbb{N}}\times \widetilde{\Env}_n$.

\begin{lemma}[Compactness of environments]
  Each type of environments $\Env_n\prn{\Psi}$ is compact.
\end{lemma}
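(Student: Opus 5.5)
The plan is to argue by induction on the length $n$, generalising over the telescope, i.e.\ to prove $\forall \Psi:\Tele_n\mathpunct{.}\ \Env_n\prn{\Psi}\text{ is compact}$. The standing hypothesis is that the judgemental structure is elaborable, so each $\Tm\prn{\alpha}$ is compact. The only closure property we need is Lemma~\ref{lem:compact-sum}, which says that compactness is closed under compact sums.

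In the base case $n=0$ the only telescope is $\epsilon$, and $\Env_0\prn{\epsilon}=\mathbf{1}$. The unit type is compact: given an open predicate $\varphi\colon\mathbf{1}\to\OProp$, we have $\forall x:\mathbf{1}\mathpunct{.}\varphi\prn{x}\Leftrightarrow\varphi\prn{\star}$, and the right-hand side is open. (This is also the remark that every finite type is compact, using $\top\in\OProp$.) Since $\OProp$ is a subuniverse of propositions, logically equivalent propositions are equal, so this suffices.

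In the inductive step, a telescope of length $n+1$ has the form $\alpha\lhd\Psi$ with $\alpha:\Tp$ and $\Psi\colon\Tm\prn{\alpha}\to\Tele_n$. Then $\Env_{n+1}\prn{\alpha\lhd\Psi}=\prn{x:\Tm\prn{\alpha}}\times\Env_n\prn{\Psi\prn{x}}$. Take $K\coloneq\Tm\prn{\alpha}$, which is compact by elaborability, and $J_x\coloneq\Env_n\prn{\Psi\prn{x}}$, which is compact for every $x$ by the induction hypothesis applied to the telescope $\Psi\prn{x}$. Lemma~\ref{lem:compact-sum} then gives compactness of the dependent sum, which is exactly $\Env_{n+1}\prn{\alpha\lhd\Psi}$.

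There is no real obstacle here. The one point needing care is that the induction hypothesis must be quantified over all telescopes of length $n$, not over a fixed one, because the step applies it at the varying telescopes $\Psi\prn{x}$. Beyond that, the recursion defining $\Env$ matches the shape of Lemma~\ref{lem:compact-sum} exactly, so no further argument is required.
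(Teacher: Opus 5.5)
Your proof is correct and is the same argument as the paper's: induction on $n$, using Lemma~\ref{lem:compact-sum} together with the compactness of each $\Tm\prn{\alpha}$ that elaborability provides. You also spell out two points the paper leaves implicit: the base case ($\mathbf{1}$ is compact), and the need to generalise the induction hypothesis over all telescopes $\Psi:\Tele_n$.
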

\begin{proof}
  By induction on $n$ via Lemma~\ref{lem:compact-sum} and the fact that $\Tm\prn{\alpha}$ is compact in an elaborable judgemental structure.
\end{proof}

\begin{construction}[Environment zipping]
  Indexed in telescopes $\Psi:\Tele_n$ we have a family of ``zipping'' functions

  \iblock{\small
    \mrow{\Zip_\Psi \colon \Env_n\prn{\Psi}\to \HypScript^n}
    \mrow{\Zip_{\epsilon}\prn{*} \coloneq *}
    \mrow{\Zip_{\alpha\lhd\Psi}\prn{x,\psi} \coloneq \prn[\big]{\prn{\alpha,x}, \Zip_{\Psi\prn{x}}\prn{\psi}}}
  }

  \noindent
  that turn dependently typed environments into lists of typed terms.
\end{construction}

\begin{construction}[Domains for scoped elaboration]
Let $\Psi:\Tele_n$ be a telescope; then we define:
\begin{align*}
  \brk{\Psi}\TypScript &\coloneq \PMC\prn{\Env_n\prn{\Psi}\to\Tp}
  \\
  \brk{\Psi}\SynScript &\coloneq \PMC\prn[\big]{\Env_n\prn{\Psi}\to\widetilde{\Tm}}
  \\
  \brk{\Psi}\ChkScript &\coloneq
  \prn{\alpha:\Env_n\prn{\Psi}\to \Tp}
  \to
  \PMC\prn[\big]{\prn{\psi:\Env_n\prn{\Psi}}\to \Tm\prn{\alpha\prn{\psi}}}
\end{align*}
\end{construction}

\begin{construction}[Executing scoped elaboration scripts]
  We define the following combinators for executing scoped elaboration scripts at $\Psi$:
  \begin{mathpar}
    \ebrule{
      \hypo{\Psi:\Tele_n}
      \hypo{A\colon\mathsf{hyp}^n\to\TypScript}
      \infer2{
        \brk{\Psi\vdash A} :\brk{\Psi}\TypScript
      }
    }
    \and
    \ebrule{
      \hypo{\Psi:\Tele_n}
      \hypo{E\colon\mathsf{hyp}^n\to\SynScript}
      \infer2{
        \brk{\Psi\vdash E} : \brk{\Psi}\SynScript
      }
    }
    \and
    \ebrule{
      \hypo{\Psi:\Tele_n}
      \hypo{M\colon\mathsf{hyp}^n\to\ChkScript}
      \infer2{
        \brk{\Psi\vdash M} : \brk{\Psi}\ChkScript
      }
    }
  \end{mathpar}

  These are defined as follows, using the compactness of $\Env_n\prn{\Psi}$:

  \iblock{\small
    \mrow{
      \brk{\Psi\vdash A} \coloneq \Scope~\psi:\Env_n\prn{\Psi}~\In~A\prn{\Zip_\Psi\prn{\psi}}
    }
    \mrow{
      \brk{\Psi \vdash E} \coloneq \Scope~\psi:\Env_n\prn{\Psi}~\In~E\prn{\Zip_\Psi\prn{\psi}}
    }
    \mrow{
      \brk{\Psi\vdash M}~\alpha \coloneq
      \Scope~\psi:\Env_n\prn{\Psi}~\In~
      M\prn{\Zip_\Psi\prn{\psi}}\prn{\alpha\prn{\psi}}
    }
  }
\end{construction}

\section{Denotational reasoning about surface syntax}\label{sec:reasoning}

We are now in a position to begin reasoning about elaboration scripts using the semantics that we have given. Although we cannot exhaust the topic in a single paper, we present two possible areas of denotational exploitation: multilinearity as a sanity condition for good elaboration combinators, and an inequational version of subject reduction for untyped elaboration scripts.

\subsection{Multilinearity of elaboration combinators}\label{sec:multilinearity-of-elaboration}

In \S~\ref{sec:algebras}, we introduced the notion of \emph{algebra} and \emph{multilinear map} for the relative 2-monad $\PMC\colon \SET\to\POSET$. We recall that the main semantic domains for elaboration are actually algebras:
\begin{enumerate}
  \item $\TypScript = \PMC\prn{\Tp}$ and $\SynScript = \PMC\prn[\big]{\widetilde{\Tm}}$ are both \emph{free} algebras.
  \item $\ChkScript = \prod_{\alpha:\Tp}\PMC\prn{\Tm\prn{\alpha}}$ is a \emph{product} of free algebras.
  \item The domains for binders, like $\HypScript\to \TypScript$ and $\HypScript\to\ChkScript$ are powers $\HypScript\pitchfork\TypScript = \prod_{\_:\HypScript}\TypScript$ and $\HypScript\pitchfork\ChkScript = \prod_{\_:\HypScript}\ChkScript$ of algebras respectively.
\end{enumerate}

Hence we may consider whether a given combinator built from the above is \emph{multilinear}.

\begin{theorem}[Multilinearity of elaboration combinators]\label{thm:multilinearity-of-combinators}
  The following elaboration combinators from \S~\ref{sec:modular-elaborators} are multilinear in all their arguments and hence track the following multimorphisms:
  \par\vspace{-1em}
  \begin{minipage}[t]{0.45\textwidth}\small
    \begin{align*}
      \underline{\ElabVar} &: \brk{}\multimap \SynScript\\
      \underline{\ElabConv} &: \brk{\SynScript}\multimap\ChkScript\\
      \underline{\ElabThe} &: \brk{\TypScript,\ChkScript}\multimap \SynScript\\
      \underline{\ElabApp} &:\brk{\SynScript,\ChkScript}\multimap\SynScript\\
      \underline{\ElabPair} &:\brk{\ChkScript,\ChkScript}\multimap\ChkScript\\
      \underline{\ElabFst},\underline{\ElabSnd} &: \brk{\SynScript}\multimap\SynScript
    \end{align*}
  \end{minipage}
  \begin{minipage}[t]{0.45\textwidth}\small
    \begin{align*}
      \underline{\ElabId}&:\brk{\TypScript,\ChkScript,\ChkScript}\multimap\TypScript\\
      \underline{\ElabId}_l &:\brk{\SynScript,\ChkScript}\multimap\TypScript\\
      \underline{\ElabRefl} &: \brk{}\multimap\ChkScript\\
      \underline{\ElabRefl}'&:\brk{\SynScript}\multimap\SynScript\\
      \underline{\ElabAns} &:\brk{}\multimap\TypScript\\
      \underline{\ElabYes},\underline{\ElabNo} &:\brk{}\multimap\ChkScript
    \end{align*}
  \end{minipage}
  \vspace{-.2em}
  {\small
    \[ \underline{\ElabIdElim}:\brk{\SynScript,\HypScript^3\pitchfork\TypScript,\HypScript\pitchfork\ChkScript}\multimap\SynScript\]}
  
  When the judgemental structure satisfies strengthening, the remaining elaboration combinators involving binding are also multilinear:
  \par\vspace{-1em}
  \begin{minipage}[t]{0.45\textwidth}\small
    \begin{align*}
      \underline{\ElabPi} &:\brk{\TypScript,\HypScript\pitchfork\TypScript}\multimap \TypScript\\
      \underline{\ElabSg}&:\brk{\TypScript,\HypScript\pitchfork\TypScript}\multimap\TypScript
    \end{align*}
  \end{minipage}
  \begin{minipage}[t]{0.45\textwidth}\small
    \begin{align*}
      \underline{\ElabLam} &:\brk{\HypScript\pitchfork\ChkScript}\multimap\ChkScript\\
      \underline{\ElabLam}' &:\brk{\TypScript,\HypScript\pitchfork\SynScript}\multimap\SynScript\\
    \end{align*}
  \end{minipage}
\end{theorem}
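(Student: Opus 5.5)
The plan is to prove one general ``bind-pulling'' principle and then check each combinator against it by unfolding its $\Do$-block. Fix a combinator $f\colon\prod_{k:K}A_k\to B$ with $K$ a finite cardinal, families $X_k$, maps $a_k\colon X_k\to A_k$, and $u:\prod_k\PMC\prn{X_k}$. We must show
\[
x\gets\prn{\Scope~k:K~\In~u_k};\ f\prn{\lambda k\mathpunct{.}a_k\prn{x_k}} \;=\; f\prn{\lambda k\mathpunct{.}\ x\gets u_k;\ a_k\prn{x}}\text{.}
\]
Monotonicity of $f$ is immediate, since every combinator is a composite of Kleisli extensions, which are monotone by Construction~\ref{con:L-relative-monad}.

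\textbf{Nullary combinators.} For $\ElabVar$, $\ElabRefl$, $\ElabAns$, $\ElabYes$ and $\ElabNo$ we have $K=\mathbf{0}$. Then $\Scope$ over the empty arity is $\Return~*$ by Lemma~\ref{lem:theta-unit}, and the equation reduces to the unit law.

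\textbf{Non-binding combinators.} For the others I will rewrite the right-hand side from the inside out. Each argument slot of $f$ is used \emph{exactly once} in the $\Do$-block. That use is either a Kleisli bind on a free algebra ($\TypScript$, $\SynScript$), or an application $M\prn{\sigma}$ followed by a bind for $\ChkScript$. Since the algebra structure on $\ChkScript$ is pointwise (Lemma~\ref{lem:products}), $\prn{x\gets u_k; a_k\prn{x}}\prn{\sigma}=x\gets u_k; a_k\prn{x}\prn{\sigma}$. I then proceed in three steps.
\begin{enumerate}
\item Use the composition law of Definition~\ref{def:algebra} to reassociate, obtaining a block of the shape $\ldots;\ x_k\gets u_k;\ \ldots$.
\item Float each $x_k\gets u_k$ to the front of the block. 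This requires commuting it past earlier binds, including $\Await$-statements and other arguments' binds. It is legitimate because $u_k$ does not depend on the variables bound before it, and $\PMC$ is commutative: this is the double-strength case of Lemma~\ref{lem:scope-extension}, or equivalently the fact that $\prn{p:\varphi}\times\psi$ and $\prn{q:\psi}\times\varphi$ agree as supports when neither depends on the other.
\item Merge the now-adjacent binds $x_1\gets u_1;\ldots;x_n\gets u_n$ into $x\gets\Scope~k~\In~u_k$, using Lemma~\ref{lem:theta-comp} and Lemma~\ref{lem:scope-extension} at finite arity.
\end{enumerate}
This handles $\ElabConv$, $\ElabThe$, $\ElabApp$, $\ElabPair$, $\ElabFst$, $\ElabSnd$, $\ElabId$, $\ElabId_l$ and $\ElabRefl'$ by routine unfolding. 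The same step-2 commutation also covers a bind that feeds a later argument's input, such as $M\prn{\Var{dom}}$ with $\Var{dom}$ produced earlier: the bind being floated is $u_k$ itself, which does not mention $\Var{dom}$.

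\textbf{Binder arguments.} The main obstacle is a binder argument in $\HypScript\pitchfork A$. Such an argument is used as $\Scope~x:\Tm\prn{\Var{dom}}~\In~\prn{y\gets u;\ a\prn{y}\prn{\Var{dom},x}}$, and the bind $y\gets u$ must be pulled out of the $\Scope$. Lemma~\ref{lem:scope-extension} only yields $\Scope~x~\In~\prn{y\gets u;\ldots}$, whose support is $\forall x:\Tm\prn{\Var{dom}}\mathpunct{.}\prn{u\IsDef\land\ldots}$, whereas we need $u\IsDef\land\forall x\mathpunct{.}\ldots$. These agree when the scope is inhabited (Lemma~\ref{lem:inhabited-scope}). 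In general the missing implication $\prn{\forall x\mathpunct{.}u\IsDef}\to u\IsDef$ is exactly the strengthening hypothesis. With that hypothesis the supports coincide, and the values agree by function extensionality. This gives $\ElabPi$, $\ElabSg$, $\ElabLam$ and $\ElabLam'$, after which the argument finishes as in steps 1–3.

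\textbf{The case of $\ElabIdElim$.} Here strengthening is avoided by inhabitation. Using Lemma~\ref{lem:theta-comp}, I will rewrite the nested scope for the motive as a single $\Scope$ over the compact sum $\prn{x,y:\Tm\prn{\Var{typ}}}\times\Tm\prn{\TpId\prn{\Var{typ},x,y}}$. This sum is inhabited by $\prn{\Var{lhs},\Var{rhs},\Var{target}}$, which is available in scope after the $\Await$. Likewise the base-case scope over $\Tm\prn{\Var{typ}}$ is inhabited by $\Var{lhs}$. Lemma~\ref{lem:inhabited-scope} therefore applies unconditionally in both places.
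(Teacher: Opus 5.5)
Your proposal is correct and follows essentially the same route as the paper's appendix proof. That route is: unfold each combinator, push binds through the pointwise algebra structure on $\ChkScript$ and the binder powers, then reassociate and commute binds in $\PMC$; use strengthening to pull a bind out of a $\Scope$ for $\ElabPi$, $\ElabSg$, $\ElabLam$, $\ElabLam'$; and apply Lemma~\ref{lem:inhabited-scope} with the inhabitants $\prn{\Var{lhs},\Var{rhs},\Var{target}}$ and $\Var{lhs}$ for $\ElabIdElim$. The only differences are minor: you unfold $\ElabId$ directly where the paper reduces it to $\ElabThe$ and $\ElabId_l$ via Lemma~\ref{lem:id-in-terms-of-idl}, and you make explicit the monotonicity check and the merging of sequential binds into a single $\Scope$, which the paper leaves implicit.
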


\begin{proof}
	Every case follows by equational reasoning (the
	calculations are carried out in the extended version of this paper); for $\underline{\ElabIdElim}$ we use Lemma~\ref{lem:inhabited-scope}.
\end{proof}

Multilinearity in turn implies \emph{multistrictness}, which we believe is an important sanity condition for elaboration rules.

\begin{corollary}[Multistrictness of elaboration combinators]
	Assuming the dominance $\OProp$ is strict and the judgemental structure satisfies strengthening, every elaboration combinator from \S~\ref{sec:modular-elaborators} is multistrict.
\end{corollary}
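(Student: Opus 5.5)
The plan is to combine Theorem~\ref{thm:multilinearity-of-combinators} with Lemma~\ref{lem:multilinear-multistrict}. Under the hypothesis that the judgemental structure satisfies strengthening, Theorem~\ref{thm:multilinearity-of-combinators} tells us that every combinator of \S~\ref{sec:modular-elaborators} is multilinear. This covers the binder-free combinators unconditionally, and also $\ElabPi$, $\ElabSg$, $\ElabLam$ and $\ElabLam'$. Each of these multimorphisms has a finite list of inputs, so its arity is a finite cardinal $\Card{n}$ with $\Card{n}\le 3$. Such an arity is compact, because universal quantification over it reduces to iterated conjunction, which lies in $\OProp$ since $\OProp$ is a dominance. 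It also has decidable equality. Lemma~\ref{lem:multilinear-multistrict} therefore applies once $\OProp$ is strict, which is needed for $\mathbf{undefined}$ to make sense, and it yields multistrictness of each combinator.

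The step to check is that the notion of multistrictness is read correctly for the argument types that are powers, such as $\HypScript\pitchfork\TypScript$ and $\HypScript\pitchfork\ChkScript$. In these algebras $\mathbf{undefined}$ is computed pointwise by Lemma~\ref{lem:products}. So, for instance, $\ElabPi~A~B$ is undefined whenever $B$ is the constantly undefined family; it need not be undefined when $B$ is undefined at only some hypotheses. This is exactly what Lemma~\ref{lem:multilinear-multistrict} delivers, with the $\Card{n}$-ary arity indexing the argument slots and not the hypotheses, so nothing further is needed.

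The nullary combinators ($\ElabVar$ after currying its hypothesis, $\ElabRefl$, $\ElabAns$, $\ElabYes$, $\ElabNo$) have empty arity. For these, multistrictness holds vacuously, since there is no $l$ at which an argument could be undefined.

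I expect no real obstacle, because the content is already in Theorem~\ref{thm:multilinearity-of-combinators}. The one point to be careful about is that the strengthening hypothesis is used only to invoke the second half of that theorem, for the binder-forming combinators. If one wanted a direct sanity check there, for example for $\ElabPi$ with undefined domain $A$, the first bind already makes the result undefined. With $B$ undefined, the $\Scope$ step has support $\forall x\mathpunct{.}\bot$, and strengthening turns this into $\bot$. This confirms why that hypothesis appears in the statement.
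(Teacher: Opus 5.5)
Your proposal is correct and follows the paper's proof, which is the one-line argument ``by Theorem~\ref{thm:multilinearity-of-combinators} via Lemma~\ref{lem:multilinear-multistrict}'': the combinators are multilinear at finite arities, which are compact with decidable equality. Your extra remarks are accurate sanity checks rather than needed steps, namely the pointwise reading of $\mathbf{undefined}$ in the powers, the vacuous nullary case, and the direct check that strengthening makes $\ElabPi$ strict in $B$.
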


In other words, if any ``subterm'' of an elaboration script built from these combinators is undefined, the entire script is undefined. This is an important sanity condition for surface languages: in those terms, it captures the idea that a well-typed term will necessarily have no ill-typed subterms.

\begin{proof}
	By Theorem~\ref{thm:multilinearity-of-combinators} via Lemma~\ref{lem:multilinear-multistrict}.
\end{proof}

\subsection{Shadows of subject reduction}\label{sec:subject-reduction}

An important property of certain declarative presentations of type theory is \emph{subject reduction} with respect to an (untyped) rewriting system: if $\Gamma\vdash M:A$ holds and $M\mapsto M'$, then $\Gamma\vdash M':A$ must hold. Of course, ``full spectrum'' dependent type theory is not typically presented by an untyped rewriting system, as rewriting methods have thus far never scaled to a satisfactory handling of the $\eta$-laws for dependent pair types. For that reason, discussion of ``subject reduction'' in the context of modern type theoretic systems has always had the air of science fiction about it; nonetheless, we believe that our denotational semantics of surface syntax as bidirectional elaboration combinators may provide some insight on the question.

For example, one might conjecture that a surface-level ``$\beta$-redex'' would be equal to its contractum, as suggested by \citet{mcbride:ni}:
\[
  \ElabApp~\prn{
    \ElabThe~\prn{
      \ElabPi~A~B
    }~\prn{
      \ElabLam~M
    }
  }~{N}
  \stackrel{?}{=}
  x\gets \ElabThe~{A}~{N};
  \ElabThe~B\prn{x}~M\prn{x}
\]
However, if we unravel the definitions of both sides, we will see that they are not equal without additional assumptions. In particular, the left and right hand sides unravel as follows:

\begin{minipage}[t]{.45\textwidth}\small
\iblock{
  \mhang{
    \ElabApp~\prn{
        \ElabThe~\prn{
          \ElabPi~A~B
        }~\prn{
          \ElabLam~M
        }
      }~{N} = \Do
  }{
    \mrow{\Var{dom}\gets A}
    \mrow{\Var{fam}\gets\Scope~x:\Tm\prn{\Var{dom}}~\In~B\prn{\Var{dom},x}}
    \mrow{\Var{body}\gets\Scope~x:\Tm\prn{\Var{dom}}~\In~\allowbreak M\prn{\Var{dom},x}\prn{\Var{fam}\prn{x}}}
    \mrow{\Var{arg}\gets N\prn{\Var{dom}}}
    \mrow{\Return~\prn{\Var{fam}\prn{\Var{arg}},\Var{body}\prn{\Var{arg}}}}
  }
}
\end{minipage}\quad
\begin{minipage}[t]{.45\textwidth}\small
\iblock{
  \mhang{
    x\gets \ElabThe~A~N;
      \ElabThe~\prn{B\prn{x}}~\prn{M\prn{x}}
    = \Do
  }{
    \mrow{\Var{dom}\gets A}
    \mrow{\Var{arg}\gets N\prn{\Var{dom}}}
    \mrow{\Var{fibre} \gets B\prn{\Var{dom},\Var{arg}}}
    \mrow{\Var{value} \gets M\prn{\Var{dom},\Var{arg}}\prn{\Var{fibre}}}
    \mrow{\Return~\prn{\Var{fibre},\Var{value}}}
  }
}
\end{minipage}
\smallskip

As can be seen by inspection, the difference between the two lies \emph{solely} in their definedness as partial elements. The left-hand side is defined only when $B$ denotes a well-typed family of types and when $M$ denotes a well-typed section of this family of types, whereas the right-hand side requires only that specific substitution instances of $B$ and $M$ be well-typed. The former is stronger than the latter, which suggests that subject reductions like this may correspond to \emph{inequational} reasoning in the partiality monad. Indeed:

\begin{theorem}[``Subject reduction'']\label{thm:subject-reduction}
  The following inequalities corresponding to untyped $\beta$- and annotation-removal reductions hold, where all metavariables are quantified universally:
  \begin{sizeddisplay}{\small}
  \begin{align*}
    \ElabThe~A~\prn{\ElabConv~E}
    &\preccurlyeq_{\SynScript}
    E
    \\
    \ElabApp~\prn{
      \ElabThe~\prn{
        \ElabPi~A~B
      }~\prn{
        \ElabLam~M
      }
    }~{N}
    &\preccurlyeq_{\SynScript}
    x \gets \ElabThe~A~N;
    \ElabThe~{B\prn{x}}~{M\prn{x}}
    \\
    \ElabFst~\prn{
      \ElabThe~\prn{
        \ElabSg~A~B
      }~\prn{
        \ElabPair~M~N
      }
    }
    &\preccurlyeq_{\SynScript}
    \ElabThe~A~M
    \\
    \ElabSnd~\prn{
      \ElabThe~\prn{
        \ElabSg~A~B
      }~\prn{
        \ElabPair~M~N
      }
    }
    &\preccurlyeq_{\SynScript}
    \ElabThe~\prn{
      x \gets \ElabThe~A~M;
      B\prn{x}
    }~N
    \\
    \ElabIdElim~\prn{
      \ElabThe~\prn{\ElabId~A~N~N}~\ElabRefl
    }~B~M
    &\preccurlyeq_{\SynScript}
    x\gets\ElabThe~A~N;
    p\gets\ElabThe~\prn{\ElabId~A~N~N}~\ElabRefl;
    \ElabThe~B\prn{x,x,p}~M\prn{x}
  \end{align*}
  \end{sizeddisplay}
  \vspace{-1em}

  We also have the following inequalities corresponding to untyped $\eta$-reductions:
  \begin{align*}
    \ElabLam~\prn{
      \lambda x\mathpunct{.}
      \ElabConv~\prn{
        \ElabApp~E~\prn{\ElabConv~\prn{\ElabVar~x}}
      }
    }
    &\preccurlyeq_{\ChkScript}
    \ElabConv~E
    \\
    \ElabPair~\prn{
      \ElabConv~\prn{\ElabFst~E}
    }~\prn{
      \ElabConv~\prn{\ElabSnd~E}
    }
    &\preccurlyeq_{\ChkScript}
    \ElabConv~E
  \end{align*}
\end{theorem}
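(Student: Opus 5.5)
The plan is to prove each inequality by unfolding both sides into a single $\Do$-block using the monad laws and the definitions of \S~\ref{sec:modular-elaborators}, then comparing them in the standard ordering of Definition~\ref{def:pmc-poset}. Concretely, for $u\preccurlyeq v$ in $\PMC\prn{X}$ I will assume $u\IsDef$ and then show two things: that $v\IsDef$ holds, and that the values agree. For $\ChkScript$ the ordering is pointwise, so I fix the goal type $\tau$ first. The right-hand sides will always come out with fewer $\Await$ and $\Scope$ conditions than the left-hand sides. So the structure of every case is the same: the left's definedness yields witnesses for everything the right needs, and then equality reflection together with the $\beta$/$\eta$-laws of the core structures identifies the returned core terms.

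For the annotation and $\beta$ cases I proceed as follows.
\begin{itemize}
\item For $\ElabThe~A~\prn{\ElabConv~E}$, unfolding gives: $\sigma\gets A$; $\prn{\sigma',s}\gets E$; $\Await~\sigma'=\sigma$; $\Return~\prn{\sigma,s}$. Definedness gives $E\IsDef$, and the value is $\prn{\sigma,s}=\prn{\sigma',s}$, which is $E$'s value.
\item For the $\Pi$-redex, the displayed unfoldings before the theorem are used. Note that the $\Await~\TpPi^{-1}\prn{\TpPi\prn{\Var{dom},\Var{fam}}}$ step collapses because the preimage is a subsingleton containing $\prn{\Var{dom},\Var{fam}}$. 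Definedness of the left supplies $\forall x.\,B\prn{\Var{dom},x}\IsDef$, and instantiating at $x\coloneq\Var{arg}$ gives $B\prn{\Var{dom},\Var{arg}}\IsDef$ with value $\Var{fam}\prn{\Var{arg}}$. Likewise $M\prn{\Var{dom},\Var{arg}}\prn{\Var{fam}\prn{\Var{arg}}}$ is defined with value $\Var{body}\prn{\Var{arg}}$. Finally the $\beta$-law gives $\TmApp\prn{\ldots,\TmLam\prn{\ldots,\Var{body}},\Var{arg}}=\Var{body}\prn{\Var{arg}}$.
\item $\ElabFst$ and $\ElabSnd$ are analogous, using the $\Sigma$ $\beta$-laws. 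Here the right side of the $\ElabSnd$ case rebinds $\Var{fst}$ through $\ElabThe~A~M$, and that computation is defined by the left's definedness.
\item The $\TmJ$ case is likewise analogous: $\Await~\TpId^{-1}$ and $\Await~\Var{lhs}=\Var{rhs}$ collapse, the $\Scope$d motive and base case are instantiated at $\prn{x,x,\TmRefl}$ and at $x$, and the computation rule of $\TmJ$ finishes it.
\end{itemize}

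Throughout, I will only use the elementary projection principle: if $\Scope~k:K~\In~f\prn{k}$ is defined, then each $f\prn{k}$ is defined with value the $k$-th component. This is immediate from Construction~\ref{con:lax-monoidal-compact}. Commutativity of $\PMC$ lets me reorder bindings as needed.

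For the $\eta$ cases, I fix $\tau$. Suppose the left side of the $\lambda$-case is defined. Then $\tau=\TpPi\prn{\Var{dom},\Var{fam}}$, and for every $x:\Tm\prn{\Var{dom}}$ the body is defined. Unfolding the body: $E$ is defined with some value $\prn{\sigma,f}$, $\sigma=\TpPi\prn{\Var{dom}',\Var{fam}'}$, $\Var{dom}'=\Var{dom}$ from the conversion, and $\Var{fam}'\prn{x}=\Var{fam}\prn{x}$. The main obstacle is extracting $E\IsDef$ from $\forall x:\Tm\prn{\Var{dom}}.\,\prn{\ldots E\IsDef\ldots}$ when $\Tm\prn{\Var{dom}}$ might be empty. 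Since $E$ does not depend on $x$, I expect to need the strengthening hypothesis, $\prn{\forall x.\,\varphi}\to\varphi$ for open $\varphi=E\IsDef$. Note that the theorem as stated does not explicitly assume it, so I would either invoke it, as in Theorem~\ref{thm:multilinearity-of-combinators}, or look for a way around it, and I would flag this. Granting $E\IsDef$, injectivity (the subsingleton preimage) gives $\Var{fam}'=\Var{fam}$ by function extensionality. Hence $\sigma=\tau$, so $\ElabConv~E~\tau$ is defined with value $f$, and the $\Pi$ $\eta$-law gives $\TmLam\prn{\ldots,\lambda x.\,\TmApp\prn{\ldots,f,x}}=f$. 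The pair case needs no strengthening: definedness of $\ElabFst~E$ directly yields $E\IsDef$ and $\tau=\TpSg\prn{\Var{base},\Var{fam}}$ with $E$'s type equal to $\tau$. The $\Sigma$ $\eta$-law then gives $\TmPair\prn{\TmFst\,f,\TmSnd\,f}=f$.
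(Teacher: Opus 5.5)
For the annotation-removal law and the four $\beta$-laws, your unfold-and-compare argument is what the paper intends. The paper gives no separate proof: it displays only the unfolding of the $\Pi$-redex and leaves the rest to inspection. Those cases go through as you describe.

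You are right that the $\Pi$-$\eta$ law needs strengthening, which the statement omits: if $\Tm\prn{\Var{dom}}$ is empty, the left-hand side is defined at $\TpPi\prn{\Var{dom},\Var{fam}}$ even when $E$ is undefined. However, applying strengthening only to $E\IsDef$ is not enough. Two further conditions are also only available under $\forall x$: the inhabitedness of $\TpPi^{-1}\prn{\sigma}$, and the equation $\Var{dom}'=\Var{dom}$ produced by $\ElabConv~\prn{\ElabVar~x}$. Without them you have no $\Var{fam}'$ to compare with $\Var{fam}$. Apply strengthening once to the $x$-independent open proposition formed by all three conditions. It is open by the dominance property, elaborability of $\TpPi$, and discreteness of $\Tp$. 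Only $\Var{fam}'\prn{x}=\Var{fam}\prn{x}$ depends on $x$, and function extensionality then gives $\Var{fam}'=\Var{fam}$.

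The genuine gap is the $\Sigma$-$\eta$ case: your claim that definedness of the left-hand side makes the type of $E$ equal to $\tau$ is false. At $\tau$, definedness yields only the following:
\begin{itemize}
\item $\tau=\TpSg\prn{\Var{base},\Var{fam}}$;
\item a value $\prn{\TpSg\prn{\Var{base}',\Var{fam}'},\Var{pair}}$ of $E$;
\item $\Var{base}'=\Var{base}$, from the first $\ElabConv$;
\item $\Var{fam}'\prn{\Var{fst}}=\Var{fam}\prn{\Var{fst}}$ at the single point $\Var{fst}=\TmFst\prn{\Var{base},\Var{fam}',\Var{pair}}$, from the second $\ElabConv$.
\end{itemize}
Nothing forces $\Var{fam}'=\Var{fam}$. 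Strengthening cannot help, since no quantifier is involved, and in fact the inequality fails in the intended model of Theorem~\ref{thm:elaborability}.

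Here is a counterexample. Put $\beta'\prn{x}\coloneq\TpId\prn{\TpAns,x,\TmYes}$ and $\beta\prn{x}\coloneq\TpId\prn{\TpAns,\TmYes,x}$. Let $E\coloneq\Return~\prn{\TpSg\prn{\TpAns,\beta'},\TmPair\prn{\TpAns,\beta',\TmYes,\TmRefl\prn{\TpAns,\TmYes}}}$ and $\tau\coloneq\TpSg\prn{\TpAns,\beta}$. Both conversions succeed because $\beta'\prn{\TmYes}=\beta\prn{\TmYes}$. So the left-hand side at $\tau$ is totally defined, with value $\TmPair\prn{\TpAns,\beta,\TmYes,\TmRefl\prn{\TpAns,\TmYes}}$. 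But $\ElabConv~E~\tau$ awaits $\TpSg\prn{\TpAns,\beta'}=\TpSg\prn{\TpAns,\beta}$. By injectivity of $\TpSg$ and $\TpId$, this would require a generic variable of type $\TpAns$, in some context, to be judgementally equal to $\TmYes$, so its support is the empty sieve.

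The contrast with $\Pi$ is this: the $\lambda$-body is checked at a generic $x$, which pins down $\Var{fam}'=\Var{fam}$, whereas the second component of a pair is checked against the fibre at one specific point. This case of your proof therefore cannot be completed. The $\Sigma$-$\eta$ clause of the statement itself needs to be weakened or dropped.
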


Because all the elaboration combinators are monotone, the ``subject reduction'' inequations of Theorem~\ref{thm:subject-reduction} extend to an inequational congruence for elaboration scripts, making inequational reasoning \emph{compositional}.

\section{Correct by construction elaboration via initiality}

In \S~\ref{sec:modular-elaborators} we have shown how to specify and construct elaboration scripts in the partiality monad of a dominance satisfying certain properties. In this section, we will describe a \emph{specific} topos and dominance in which the constructions of \S~\ref{sec:modular-elaborators} may be executed so as to lead to a correct by construction elaboration algorithm. The extraction of this algorithm is carried out abstractly, using the \emph{initiality property} of a second-order generalised algebraic theory of untyped elaboration scripts.

\subsection{Implicit computational content via a base topos}

We are careful to develop everything constructively (in the sense of validity in an elementary topos with a natural numbers object), so that the extracted elaboration function can be understood to have an algorithmic content. This is a kind of \emph{implicit} effectiveness, which a classical mathematician might prefer to make explicit by carrying out the constructions below in a realisability topos~\cite{hyland:1982}. We remain agnostic on this point, and simply work relative to an arbitrary base topos $\BaseTopos$ that shall play the role of the ambient category of sets.

\begin{assumption}[The base topos]
  Everything that follows should be understood as taking place relative to a suitable elementary topos $\BaseTopos$ with a natural numbers object; hence, if we refer to a category $\mathbb{C}$, we mean an internal category in $\BaseTopos$, and if we refer to presheaves on $\mathbb{C}$ we mean \emph{internal} $\BaseTopos$-valued presheaves on $\mathbb{C}$. The interested reader can learn the details of relative topos theory from a standard reference such as those of Johnstone~\cite{johnstone:topos:1977,johnstone:2002}, but we stress that it is enough to simply embrace constructive reasoning and work informally.
\end{assumption}

\subsection{Natural models of second-order generalised algebraic theories}\label{sec:sogats}

A \emph{natural model}~\citep{awodey:2018:natural-models} of Martin-L\"of type theory with an answer type is given by a category of contexts $\mathbb{C}$ equipped with a terminal object $\mathbf{1}_{\mathbb{C}}$ together with the following structure in the category of presheaves $\Psh{\mathbb{C}}$:\footnote{Awodey requires $\mathbb{C}$ to be small; we follow \citet{uemura:2021:thesis} in relaxing this requirement so that standard models, like the sets model, may be constructed. The category of presheaves on a large category may not be cartesian closed, but nonetheless the function spaces needed for defining natural models will be present.}
\begin{enumerate}
  \item a \emph{judgemental structure} $(\Tp,\Tm)$ such that $\Tm:\Psh{\mathbb{C}}/\Tp$ is representable in the sense that each of its fibres over a representable is representable;
  \item dependent function type, dependent pair type, identity type, and answer type structure.
\end{enumerate}

In other words, a natural model is given by an \emph{elementary} model (\S~\ref{sec:elementary-model}) in a chosen category of presheaves, with the additional condition that the base category be closed under contextual operations (empty context and context extension). The natural models arrange themselves into a (2,1)-category with a bi-initial object, which \citet{uemura:2021:thesis} shows can be built from raw syntax.\footnote{Here we will use only the existence of the bi-initial model and will not depend on any specific details of its construction! Indeed, there can be many different constructions of the bi-initial model, each of which provides a different kind of insight.}

The notion of \emph{natural model} described above is a special case of a more general notion that applies to \emph{second-order generalised algebraic theories}~\citep{uemura:2021:thesis}, of which Martin-L\"of type theory is a specific example. %
We will use SOGATs twice: first to view Martin-L\"of type theory algebraically, and second to circumscribe a theory of untyped elaboration scripts from whose initiality property we shall extract an elaboration algorithm.

\subsection{A topos and dominance for effective elaboration}

Our goal is to find a topos over $\BaseTopos$ equipped with a dominance $\OProp$ and an elementary model of Martin-L\"of type theory that is \emph{elaborable} with respect to $\OProp$ in the sense of \S~\ref{sec:modular-elaborators}.

\subsubsection{Presheaves on contexts and substitutions}\label{sec:bi-initial-model}

Now let $\InitML$ be the bi-initial natural model of Martin-L\"of type theory with an answer type, so that we have the structure of an elementary model (\S~\ref{sec:elementary-model}) in $\Psh{\CX{\InitML}}$. As $\CX{\InitML}$ is the category of contexts and substitutions of our type theory, a presheaf on $\CX{\InitML}$ is a context-indexed family of sets equipped with a substitution action.

Our first goal is to construct a dominance in $\Psh{\CX{\InitML}}$ that makes the given elementary model of type theory \emph{elaborable} in the sense of \S~\ref{sec:modular-elaborators}.
If we chose the maximal dominance $\PROP$, then the ensuing interpretation of elaboration scripts would shed no light on the \emph{effective computability} of elaboration. Instead, we shall follow \citet[\S~2.3.5]{bocquet:2026} by employing a subuniverse of \emph{levelwise decidable} propositions, which may be defined for any presheaf topos as we do in \S~\ref{sec:ldec}; we must also account for strengthening, which we incorporate in \S~\ref{sec:strengthening}.

\subsubsection{The dominance of levelwise decidable propositions}\label{sec:ldec}
The subobject classifier for presheaves is defined in terms of \emph{sieves}~\cite{maclane-moerdijk:1992}: a proposition in context $\Gamma$ is given by a \emph{sieve} on $\Gamma$, \emph{i.e.}\ a set of arrows into $\Gamma$ that is closed under precomposition (\emph{i.e.}\ substitution). The idea of \citet{bocquet:2026} is to replace sieves with \emph{levelwise decidable} sieves, \emph{i.e.}\ sets of arrows closed under precomposition for which the membership problem is decidable in the base topos $\BaseTopos$.
Hofmann and Streicher's universe lifting construction~\citep{hofmann-streicher:1997} provides a more structural characterisation, as suggested by \citet[\S~5]{awodey:2024:universes}:
\begin{enumerate}
  \item The subobject classifier of $\Psh{\CX{\InitML}}$ is the Hofmann--Streicher lifting of the subobject classifier $\mathbf{1}\hookrightarrow\PROP_{\BaseTopos}$ from the base topos.
  \item The \emph{levelwise decidable} subobject classifier of $\Psh{\CX{\InitML}}$ is the Hofmann--Streicher lifting of the \emph{decidable} subobject classifier $\mathbf{1}\hookrightarrow\mathbf{2}$ from the base topos.
\end{enumerate}

\begin{proposition}[{\citet[Lemma~2.3.24]{bocquet:2026}}]\label{prop:oprop-dominance}
  The levelwise decidable subobject classifier of $\Psh{\CX{\InitML}}$ forms a dominance such that every representable natural transformation is fibrewise compact.
\end{proposition}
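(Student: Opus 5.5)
The plan is to give an explicit levelwise description of the classifier and then check the two claims directly. Write $\mathcal{D}$ for the levelwise decidable subobject classifier. At a context $\Gamma$, $\mathcal{D}(\Gamma)$ is the set of sieves $S$ on $\Gamma$ such that, for every $\Delta$, membership of $\gamma\colon\Delta\to\Gamma$ in $S$ is decidable in $\BaseTopos$. Equivalently, $S$ is a family of maps $\chi_\Delta\colon \CX{\InitML}(\Delta,\Gamma)\to\mathbf{2}$, natural in $\Delta$. This is just the Hofmann--Streicher lifting of $\mathbf{1}\hookrightarrow\mathbf{2}$ unfolded.

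For the dominance property:
\begin{itemize}
\item $\top$ is the maximal sieve, whose characteristic map is constantly $1$.
\item For $\varphi\in\mathcal{D}$ and $\psi\colon\varphi\to\mathcal{D}$, the dependent sum $(p:\varphi)\times\psi(p)$ is a subobject of $\mathbf{1}$. At stage $\Delta$, a map $\gamma$ lies in it iff $\gamma\in\varphi$ and $\gamma\in\psi_\gamma$.
\item We decide this by first deciding $\gamma\in\varphi$. If it fails, the answer is no. If it holds, $\psi$ supplies a decidable sieve at $\Delta$, and we decide whether the identity lies in it.
\item Naturality in $\Delta$ follows from the naturality of $\psi$.
\end{itemize}
This shows $\mathcal{D}$ is closed under dependent sums; note also that $\bot\in\mathcal{D}$, so the dominance is strict.

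For fibrewise compactness, let $p\colon E\to B$ be representable. Work in a generic context $\Gamma$ and fix $b\in B(\Gamma)$. Since $p$ is representable, the fibre over $b$ is represented by a context extension $\Gamma.b$ with projection $\mathrm{p}_b\colon\Gamma.b\to\Gamma$. In the internal language, a predicate $\varphi\colon E_b\to\mathcal{D}$ in context $\Gamma$ is the same as, by Yoneda, a single decidable sieve $\varphi_{\mathrm{gen}}$ on $\Gamma.b$, namely $\varphi$ evaluated at the generic element $q_b$. Kripke--Joyal semantics gives the universal quantifier: $\gamma\colon\Delta\to\Gamma$ forces $\forall x:E_b.\varphi(x)$ iff, for every $\delta\colon\Delta'\to\Delta$ and every $e\in E(\Delta')$ over $b\gamma\delta$, the sieve $\varphi$ holds at $e$. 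By representability, such pairs $(\delta,e)$ are exactly the maps $\Delta'\to\Gamma.b$ that factor through the pullback $\Delta.b\gamma$. Every such map factors through the universal map $\gamma^+\colon\Delta.(b\gamma)\to\Gamma.b$, and sieves are closed under precomposition. Hence the condition collapses to the single decidable question of whether $\gamma^+\in\varphi_{\mathrm{gen}}$ at stage $\Delta.(b\gamma)$. This is decidable, and naturality in $\gamma$ follows from functoriality of the pullback. So $\forall$ lands in $\mathcal{D}$.

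The main obstacle is this last step, which replaces an a priori infinitary Kripke quantification over all later stages by one membership test at the generic extended context. The care goes into the coherence of chosen context extensions and pullbacks along $\gamma$ in the natural model, including checking that the result is strictly natural in $\Gamma$. I would handle this by stating the argument for an arbitrary representable map in an arbitrary presheaf topos, where representability of $p$ supplies exactly the needed right adjoint $\mathrm{p}_b^*$ at the level of decidable sieves.
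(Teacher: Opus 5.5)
Your proposal is correct and takes the same route as the paper. The paper's proof simply unravels the presheaf interpretation of dependent sums (for the dominance axiom) and of universal quantification (for compactness), using Yoneda to collapse the Kripke quantification over a representable fibre to the single test $\gamma^+\in\varphi_{\mathrm{gen}}$; your write-up is a more explicit version of that sketch, with two harmless slips: a decidable sieve is a family $\chi_\Delta$ that is \emph{monotone} under precomposition, not natural into the constant presheaf $\mathbf{2}$ (which would leave only the empty and maximal sieves), and your closing worry about coherence of chosen pullbacks is unnecessary, since the topos already supplies $\forall x{:}E_b.\,\varphi(x)$ as a sieve and only its levelwise decidability needs checking.
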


\begin{proof}
  Without depending on the admittedly challenging machinery of \citet{bocquet:2026}, these facts can be seen by simply unravelling the interpretation of dependent sums and products in the presheaf model of type theory~\citep[\S~4.2]{hofmann:1997}, in light of which the fibrewise compactness of representable natural transformations follows from Yoneda's lemma.
\end{proof}

  We shall write $\PROP_{\mathit{ldec}}\subseteq\PROP$ for the dominance of levelwise decidable propositions in $\Psh{\CX{\InitML}}$.

\subsubsection{A Lawvere--Tierney modality for strengthening}\label{sec:strengthening}

Our full results on multilinearity (Theorem~\ref{thm:multilinearity-of-combinators}) require the following \emph{strengthening} property for all open propositions.

\begin{definition}[Strengthenable propositions]
  A proposition $\varphi$ is called \emph{strengthenable} when the implication $\prn{\forall x:\Tm\prn{\alpha}\mathpunct{.}\varphi}\to \varphi$ holds for each $\alpha:\Tp$. We will write $\PROP_{\mathit{str}}\subseteq \PROP$ for the class of strengthenable propositions.
\end{definition}

Not every levelwise decidable proposition will be strengthenable; we will therefore intersect the levelwise decidable and strengthenable propositions. First, we comment that the strengthenable propositions form an \emph{oracle modality} in the sense of \citet{ahman-bauer:2026}, namely the \emph{smallest} Lawvere--Tierney modality whose modal propositions are all strengthenable. As a result, the strengthenable propositions automatically form a dominance for which \emph{all} types are compact.

\subsubsection{Open propositions and elaborability in the presheaf model}

We define $\OProp \coloneq \PROP_{\mathit{ldec}}\cap \PROP_{\mathit{str}}$ to be the dominance consisting of strengthenable levelwise decidable propositions.

\begin{theorem}\label{thm:elaborability}
  With respect to the dominance $\OProp$ of strengthenable levelwise decidable propositions,
  \begin{enumerate}
    \item the judgemental structure $\prn{\Tp_{\InitML}, \Tm_{\InitML}}$ is elaborable and satisfies strengthening, and
    \item the dependent function type, dependent pair type, and identity type structures are elaborable.
  \end{enumerate}
\end{theorem}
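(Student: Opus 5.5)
The proof reduces each elaborability condition to a concrete statement about the bi-initial model $\InitML$ that can be checked levelwise, context by context. It then separately verifies that the resulting propositions are strengthenable.

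The key fact about the presheaf internal language is the following. A proposition $\varphi$ over a generalised element in context $\Gamma$ is levelwise decidable exactly when, for every substitution $\gamma\colon\Delta\to\Gamma$, membership of $\gamma$ in the corresponding sieve is decidable in $\BaseTopos$.

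For item~(1), I proceed as follows.
\begin{itemize}
\item \emph{Discreteness of $\Tp$.} The equality predicate on $\Tp_{\InitML}$ at stage $\Delta$ is judgemental equality of types in context $\Delta$. This is decidable by the normalisation theorem for the bi-initial model (\citet{sterling:2021:thesis,bocquet:2026}), so it is levelwise decidable.
\item \emph{Discreteness of each $\Tm(\alpha)$.} The same argument applies to judgemental equality of terms.
\item \emph{Compactness of each $\Tm(\alpha)$.} This is Proposition~\ref{prop:oprop-dominance}, because $\Tm_{\InitML}\to\Tp_{\InitML}$ is representable. One point needs care: the quantifier must also preserve strengthenability and not merely levelwise decidability. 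I would check that $\PROP_{\mathit{str}}$ is closed under $\forall x:\Tm(\alpha)$. Given strengthenable $\varphi(x)$, the proposition $\forall x.\varphi(x)$ is strengthenable by commuting the two universal quantifiers. Since the intersection is taken pointwise, this gives compactness for $\OProp$.
\item \emph{Strengthening.} This holds by construction, since every element of $\OProp$ lies in $\PROP_{\mathit{str}}$.
\end{itemize}

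The remaining subtlety is that the equality propositions must themselves be strengthenable. This is the semantic strengthening property of $\InitML$: if an equation $u=v$ between terms in context $\Delta$ holds after weakening to $\Delta.\alpha$, it already holds in $\Delta$. Weakening $\Delta\to\Delta.\alpha$ is not split in general, since $\alpha$ may be empty, so this does not follow formally. I would derive it from the normalisation theorem: normal forms are stable under weakening, and weakening is injective on normal forms.

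For item~(2), I show that each preimage $\TpPi^{-1}(\sigma)$, $\TpSg^{-1}(\sigma)$ and $\TpId^{-1}(\sigma)$ is open-subsingleton.
\begin{itemize}
\item \emph{Subsingleton.} This is exactly the injectivity of type constructors, i.e.\ monicity of $\TpPi_{\InitML}$ and the analogous maps, which again follows from normalisation.
\item \emph{Openness.} I must show that $\exists(\alpha,\beta).\,\TpPi(\alpha,\beta)=\sigma$ is levelwise decidable and strengthenable. Levelwise, this asks whether the normal form of $\sigma$ in context $\Delta$ has head constructor $\TpPi$, which is decidable.
\item \emph{Strengthenability of the preimage.} The head of a normal form is invariant under weakening, so if the weakened type is a $\Pi$-type then so is the original. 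The components then strengthen by the term-level argument above.
\end{itemize}

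The main obstacle is these strengthening facts: confirming that all the levelwise-decidable propositions we use really lie in $\PROP_{\mathit{str}}$ and that $\forall$ over $\Tm(\alpha)$ stays inside $\OProp$. I would handle every case uniformly through one lemma, proved from normalisation: weakening $\Delta.\alpha\to\Delta$ reflects judgemental equality and head constructors of normal forms.
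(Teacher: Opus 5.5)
Your proposal is correct and follows essentially the same route as the paper. Discreteness comes from decidability of judgemental equality, compactness from Proposition~\ref{prop:oprop-dominance}, and elaborability of $\TpPi$, $\TpSg$, $\TpId$ from injectivity of type constructors together with decidability of head constructors; strengthenability of each of these propositions is then reduced, via Kripke--Joyal, to the syntactic strengthening lemma obtained from normalisation. The only cosmetic difference is that you check closure of $\PROP_{\mathit{str}}$ under $\forall x:\Tm\prn{\alpha}$ by commuting the two quantifiers, whereas the paper treats this as automatic because the strengthenable propositions form an oracle modality for which every type is compact.
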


\begin{proof}
  That the judgemental structure satisfies strengthening is immediate. 
  We have already seen in Proposition~\ref{prop:oprop-dominance} that each $\Tm_{\InitML}\prn{\alpha}$ is compact with respect to levelwise decidable propositions and (automatically) with respect to strengthenable propositions. 

  That $\Tp_{\InitML}$ and each $\Tm_{\InitML}\prn{\alpha}$ are discrete with respect to $\OProp$ is established in two steps:
  \begin{enumerate}
    \item Discreteness with respect to levelwise decidable propositions is precisely the (external!) decidability of judgemental equality for types and terms~\citep{coquand:2019,sterling:2021:thesis,bocquet:2026}.
    \item Discreteness with respect to strengthenable propositions amounts (by Kripke--Joyal semantics~\citep{maclane-moerdijk:1992}) to the strengthening lemma, which follows from normalisation.
  \end{enumerate}

  That the dependent function type, dependent pair type, and identity type structures are elaborable with respect to levelwise decidable propositions amounts to a combination of the injectivity of type constructors (as formulated in the presheaf topos by \citet[Lemma~3.53]{sterling:2025:grothendieck}) with the fact that it is externally decidable whether a type is an instance of a given head constructor, which similarly follows from the canonical representation of types and terms by normal forms. Elaborability with respect to strengthenable propositions amounts to the syntactic strengthening lemma.
\end{proof}

\subsection{A theory of elaboration scripts}

We now abstract the elaboration combinators from \S~\ref{sec:modular-elaborators} into a second-order generalised algebraic theory (SOGAT) with four basic sorts $\HypScript$, $\SynScript$, $\ChkScript$, and $\TypScript$. Because variables range only over hypotheses, we declare $\HypScript$ to be representable and all other sorts non-representable; then we introduce operations corresponding to the basic elaboration combinators that we have defined. The entire signature is presented in Figure~\ref{fig:elab-sogat}. This theory is provisional in the sense that we might reasonably add more generators and equations in the future. Our goal is not to fix for all time a definition of what an ``elaboration script'' is, but rather to delineate a reasonable signature that can be used in practice and extended in response to practical discoveries later on.

\begin{figure}
  \iblock{\small
    \row{
      $\HypScript : \SortIdent{Sort}_{\mathit{repr}}$
      \color{gray}// A representable sort of hypotheses
    }
    \row{
      $\SynScript,\ChkScript,\TypScript : \SortIdent{Sort}$
      \color{gray}// Non-representable sorts for synthesis-mode expressions, checking-mode expressions, and type expressions.
    }
  }
  
  \vspace{-1em}
  
  \begin{multicols}{2}\small
    \iblock{
      \mrow{\ElabVar : \HypScript\to\SynScript}
      \mrow{\ElabConv : \SynScript\to\ChkScript}
      \mrow{\ElabThe:\TypScript\to\ChkScript\to\SynScript}
      \mrow{\ElabPi:\TypScript\to\prn{\HypScript\to\TypScript}\to\TypScript}
      \mrow{\ElabLam : \prn{\HypScript\to\ChkScript}\to\ChkScript}
      \mrow{\ElabLam' : \TypScript\to \prn{\HypScript\to\SynScript}\to\SynScript}
      \mrow{\ElabApp : \SynScript\to\ChkScript\to\SynScript}
      \mrow{\ElabSg:\TypScript\to\prn{\HypScript\to\TypScript}\to\TypScript}
      \mrow{\ElabPair:\ChkScript\to\ChkScript\to\ChkScript}
      \mrow{\ElabFst,\ElabSnd:\SynScript\to\SynScript}
      \mrow{\ElabId:\TypScript\to\ChkScript\to\ChkScript\to\TypScript}
      \mrow{\ElabId_l : \SynScript\to\ChkScript\to\TypScript}
      \mrow{\ElabRefl:\ChkScript}
      \mrow{\ElabRefl':\SynScript\to\SynScript}
      \mrow{\ElabIdElim:\SynScript\to\prn{\HypScript\times\HypScript\times\HypScript\to\TypScript}\to\prn{\HypScript\to\ChkScript}\to\SynScript}
      \mrow{\ElabAns:\TypScript}
      \mrow{\ElabYes,\ElabNo:\ChkScript}
    }
  \end{multicols}  
  \vspace{-1em}

  \caption{The second-order generalised algebraic theory of elaboration scripts.}
  \label{fig:elab-sogat}
\end{figure}

\subsection{The presheaf model of elaboration scripts}\label{sec:presheaf-model}

Write $\InitElab$ for the bi-initial natural model of elaboration scripts.
We now define a \emph{presheaf model} $\mathcal{E}$ of elaboration scripts, defining the category of contexts $\CX{\mathcal{E}}$ to be the presheaf category $\Psh{\CX{\InitML}}$ where $\InitML$ is the bi-initial model of Martin-L\"of type theory with an answer type from \S~\ref{sec:bi-initial-model}. Then the sorts are given by their interpretation in $\Psh{\CX{\InitML}}$ from \S~\ref{sec:modular-elaborators} under the Yoneda embedding $\Psh{\CX{\InitML}} = \CX{\mathcal{E}}\hookrightarrow\Psh{\CX{\mathcal{E}}}$. The generating operations are likewise interpreted by those from \S~\ref{sec:modular-elaborators} under the Yoneda embedding.

The universal property of the bi-initial natural model $\InitElab$ of elaboration scripts determines a universal interpretation homomorphism $
  \mathcal{E}\bbrk{-} \colon \InitElab\to\mathcal{E}
$
which is unique up to isomorphism with the property of taking syntactic elaboration scripts to the corresponding constructs in the  presheaf model $\mathcal{E}$. More prosaically, the interpretation function translates elaboration scripts
\begin{align*}
  x_1 : \HypScript,\ldots,x_n:\HypScript
  &\vdash
  A\prn{x_1,\ldots,x_n}:\TypScript,\;
  E\prn{x_1,\ldots,x_n}:\SynScript,\;
  M\prn{x_1,\ldots,x_n}:\ChkScript
\end{align*}
to suitable natural transformations
\[
  \widetilde{\Tm}^n
  \xrightarrow{\mathcal{E}\bbrk{A}}
  \PMC\prn{\Tp}
  \qquad
  \widetilde{\Tm}^n
  \xrightarrow{\mathcal{E}\bbrk{E}}
  \PMC\prn[\big]{\widetilde{\Tm}}
  \qquad
  \widetilde{\Tm}^n
  \xrightarrow{\mathcal{E}\bbrk{M}}
  \prn{\alpha:\Tp}\to \PMC\prn[\big]{\Tm\prn{\alpha}}
\] 
in the presheaf topos $\Psh{\CX{\InitML}}$ satisfying the defining clauses of \S~\ref{sec:modular-elaborators}.

\subsection{Executing elaboration scripts by initiality}

We will now transform the interpretation functions obtained from initiality (\S~\ref{sec:presheaf-model}) into actual external elaboration procedures that can be executed in the base topos $\BaseTopos$. To that end, we first compute more explicitly what form a section of a partial element classifier takes at the level of sets.

\begin{lemma}[Sections of the partial element classifier]\label{lem:sections-of-the-pmc}
A section $u \colon \PMC\prn{X}\prn{\Gamma}$ is given by a strengthenable levelwise decidable sieve $u\IsDef$ over $\Gamma$ together with an assignment
to each $\prn{\gamma\colon\Delta\to\Gamma}\in u\IsDef$ a section
$
  u_{\gamma}\colon X\prn{\Delta}
$ such that for all $\delta\colon\Xi\to\Delta$ we have $\delta^*u_{\gamma} = u_{\gamma\circ \delta} : X\prn{\Xi}$.
\end{lemma}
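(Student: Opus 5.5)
We will compute the sections of $\PMC\prn{X}=\prn{\varphi:\OProp}\times X^{\varphi}$ at $\Gamma$ by unfolding the standard presheaf interpretation of dependent sums and exponentials~\citep[\S~4.2]{hofmann:1997}, combined with the explicit description of the classifier $\OProp$ of strengthenable levelwise decidable propositions. A section of a dependent sum at $\Gamma$ is a pair of a section $\varphi\in\OProp\prn{\Gamma}$ and a section of $X^{\varphi}$ over the corresponding element. So the proof has two steps: identify the sections of $\OProp$, then identify the sections of the exponential $X^\varphi$.

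For the first step, note that sections of $\PROP$ at $\Gamma$ are sieves on $\Gamma$. Sections of $\PROP_{\mathit{ldec}}$ are exactly the sieves whose membership is decidable in $\BaseTopos$; this is the Hofmann--Streicher lifting description of \S~\ref{sec:ldec}. A section of the subpresheaf $\OProp=\PROP_{\mathit{ldec}}\cap\PROP_{\mathit{str}}$ is then a levelwise decidable sieve that in addition lies in $\PROP_{\mathit{str}}$. We take this last condition as the definition of ``strengthenable sieve'': the sieve, seen as a proposition in context $\Gamma$, validates $\prn{\forall x:\Tm\prn{\alpha}\mathpunct{.}\varphi}\to\varphi$. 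No further unfolding of that condition is needed for the statement. This component is the $u\IsDef$ of the lemma.

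For the second step, regard the sieve $S=u\IsDef$ as a subpresheaf of $\mathbf{y}\Gamma$. By Yoneda and the definition of exponentials in presheaves, a section of $X^{\varphi}$ over $\Gamma$ is a natural transformation $S\to X$. Here $S$ is viewed as the subobject $\varphi$ pulled back along $\mathbf{y}\Gamma\to\OProp$; more precisely, it is a map $\mathbf{y}\Gamma\times_{\OProp}\top\to X$ over $\Gamma$, and this pullback is exactly $S$. A natural transformation $S\to X$ consists of:
\begin{itemize}
  \item for each $\gamma\colon\Delta\to\Gamma$ in $S$, an element $u_\gamma\in X\prn{\Delta}$;
  \item subject to naturality, $\delta^*u_\gamma=u_{\gamma\circ\delta}$, which makes sense because sieves are closed under precomposition.
\end{itemize}
This is precisely the data in the statement.

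The only delicate point is the second step. The dependent exponential $X^{\varphi}$, with $\varphi$ a variable proposition, must be computed correctly as a presheaf over the total space, and its section over $\prn{\Gamma,S}$ must reduce to maps out of the sieve $S$ rather than out of $\mathbf{y}\Gamma$. I would handle this with the Kripke--Joyal style formula: $\prn{X^\varphi}\prn{\Gamma,S}$ is the set of families $\prn{u_\gamma}$ indexed by $\gamma\colon\Delta\to\Gamma$, where each $u_\gamma$ is given together with the (unique, proof-irrelevant) witness that $\gamma\in S$, and these families are compatible under restriction. Proof-irrelevance of $\varphi$ makes the dependence on the witness vacuous, which yields the stated form.
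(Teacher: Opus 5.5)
Your proposal is correct: unfolding $\PMC\prn{X}=\prn{\varphi:\OProp}\times X^{\varphi}$ through the presheaf interpretation of dependent sums and function types does the job. You identify $\OProp\prn{\Gamma}$ levelwise with the strengthenable decidable sieves and reduce sections of $X^{\varphi}$ to natural transformations out of the sieve, and that is the intended argument. The paper states this lemma without a separate proof, implicitly relying on the same unravelling of the presheaf model that it invokes in the proof of Proposition~\ref{prop:oprop-dominance}.
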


\begin{exegesis}
  Here is how to interpret Lemma~\ref{lem:sections-of-the-pmc} in which we think of the section $u$ as an elaboration problem: the sieve $u\IsDef$ says \emph{not only} whether the elaboration problem succeeds in context $\Gamma$, but whether it succeeds in further contexts $\Delta$ that can be reached by a substitution $\gamma\colon\Delta\to\Gamma$. We now consider three possibilities:
  \begin{enumerate}
  \item \textbf{Success.} If the identity substitution $1_\Gamma\colon\Gamma\to\Gamma$ lies in $u\IsDef$, then the elaboration problem is not only solved but \emph{stably} solved in the sense that it will remain solved in any further context that may be reached by a substitution.
  \item \textbf{Failure.} If $u\IsDef$ is empty, then the elaboration problem is unsolvable (a type error).
  \item \textbf{Suspension.} It might be that $1_\Gamma\not\in u\IsDef$ but nonetheless $u\IsDef$ is inhabited by some substitution $\gamma\colon\Delta\to\Gamma$. This case corresponds to a \emph{suspended} elaboration problem that might later be resumed under a more specific context.
  \end{enumerate}

  For any $\gamma\colon\Delta\to\Gamma$, including the identity substitution, the proposition $\gamma\in u\IsDef$ is decidable. Therefore, we are entitled to \emph{test} whether $\gamma\in u\IsDef$ and then, if it holds, extract the element $u_\gamma : X\prn{\Delta}$.
\end{exegesis}

\begin{lemma}[Representability of environments]\label{lem:env-representable}
  The family $\Env$ is representable over $\prn{n:\mathbb{N}}\times \Tele_n$ in the sense that to each natural transformation $\Psi\colon\mathsf{yo}\prn{\Gamma}\to \Tele_n$ in $\Pr\prn{\CX{\InitML}}$ we may assign a context $\Gamma\ldots\Psi$ with the following pullback square:
  \[
    \begin{tikzcd}
      |[elbow nw]|
      \mathsf{yo}\prn{\Gamma\ldots\Psi}
        \ar[r, densely dashed]
        \ar[d, densely dashed]
      &
      \widetilde{\Env}
        \ar[d]
      \\
      \mathsf{yo}\prn{\Gamma}
        \ar[r, "\prn{n,\Psi}"']
      &
      \prn{n:\mathbb{N}}\times\Tele_n
    \end{tikzcd}
  \]
\end{lemma}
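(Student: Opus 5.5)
We argue by induction on the length $n$, constructing the context $\Gamma\ldots\Psi$ together with the pullback square; equivalently, we show that for each $n$ the projection $\widetilde{\Env}_n\to\Tele_n$ is a representable natural transformation in the sense of natural models. Representability of $\widetilde{\Env}\to\prn{n:\mathbb{N}}\times\Tele_n$ then follows, because $\mathsf{yo}\prn{\Gamma}$ is connected: a map $\mathsf{yo}\prn{\Gamma}\to\prn{n:\mathbb{N}}\times\Tele_n$ is given by a global choice of $n$ together with $\Psi\colon\mathsf{yo}\prn{\Gamma}\to\Tele_n$. This uses that $\mathbb{N}$ in presheaves is the constant presheaf, so its sections over a representable are just elements of $\mathbb{N}$. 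The pullback over $\mathsf{yo}\prn{\Gamma}$ is then computed entirely in the summand indexed by $n$.

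For the base case $n=0$, $\Tele_0$ has the single element $\epsilon$ and $\Env_0\prn{\epsilon}=\mathbf{1}$. We take $\Gamma\ldots\epsilon\coloneq\Gamma$, and the square is trivially a pullback.

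For the inductive step, a map $\mathsf{yo}\prn{\Gamma}\to\Tele_{n+1}$ corresponds by Yoneda to an element of $\Tele_{n+1}\prn{\Gamma}$, i.e.\ to a pair consisting of $\alpha\colon\mathsf{yo}\prn{\Gamma}\to\Tp$ and $\Psi'\colon\mathsf{yo}\prn{\Gamma}\times_{\Tp}\Tm\to\Tele_n$. Representability of $\Tm$, which is the natural model structure of $\InitML$, provides the context extension $\Gamma.\alpha$ with $\mathsf{yo}\prn{\Gamma.\alpha}\cong\mathsf{yo}\prn{\Gamma}\times_{\Tp}\Tm$. So $\Psi'$ becomes a map $\mathsf{yo}\prn{\Gamma.\alpha}\to\Tele_n$, and by induction we obtain $\Gamma.\alpha\ldots\Psi'$. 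We set $\Gamma\ldots\prn{\alpha\lhd\Psi'}\coloneq\Gamma.\alpha\ldots\Psi'$. Since $\Env_{n+1}\prn{\alpha\lhd\Psi'}=\prn{x:\Tm\prn{\alpha}}\times\Env_n\prn{\Psi'\prn{x}}$ is a dependent sum, its pullback along $\mathsf{yo}\prn{\Gamma}$ decomposes as two stacked pullbacks: first along $\Tm\to\Tp$, and then along $\Psi'$. Pasting these two pullback squares yields the required square.

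The main obstacle is the bookkeeping in the inductive step. We must check that the presheaf $\Tele_{n+1}$, whose constructor involves the exponential $\Tm\prn{\alpha}\to\Tele_n$, really has $\Gamma$-sections of the stated form. This is the standard computation of dependent products in presheaves: a $\Gamma$-section of $\prn{\alpha:\Tp}\times\prn{\Tm\prn{\alpha}\to X}$ is $\alpha\in\Tp\prn{\Gamma}$ together with a map $\mathsf{yo}\prn{\Gamma}\times_{\Tp}\Tm\to X$, i.e.\ an element of $X\prn{\Gamma.\alpha}$, which again relies on representability of $\Tm$. The other delicate point is to make the pasting precise, namely that $\widetilde{\Env}_{n+1}$ pulled back along $\prn{\alpha,\Psi'}$ agrees with $\widetilde{\Env}_n$ pulled back along $\Psi'$ over $\mathsf{yo}\prn{\Gamma.\alpha}$. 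Both sides unfold to the same dependent sum by the definition of $\Env_{n+1}$, so once the sections are identified, the pasting lemma for pullbacks finishes the proof.
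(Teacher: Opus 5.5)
Your proof is correct. The paper states this lemma without proof, so there is no argument of theirs to compare against. Your route is the routine verification the authors leave implicit: induction on $n$, with $\Gamma.\alpha$ supplied by the natural-model representability of $\Tm\to\Tp$, a $\Gamma$-section of $\Tele_{n+1}$ read as a pair $\prn{\alpha,\Psi'}$ with $\Psi'\in\Tele_n\prn{\Gamma.\alpha}$, and the pullback of $\widetilde{\Env}_{n+1}$ along $\prn{\alpha,\Psi'}$ identified with the pullback of $\widetilde{\Env}_n$ along $\Psi'\colon\mathsf{yo}\prn{\Gamma.\alpha}\to\Tele_n$.

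If you want the pasting step to be a literal use of the pasting lemma, factor $\widetilde{\Env}_{n+1}\to\Tele_{n+1}$ through $\Tele_{n+1}\times_{\Tp}\Tm$. The lower map is the pullback of $\Tm\to\Tp$ along the first projection. The upper map is the pullback of $\widetilde{\Env}_n\to\Tele_n$ along evaluation $\prn{\prn{\alpha,\Psi'},x}\mapsto\Psi'\prn{x}$. Pulling this composite back along $\prn{\alpha,\Psi'}\colon\mathsf{yo}\prn{\Gamma}\to\Tele_{n+1}$ then produces exactly your two stacked pullback squares.
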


\begin{procedure}[Elaborating a type]\label{proc:elaborating-type}
  Let $\Psi\colon\Tele_n\prn{\mathbf{1}}$ be a \emph{closed} telescope. For any syntactic type elaboration script
  $
    x_1 : \HypScript,\ldots,x_n:\HypScript
    \vdash
    A\prn{x_1,\ldots,x_n}:\TypScript
  $
  we have a corresponding partial element
  $
    \brk[\big]{\Psi\vdash \mathcal{E}\bbrk{A}} : \PMC\prn{\Env_n\prn{\Psi}\to \Tp}\prn{\mathbf{1}}\text{.}
  $
  We calculate what such a global section amounts to:
  \begin{align*}
    &\PMC\prn{\Env_n\prn{\Psi}\to \Tp}\prn{\mathbf{1}}
    \\
    &\quad\cong
    \prn{\varphi:\OProp\prn{\mathbf{1}}}
    \times
    \textstyle\int_{\Gamma\in \varphi}
    \prn{\Env_n\prn{\Psi}\to\Tp}\prn{\Gamma}
    \\
    &\quad\cong
    \prn{\varphi:\OProp\prn{\mathbf{1}}}
    \times
    \textstyle\int_{\Gamma\in\varphi}
    \Tp\prn{\Gamma\ldots\Psi}
    && \text{by Lemma~\ref{lem:env-representable} via Yoneda}
  \end{align*}

  If $\varphi = \top$, then we simply obtain an element of $\Tp\prn{\mathbf{1}\ldots\Psi}$. Conversely, if $\varphi=\bot$ then we have a type error. Otherwise, the elaboration problem is ``suspended'': there may be some other context at which it can be solved later.
\end{procedure}

Similar to Procedure~\ref{proc:elaborating-type}, we may give analogous procedures for elaborating synthesis-mode and checking-mode terms.
\section{Conclusions and future work}

Bidirectional elaboration is usually thought of as a moded logic programming problem; our contribution is a relatively simple \emph{functional} semantics for the executable specification of correct-by-construction elaborators. We admit that some level of sophistication is required on the categorical side in order to substantiate the interpretation, but in return we obtain a language in which the behaviour of elaborators can be investigated using na\"ive equational and inequational reasoning as we have demonstrated in our case study (\S~\ref{sec:case-study}) and in our preliminary investigations of subject reduction as an inequational law (\S~\ref{sec:subject-reduction}) and the multilinearity of elaboration combinators (\S~\ref{sec:multilinearity-of-elaboration}).

Getting here, we have left many stones unturned. Here are a few of them:

\begin{enumerate}
	\item Many type system features are left unexplored; a cumulative universe hierarchy, perhaps in the ``fuss-free'' style suggested by \citet{sterling:blog:fuss-free}, would be the immediate next step.
	\item Real proof assistants allow the user to leave \emph{holes} in the middle of unfinished proofs; these holes correspond to metavariables that are automatically abstracted over all local variables. Incorporating holes seems to require an additional layer of polynomial monad~\cite{gambino-kock:2013}.
	\item Our presheaf semantics accounts for stable and \emph{dynamic} elaboration automatically, in which elaboration problems may be suspended and solved in any order. We do not yet, however, account for any of the features that would necessitate dynamic elaboration, including implicit arguments and type classes. These are both areas for future work.
	\item Our axioms are simple enough to postulate in a stock proof assistant like Agda or Rocq;\footnote{Or Lean, if only its core tactics were not so tightly coupled to the axioms of classical logic which (naturally) contradict almost any synthetic approach to anything.} although we have not yet carried this out, we believe that a key intellectual contribution of our work is the possibility of giving monadic formalisations of elaboration functions to serve as \emph{modular equational specifications} for lower-level implementations, without coupling to the inessential details of important supporting algorithms (such as normalisation).
	\item Our tentative discussion of multilinearity (\S~\ref{sec:multilinearity-of-elaboration}) suggests a deeper connection between the design of elaboration calculi and the general study of evaluation order and polarity in programming languages. A more thorough investigation in light of both call-by-push-value~\cite{levy:2003:book} and the enriched effect calculus~\cite{egger-moegelberg-simpson:2014} is expected to bear fruit.
\end{enumerate}

Finally, we stress that although we have studied a particular version of Martin-L\"of type theory in this paper, our method is applicable to any suitable (\emph{i.e.}\ second-order generalised algebraic) theory and can be tailored to the metatheorems that actually hold of that theory. 
\begin{acks}
  This work was funded by the United States Air Force Office of Scientific Research under grant FA9550-23-1-0728 (\emph{New Spaces for Denotational Semantics}; Dr Tristan Nguyen, Program Manager). Views and opinions expressed are however those of the authors only and do not necessarily reflect those of AFOSR.
\end{acks}

\bibliography{refs}

@misc{mcbride:ni,
  title = {The types who say `ni'},
  author = {Conor {Mc~Bride}},
  year = 2019,
  note = {Unpublished manuscript},
}

@misc{atkey:2015,
	title = {An Algebraic Approach to Typechecking and Elaboration},
	author = {Atkey, Robert},
	year = {2015},
	notes = {Talk given at the Scottish Programming Languages Seminar},
	url = {https://bentnib.org/posts/2015-04-19-algebraic-approach-typechecking-and-elaboration.html},
}

@article{coquand:1996,
  author = {Coquand, Thierry},
  year = {1996},
  doi = {10.1016/0167-6423(95)00021-6},
  issn = {0167-6423},
  journal = {Science of Computer Programming},
  number = {1},
  pages = {167--177},
  title = {An algorithm for type-checking dependent types},
  volume = {26},
}

@article{pierce-turner:2000,
  author = {Pierce, Benjamin C. and Turner, David N.},
  year = {2000},
  journal = toplas,
  number = {1},
  pages = {1--44},
  title = {Local type inference},
  volume = {22},
}

@phdthesis{dagand:2013,
	author = {Dagand, Pierre-\'{E}variste},
	address = {Glasgow, Scotland},
	school = {University of Strathclyde},
	year = {2013},
	month = aug,
	title = {A Cosmology of Datatypes: Reusability and Dependent Types},
}

@phdthesis{uemura:2021:thesis,
  author = {Uemura, Taichi},
  address = {Amsterdam},
  publisher = {Institute for Logic, Language and Computation},
  school = {Universiteit van Amsterdam},
  url = {https://www.illc.uva.nl/cms/Research/Publications/Dissertations/DS-2021-09.text.pdf},
  year = {2021},
  title = {Abstract and Concrete Type Theories},
}

@incollection{dybjer:1996,
  author = {Dybjer, Peter},
  editor = {Berardi, Stefano and Coppo, Mario},
  address = {Berlin, Heidelberg},
  publisher = {Springer Berlin Heidelberg},
  booktitle = {Types for Proofs and Programs: International Workshop, TYPES '95 Torino, Italy, June 5--8, 1995 Selected Papers},
  year = {1996},
  isbn = {978-3-540-70722-6},
  pages = {120--134},
  title = {Internal type theory},
}

@article{awodey:2018:natural-models,
  author = {Awodey, Steve},
  publisher = {Cambridge University Press},
  year = {2018},
  doi = {10.1017/S0960129516000268},
  eprint = {1406.3219},
  eprintclass = {math.CT},
  eprinttype = {arXiv},
  journal = {Mathematical Structures in Computer Science},
  number = {2},
  pages = {241--286},
  title = {Natural models of homotopy type theory},
  volume = {28},
}

@article{moggi:1991,
  author = {Moggi, Eugenio},
  year = {1991},
  doi = {10.1016/0890-5401(91)90052-4},
  issn = {0890-5401},
  journal = i&c,
  note = {Selections from 1989 IEEE Symposium on Logic in Computer Science},
  number = {1},
  pages = {55--92},
  title = {Notions of computation and monads},
  volume = {93},
}

@phdthesis{rosolini:1986,
  author = {Rosolini, Guiseppe},
  school = {University of Oxford},
  year = {1986},
  title = {Continuity and effectiveness in topoi},
}

@inproceedings{hyland:1982,
  author = {Hyland, J. M. E.},
  editor = {Troelstra, A. S. and Dalen, D. Van},
  publisher = {North Holland Publishing Company},
  booktitle = {The {L.E.J. Brouwer} Centenary Symposium},
  year = {1982},
  pages = {165--216},
  title = {The effective topos},
}

@article{abbott-altenkirch-ghani:2005,
  author = {Abbott, Michael and Altenkirch, Thorsten and Ghani, Neil},
  year = {2005},
  doi = {10.1016/j.tcs.2005.06.002},
  issn = {0304-3975},
  journal = {Theoretical Computer Science},
  note = {Applied Semantics: Selected Topics},
  number = {1},
  pages = {3--27},
  title = {Containers: Constructing strictly positive types},
  volume = {342},
}

@misc{awodey:2025,
      title={Algebraic Type Theory, Part 1: Martin-L\"of algebras},
      author={Steve Awodey},
      year={2025},
      eprint={2505.10761},
      archivePrefix={arXiv},
      primaryClass={math.CT},
      url={https://arxiv.org/abs/2505.10761},
}

@unpublished{gratzer-sterling:2020,
  author = {Gratzer, Daniel and Sterling, Jonathan},
  year = {2020},
  eprint = {2012.10783},
  eprintclass = {cs.LO},
  eprinttype = {arXiv},
  note = {Unpublished manuscript},
  title = {Syntactic categories for dependent type theory: sketching and adequacy},
}

@misc{bocquet:2026,
  author =	{Bocquet, Rafa\"{e}l},
  title =	{{Relative induction principles for second-order generalized algebraic theories}},
  year =	{2026},
  howpublished = {PhD thesis},
  url = {https://rafaelbocquet.gitlab.io/pdfs/thesis.pdf}
}

@article{altenkirch-chapman-uustalu:2015,
  author = {Altenkirch, Thorsten and Chapman, James and Uustalu, Tarmo},
  year = {2015},
  doi = {10.2168/LMCS-11(1:3)2015},
  journal = {Logical Methods in Computer Science},
  number = {1},
  volume = {11},
  title = {Monads need not be endofunctors},
}

@misc{slattery:2023,
  author = {Slattery, Andrew},
  year = {2023},
  eprint = {2304.14788},
  eprintclass = {math.CT},
  eprinttype = {arXiv},
  title = {Pseudocommutativity and Lax Idempotency for Relative Pseudomonads},
}

@phdthesis{slattery:2024:thesis,
  author = {Slattery, Andrew},
  school = {University of Leeds},
  year = {2024},
  url = {https://etheses.whiterose.ac.uk/id/eprint/35372/},
  title = {Commutativity of Relative Pseudomonads},
}

@book{maclane-moerdijk:1992,
  author = {Mac Lane, Saunders and Moerdijk, Ieke},
  address = {New York},
  publisher = {Springer},
  year = {1992},
  isbn = {0-387-97710-4},
  series = {Universitext},
  title = {Sheaves in geometry and logic: a first introduction to topos theory},
}

@article{awodey:2024:universes,
  author = {Awodey, Steve},
  year = {2024},
  doi = {10.1017/S0960129524000203},
  journal = {Mathematical Structures in Computer Science},
  number = {9},
  pages = {1--17},
  title = {On {Hofmann--Streicher} universes},
  volume = {34},
}

@unpublished{hofmann-streicher:1997,
  author = {Hofmann, Martin and Streicher, Thomas},
  url = {https://www2.mathematik.tu-darmstadt.de/~streicher/NOTES/lift.pdf},
  year = {1997},
  note = {Unpublished note},
  title = {Lifting {G}rothendieck Universes},
}

@incollection{sterling:2025:grothendieck,
author="Sterling, Jonathan",
editor="Panza, Marco
and Struppa, Daniele C.
and Szczeciniarz, Jean-Jacques",
title="Toward a Geometry for Syntax",
bookTitle="The Mathematical and Philosophical Legacy of Alexander Grothendieck",
year="2025",
publisher="Springer Nature Switzerland",
address="Cham",
pages="391--432",
isbn="978-3-031-68934-5",
doi="10.1007/978-3-031-68934-5_15",
url="https://doi.org/10.1007/978-3-031-68934-5_15"
}

@article{coquand:2019,
  author = {Coquand, Thierry},
  year = {2019},
  doi = {10.1016/j.tcs.2019.01.015},
  eprint = {1810.09367},
  eprintclass = {cs.PL},
  eprinttype = {arXiv},
  issn = {0304-3975},
  journal = tcs,
  note = {In memory of Maurice Nivat, a founding father of Theoretical Computer Science - Part I},
  pages = {184--191},
  title = {Canonicity and normalization for dependent type theory},
  volume = {777},
}

@phdthesis{sterling:2021:thesis,
  author = {Sterling, Jonathan},
  school = {Carnegie Mellon University},
  year = {2021},
  doi = {10.5281/zenodo.6990769},
  note = {Version 1.1, revised May 2022},
  number = {CMU-CS-21-142},
  title = {First Steps in Synthetic {Tait} Computability: The Objective Metatheory of Cubical Type Theory},
}

@inbook{hofmann:1997, place={Cambridge}, series={Publications of the Newton Institute}, title={Syntax and Semantics of Dependent Types}, booktitle={Semantics and Logics of Computation}, publisher={Cambridge University Press}, author={Hofmann, Martin}, editor={Pitts, Andrew M. and Dybjer, P.Editors}, year={1997}, pages={79–130}, collection={Publications of the Newton Institute}}

@article{kock:1972,
  author = {Kock, Anders},
  date = {1972-12-01},
  doi = {10.1007/BF01304852},
  id = {Kock1972},
  isbn = {1420-8938},
  journal = {Archiv der Mathematik},
  number = {1},
  pages = {113--120},
  title = {Strong functors and monoidal monads},
  volume = {23},
  year = {1972},
  }

@book{johnstone:topos:1977,
    author = {Johnstone, Peter T.},
    publisher = {Academic Press},
    year = {1977},
    title = {Topos Theory},
  }

@book{johnstone:2002,
    author = {Johnstone, Peter T.},
    publisher = {Oxford Science Publications},
    year = {2002},
    number = {43},
    series = {Oxford Logical Guides},
    title = {Sketches of an Elephant: A Topos Theory Compendium: Volumes 1 and 2},
  }

@book{levy:2003:book,
    author = {Levy, Paul Blain},
    publisher = {Kluwer, Semantic Structures in Computation, 2},
    year = {2003},
    month = jan,
    isbn = {1-4020-1730-8},
    title = {Call-by-Push-Value: A Functional/Imperative Synthesis},
  }

@article{egger-moegelberg-simpson:2014,
      author = {Egger, Jeff and Møgelberg, Rasmus Ejlers and Simpson, Alex},
      title = {The enriched effect calculus: syntax and semantics},
      journal = {Journal of Logic and Computation},
      volume = {24},
      number = {3},
      pages = {615-654},
      year = {2014},
      month = {06},
      issn = {0955-792X},
      doi = {10.1093/logcom/exs025},
      url = {https://doi.org/10.1093/logcom/exs025},
      eprint = {https://academic.oup.com/logcom/article-pdf/24/3/615/2785623/exs025.pdf},
  }

@article{gambino-kock:2013,
    author = {Gambino, Nicola and Kock, Joachim},
    publisher = {Cambridge University Press},
    year = {2013},
    doi = {10.1017/S0305004112000394},
    journal = {Mathematical Proceedings of the Cambridge Philosophical Society},
    number = {1},
    pages = {153--192},
    title = {Polynomial functors and polynomial monads},
    volume = {154},
  }

@misc{ahman-bauer:2026,
      title={Sheaves as oracle computations},
      author={Danel Ahman and Andrej Bauer},
      year={2026},
      eprint={2602.22135},
      archivePrefix={arXiv},
      primaryClass={math.LO},
      url={https://arxiv.org/abs/2602.22135},
}

@article{taylor:2000,
author = {Taylor, Paul},
journal = {Theory and Applications of Categories},
pages = {284-338},
title = {Geometric and higher order logic in terms of abstract Stone duality.},
volume = {7},
number = {15},
year = {2000},
}

@online{lean4-reference-typesystem,
  title        = {The Lean Language Reference: The Type System},
  author = {Lean FRO},
  year         = {2025},
  url          = {https://lean-lang.org/doc/reference/latest/The-Type-System/},
}

@online{sterling:blog:fuss-free,
  title={Fuss-free universe hierarchies},
  author={Sterling, Jonathan},
  year = {2025},
  month = oct,
  day = {6},
  url = {https://www.jonmsterling.com/01HX/},
  note = {Blog post}
}

@online{kovacs2ltt-principles,
  author  = {Andr{\'a}s Kov{\'a}cs},
  title   = {PRINCIPLES.md},
  year    = {2024},
  url     = {https://github.com/AndrasKovacs/2ltt-impl/blob/f4e13a73c4fb14bbef7f7451974a52ea3f8d6d62/src/PRINCIPLES.md},
  urldate = {2026-07-05},
  note    = {GitHub repository \texttt{AndrasKovacs/2ltt-impl}, commit \texttt{f4e13a73c4fb14bbef7f7451974a52ea3f8d6d62}}
}

@article{hill:1996, title={Combinators for parsing expressions}, volume={6}, DOI={10.1017/S0956796800001799}, number={3}, journal={Journal of Functional Programming}, author={Hill, Steve}, year={1996}, pages={445–464}}

@article{hutton-meijer:1998, title={Monadic parsing in Haskell}, volume={8}, DOI={10.1017/S0956796898003050}, number={4}, journal={Journal of Functional Programming}, author={Hutton, Graham and Meijer, Erik}, year={1998}, pages={437–444}}

@inproceedings{fiore:2002,
  author = {Fiore, Marcelo P.},
  address = {Pittsburgh, PA, USA},
  publisher = {Association for Computing Machinery},
  booktitle = {Proceedings of the 4th ACM SIGPLAN International Conference on Principles and Practice of Declarative Programming},
  year = {2002},
  doi = {10.1145/571157.571161},
  isbn = {1-58113-528-9},
  pages = {26--37},
  series = {PPDP '02},
  title = {Semantic Analysis of Normalisation by Evaluation for Typed Lambda Calculus},
}

@inproceedings{altenkirch-hofmann-streicher:1995,
  author = {Altenkirch, Thorsten and Hofmann, Martin and Streicher, Thomas},
  editor = {Pitt, David and Rydeheard, David E. and Johnstone, Peter},
  address = {Berlin, Heidelberg},
  publisher = {Springer Berlin Heidelberg},
  booktitle = {Category Theory and Computer Science},
  year = {1995},
  doi = {10.1007/3-540-60164-3_27},
  isbn = {978-3-540-44661-3},
  pages = {182--199},
  title = {Categorical reconstruction of a reduction free normalization proof},
}

@inbook{nordstrom-petersson-smith:2001, author = {Nordstr\"{o}m, B. and Petersson, K. and Smith, J. M.}, title = {Martin-L\"{o}f's type theory}, year = {2001}, isbn = {0198537816}, publisher = {Oxford University Press, Inc.}, address = {USA}, booktitle = {Handbook of Logic in Computer Science: Volume 5: Logic and Algebraic Methods}, pages = {1–32}, numpages = {32} }

@misc{martin-lof:1986,
  author = {Martin-L\"{o}f, Per},
  year = {1986},
  note = {Notes from a lecture given in G\"{o}teborg},
  title = {{Amendment to Intuitionistic Type Theory}},
}

@article{dunfield-krishnaswami:2021,
author = {Dunfield, Jana and Krishnaswami, Neel},
title = {Bidirectional Typing},
year = {2021},
issue_date = {June 2022},
publisher = {Association for Computing Machinery},
address = {New York, NY, USA},
volume = {54},
number = {5},
issn = {0360-0300},
url = {https://doi.org/10.1145/3450952},
doi = {10.1145/3450952},
journal = {ACM Comput. Surv.},
month = may,
articleno = {98},
numpages = {38}
}

@misc{mihejevs-hedges:2025,
      title={Canonical bidirectional typechecking},
      author={Zanzi Mihejevs and Jules Hedges},
      year={2025},
      eprint={2512.07511},
      archivePrefix={arXiv},
      primaryClass={cs.PL},
      url={https://arxiv.org/abs/2512.07511},
}

@article{fiore-et-al:2016,
  author  = {Fiore, Marcelo and Gambino, Nicola and Hyland, Martin and Winskel, Glynn},
  title   = {Relative pseudomonads, Kleisli bicategories, and substitution monoidal structures},
  journal = {Selecta Mathematica - New Series, 24 (3), 2018, pp. 2791-2830},
  year    = {2016},
  doi     = {10.1007/s00029-017-0361-3}
}

@misc{arkor-saville-slattery:2025,
      title={Bicategories of algebras for relative pseudomonads},
      author={Arkor, Nathanael and Saville, Philip and Slattery, Andrew},
      year={2025},
      eprint={2501.12510},
      archivePrefix={arXiv},
      primaryClass={math.CT},
      url={https://arxiv.org/abs/2501.12510},
}

@article{felicissimo:2025,
author = {Felicissimo, Thiago},
title = {Generic Bidirectional Typing for Dependent Type Theories},
year = {2025},
issue_date = {March 2025},
publisher = {Association for Computing Machinery},
address = {New York, NY, USA},
volume = {47},
number = {1},
issn = {0164-0925},
url = {https://doi.org/10.1145/3715095},
doi = {10.1145/3715095},
journal = {ACM Trans. Program. Lang. Syst.},
month = apr,
articleno = {2},
numpages = {42}
}

@article{johnson:1975:yacc,
author="Johnson, S. C.",
title="YACC: Yet another compiler-compiler",
journal="Computing Science Technical Report",
year="1975",
volume="32",
URL="https://cir.nii.ac.jp/crid/1574231874191646464"
}

@article{pottier:2006,
title = {Towards Efficient, Typed LR Parsers},
journal = {Electronic Notes in Theoretical Computer Science},
volume = {148},
number = {2},
pages = {155-180},
year = {2006},
note = {Proceedings of the ACM-SIGPLAN Workshop on ML (ML 2005)},
issn = {1571-0661},
doi = {https://doi.org/10.1016/j.entcs.2005.11.044},
url = {https://www.sciencedirect.com/science/article/pii/S1571066106001307},
author = {François Pottier and Yann Régis-Gianas},
}

@techreport{leijen-meijer:2001,
author = {Leijen, Daan and Meijer, Erik},
title = {Parsec: Direct Style Monadic Parser Combinators for the Real World},
year = {2001},
month = {July},
url = {https://www.microsoft.com/en-us/research/publication/parsec-direct-style-monadic-parser-combinators-for-the-real-world/},
number = {UU-CS-2001-27},
note = {User Modeling 2007, 11th International Conference, UM 2007, Corfu, Greece, June 25-29, 2007},
}

@inproceedings{synthlean:cpp,
author = {Nawrocki, Wojciech and Hua, Joseph and Carneiro, Mario and Xu, Yiming and Woolfson, Spencer and Rong, Shuge and Hazratpour, Sina and Awodey, Steve},
title = {A Certifying Proof Assistant for Synthetic Mathematics in Lean},
year = {2026},
isbn = {9798400723414},
publisher = {Association for Computing Machinery},
address = {New York, NY, USA},
doi = {10.1145/3779031.3779087},
booktitle = {Proceedings of the 15th ACM SIGPLAN International Conference on Certified Programs and Proofs},
pages = {88–103},
numpages = {16},
location = {Rennes, France},
series = {CPP '26}
}
\bibliographystyle{ACM-Reference-Format}

\clearpage\appendix
\section{The partiality relative 2-monad}

\begin{definition}[Lax multimorphism]\label{def:lax-multimap}
	Let $K$ be compact, $\prn{A_k}_{k:K}$ a family of algebras, and $B$ an algebra. A monotone map $f\colon\prod_{k:K}A_k\to B$ is a \emph{lax multimorphism} when, for every family of maps $a_k\colon X_k\to A_k$ and every $u:\prod_k\PMC\prn{X_k}$,
	\[
	x \gets \prn{\Scope~k:K~\In~u_k};\ f\prn{\lambda k\mathpunct{.}a_k\prn{x_k}}
	\preccurlyeq_B
	f\prn{\lambda k\mathpunct{.}\ x\gets u_k;\ a_k\prn{x}}\text{.}
	\]
	It is \emph{multilinear} in the sense of Definition~\ref{def:multilinear} when this inequality is an equality.
\end{definition}

Intuitively, whenever the bundled evaluation is defined, so is the pointwise one, and the two agree; equivalently, the bundled side has the smaller support. %
It transpires in the case of the partiality monad that the lax comparison holds with no hypothesis on the codomain.

\begin{lemma}[Least extension]\label{lem:least-extension}
	Let $A$ be an algebra, $f\colon X\to A$, $u:\PMC\prn{X}$, and $a:A$. Then
	\[
	f^\dagger\prn{u}\preccurlyeq a
	\Longleftrightarrow
	\forall p:u\IsDef\mathpunct{.}
	f\prn{u_p}\preccurlyeq a\text{.}
	\]
	That is, $f^\dagger\prn{u}$ is the least element of $A$ that dominates $f$ at the generic point of $u$.
\end{lemma}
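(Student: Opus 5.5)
We prove the two implications separately, using only the algebra laws of Definition~\ref{def:algebra}, monotonicity of the extension operator, and the fact that $\preccurlyeq_{\PMC\prn{X}}$ is the standard partial ordering of Definition~\ref{def:pmc-poset}. The key observation is that any $u:\PMC\prn{X}$ satisfies $\eta_X\prn{u_p}= u$ whenever $p:u\IsDef$. Indeed, $u\IsDef$ is then true, so $u$ and $\eta_X\prn{u_p}$ have the same support and the same value. Hence $u\preccurlyeq\eta_X\prn{u_p}$ trivially, and in fact equality holds under $p$.

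\emph{Forward direction.} Assume $f^\dagger\prn{u}\preccurlyeq a$ and fix $p:u\IsDef$. By the observation, $u=\eta_X\prn{u_p}$ in this scope. The unit law gives $f^\dagger\prn{\eta_X\prn{u_p}}=f\prn{u_p}$, so $f\prn{u_p}=f^\dagger\prn{u}\preccurlyeq a$.

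\emph{Backward direction.} This is the main step, because $a$ is an arbitrary element of an abstract algebra rather than a partial element. The idea is to show $f^\dagger\prn{u}\preccurlyeq \prn{\lambda\_\mathpunct{.}a}^\dagger\prn{u}\preccurlyeq a$.
\begin{enumerate}
\item \emph{First inequality.} Consider the subtype $X_u\coloneq\brc{x:X\mid \exists p:u\IsDef\mathpunct{.}x=u_p}$; it is a subsingleton. The element $u$ factors as $u=\PMC\prn{i}\prn{u'}$ with $u':\PMC\prn{X_u}$ and $i\colon X_u\hookrightarrow X$ the inclusion. The naturality footnote to Definition~\ref{def:algebra} gives $f^\dagger\prn{u}=\prn{f\circ i}^\dagger\prn{u'}$, and likewise for the constant map. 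On $X_u$ the hypothesis gives $f\circ i\preccurlyeq \lambda\_\mathpunct{.}a$ pointwise, since every $x:X_u$ equals some $u_p$. Monotonicity of extension in its argument (part of Definition~\ref{def:algebra}) then yields $\prn{f\circ i}^\dagger\prn{u'}\preccurlyeq\prn{\lambda\_\mathpunct{.}a}^\dagger\prn{u'}$.
\item \emph{Second inequality.} We need $\prn{\lambda\_\mathpunct{.}a}^\dagger\prn{u'}\preccurlyeq a$. Write the constant map as $\prn{\lambda\_\mathpunct{.}a}=c\circ !$ where $c\colon\mathbf{1}\to A$ picks $a$. Naturality gives $\prn{c\circ !}^\dagger\prn{u'}=c^\dagger\prn{\PMC\prn{!}\prn{u'}}$. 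In $\PMC\prn{\mathbf{1}}$ we have $\PMC\prn{!}\prn{u'}\preccurlyeq\eta_{\mathbf{1}}\prn{*}$, since $\eta$ has support $\top$ and $\mathbf{1}$ has a unique element. Monotonicity of $c^\dagger$ as a map out of $\PMC\prn{\mathbf{1}}$, followed by the unit law, gives $c^\dagger\prn{\PMC\prn{!}\prn{u'}}\preccurlyeq c^\dagger\prn{\eta_{\mathbf{1}}\prn{*}}=c\prn{*}=a$.
\end{enumerate}
Chaining the two inequalities by transitivity gives $f^\dagger\prn{u}\preccurlyeq a$.

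The anticipated obstacle is the first inequality: pointwise comparison of $f$ with the constant map holds only on the image of $u$, not on all of $X$. Restricting to $X_u$ as above resolves this. Alternatively, we could substitute $u$'s own support type $u\IsDef$ for $X$ and use $u=\PMC\prn{u_{\prn{-}}}\prn{\boldsymbol{\gamma}}$ with $\boldsymbol{\gamma}:\PMC\prn{u\IsDef}$ the generic element.
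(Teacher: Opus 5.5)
Your proposal is correct and follows essentially the same route as the paper. The forward direction uses $u=\eta_X\prn{u_p}$ and the unit law, and the backward direction chains naturality, local monotonicity of $\prn{-}^\dagger$, the comparison $\PMC\prn{!}\prn{-}\preccurlyeq\eta_{\mathbf{1}}\prn{\star}$, monotonicity of $\bar a^\dagger$ and the unit law; the only cosmetic difference is that you restrict along the image subtype $X_u$, whereas the paper uses exactly the alternative you mention at the end, pulling back along $s\colon u\IsDef\to X$ with the generic element $\prn{u\IsDef,\mathrm{id}}$.
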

\begin{proof}
	$\prn{\Rightarrow}$ Under $p:u\IsDef$ we have $u=\eta_X\prn{u_p}$, so $f\prn{u_p}=f^\dagger\prn{u}\preccurlyeq a$ by the unit law.
	$\prn{\Leftarrow}$ Put $\varphi\coloneq u\IsDef$ and $s\coloneq\lambda p\mathpunct{.}u_p\colon\varphi\to X$, so that $u=\PMC\prn{s}\prn{w}$ for the generic element $w\coloneq\prn{\varphi,\mathrm{id}_\varphi}$, and let $\bar a\colon\mathbf 1\to A$ pick out $a$. The hypothesis is exactly $f\circ s\preccurlyeq\bar a\circ{!}$, so naturality and local monotonicity of $\prn{-}^\dagger$ give
	\[
	f^\dagger\prn{u}=\prn{f\circ s}^\dagger\prn{w}\preccurlyeq\prn{\bar a\circ{!}}^\dagger\prn{w}=\bar a^\dagger\prn{\varphi,\star}\preccurlyeq\bar a^\dagger\prn[\big]{\eta_{\mathbf 1}\prn{\star}}=a\text{,}
	\]
	using $\PMC\prn{!}\prn{w}=\prn{\varphi,\star}\preccurlyeq\eta_{\mathbf 1}\prn{\star}$ for the last two steps and $\bar a^\dagger\circ\eta_{\mathbf 1}=\bar a$ at the end.
\end{proof}

\begin{proposition}[Automatic lax multilinearity]\label{prop:auto-lax}
	Let $K$ be compact, $\prn{A_k}_{k:K}$ a family of algebras, and $B$ \emph{any} algebra. Then every monotone map $f\colon\prod_{k:K}A_k\to B$ is a lax multimorphism.
\end{proposition}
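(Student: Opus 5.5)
The plan is to apply Lemma~\ref{lem:least-extension} to the left-hand side, which is an extension in $B$ applied to the partial element $w\coloneq\Scope~k:K~\In~u_k$ of $\prod_k X_k$. Writing $g\colon\prod_k X_k\to B$ for $g\prn{x}\coloneq f\prn{\lambda k\mathpunct{.}a_k\prn{x_k}}$, the left-hand side is $g^\dagger\prn{w}$. By Lemma~\ref{lem:least-extension}, to show $g^\dagger\prn{w}\preccurlyeq_B f\prn{\lambda k\mathpunct{.}\,x\gets u_k;\,a_k\prn{x}}$ it suffices to assume $p:w\IsDef$ and prove $g\prn{w_p}\preccurlyeq_B f\prn{\lambda k\mathpunct{.}\,x\gets u_k;\,a_k\prn{x}}$.

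Under such a $p$, Construction~\ref{con:lax-monoidal-compact} gives $p:\forall k:K\mathpunct{.}u_k\IsDef$ and $w_p=\lambda k\mathpunct{.}\prn{u_k}_{p\prn{k}}$. So the goal becomes
\[
  f\prn[\big]{\lambda k\mathpunct{.}a_k\prn{\prn{u_k}_{p\prn{k}}}}\preccurlyeq_B f\prn[\big]{\lambda k\mathpunct{.}\,x\gets u_k;\,a_k\prn{x}}\text{.}
\]
Since $f$ is monotone and the order on $\prod_k A_k$ is pointwise (Lemma~\ref{lem:products}), it is enough to show, for each $k$, that $a_k\prn{\prn{u_k}_{p\prn{k}}}\preccurlyeq_{A_k} a_k^\dagger\prn{u_k}$. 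In fact this is an equality. Under $p\prn{k}:u_k\IsDef$ we have $u_k=\eta_{X_k}\prn{\prn{u_k}_{p\prn{k}}}$, because the support is a true proposition and partial elements with equal supports and values coincide. The unit law of Definition~\ref{def:algebra} then gives $a_k^\dagger\prn{u_k}=a_k\prn{\prn{u_k}_{p\prn{k}}}$. This is the same step as the forward direction in the proof of Lemma~\ref{lem:least-extension}.

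The only delicate point is justifying $u_k=\eta\prn{\prn{u_k}_{p\prn{k}}}$ internally. This requires propositional extensionality for $\OProp$, identifying $u_k\IsDef$ with $\top$ once it is inhabited, together with function extensionality on $\varphi\to X_k$. Both are available in the extensional metalanguage, so I do not expect a genuine obstacle there.

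Note that compactness of $K$ is used only to form $w$ in the first place. No hypothesis on $B$ beyond its being an algebra is needed, because the argument relies only on the unit law for each $A_k$ and on Lemma~\ref{lem:least-extension} for $B$.
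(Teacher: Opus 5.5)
Your proof is correct and matches the paper's argument essentially step for step: you use Lemma~\ref{lem:least-extension} to reduce to the generic point of $\Scope~k:K~\In~u_k$, and there $u_k=\eta\prn{\prn{u_k}_{p\prn{k}}}$ together with the unit law in each $A_k$ makes both sides equal. The only difference is cosmetic: the paper states this equality outright, so your detour through the monotonicity of $f$ and the pointwise order is not needed.
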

\begin{proof}
	Fix maps $a_k\colon X_k\to A_k$ and $u:\prod_k\PMC\prn{X_k}$, and write $g\coloneq f\circ\prod_k a_k$ and $v\coloneq\Scope~k:K~\In~u_k$, so the two sides of the lax inequality are $g^\dagger\prn{v}$ and $b\coloneq f\prn{\lambda k\mathpunct{.}a_k^\dagger\prn{u_k}}$. By Lemma~\ref{lem:least-extension} it suffices to show $g\prn{v_p}\preccurlyeq b$ under $p:v\IsDef$. Now $v\IsDef=\forall k\mathpunct{.}u_k\IsDef$, so $p$ yields witnesses $p_k:u_k\IsDef$ with $v_p=\lambda k\mathpunct{.}\prn{u_k}_{p_k}$; then $u_k=\eta\prn{\prn{u_k}_{p_k}}$ gives $a_k^\dagger\prn{u_k}=a_k\prn{\prn{u_k}_{p_k}}$ by the unit law, so both $b$ and $g\prn{v_p}$ equal $f\prn{\lambda k\mathpunct{.}a_k\prn{\prn{u_k}_{p_k}}}$.
\end{proof}

\begin{corollary}
	Every monotone map $f\colon\prod_k A_k\to B$ carries a canonical lax multimorphism structure (Proposition~\ref{prop:auto-lax}), and $f$ is multilinear if and only if this canonical comparison is an equality. Checking multilinearity thus reduces to checking that a single inequality is an equality; as we shall see in \S~\ref{sec:multilinearity-of-elaboration}, this can fail for the binder-forming combinators unless the judgemental structure satisfies strengthening.
\end{corollary}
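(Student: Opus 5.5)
The corollary has two parts. The first part, existence of the canonical lax structure, is exactly Proposition~\ref{prop:auto-lax}: for every family $a_k\colon X_k\to A_k$ and every $u:\prod_k\PMC\prn{X_k}$, the comparison
\[
x\gets\prn{\Scope~k:K~\In~u_k};\ f\prn{\lambda k\mathpunct{.}a_k\prn{x_k}}
\preccurlyeq_B
f\prn{\lambda k\mathpunct{.}\ x\gets u_k;\ a_k\prn{x}}
\]
holds for any monotone $f$. It is ``canonical'' because it is a property rather than extra data. Since $B$ is a poset, there is at most one witness of the inequality, so no choices are involved.

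For the biconditional, I would compare Definition~\ref{def:multilinear} with Definition~\ref{def:lax-multimap}. Both quantify over the same data $\prn{X_k,a_k,u}$ and relate the same two elements of $B$: one asks for equality, the other for $\preccurlyeq_B$. If $f$ is multilinear, the equality holds, and it is in particular the canonical comparison. Conversely, if the canonical comparison is an equality for every choice of data, that is verbatim the multilinearity equation. No further argument is needed beyond antisymmetry of $\preccurlyeq_B$, which only enters to justify the phrase ``reduces to a single inequality''. Given the lax direction for free, equality is equivalent to the reverse inequality
\[
f\prn{\lambda k\mathpunct{.}\ x\gets u_k;\ a_k\prn{x}}\preccurlyeq_B x\gets\prn{\Scope~k:K~\In~u_k};\ f\prn{\lambda k\mathpunct{.}a_k\prn{x_k}}.
\]
This is the practical content one would use in \S~\ref{sec:multilinearity-of-elaboration}: to check multilinearity one only has to show the pointwise side is below the bundled side. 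When $B$ is free, this amounts to showing that definedness of the left side implies definedness of the right.

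I expect no real obstacle. The only delicate point is the forward-reference remark that the comparison can fail for binder-forming combinators without strengthening. That remark is illustrative rather than part of the claim, so I would not prove it here. If a justification were wanted, I would point to $\ElabPi$ with $A$ defined at some $\alpha$ and $B$ a constant family whose support is a proposition $\varphi$. The bundled side then has support $\varphi$, whereas the pointwise side, which goes through $\Scope~x:\Tm\prn{\alpha}$, has support $\forall x:\Tm\prn{\alpha}\mathpunct{.}\varphi$. These supports agree precisely when strengthening holds.
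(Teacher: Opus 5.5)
Your proposal is correct and follows the paper, which gives no separate argument for this corollary: the lax comparison is Proposition~\ref{prop:auto-lax}, and the biconditional is literally the closing clause of Definition~\ref{def:lax-multimap}. Your further reduction is also correct: by antisymmetry, multilinearity comes down to the reverse inequality, and for free $B$ this means ``pointwise defined implies bundled defined'', which is exactly what ``a single inequality'' in the statement means. One small precision for your optional $\ElabPi$ illustration: the support $\varphi$ must be carried by the partial input $u_B$, with $a_B$ returning a total constant family, because when every $u_k$ is total the two sides of the comparison coincide.
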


\begin{remark}[Lax-idempotency]\label{rem:lax-idempotent}
	Lemma~\ref{lem:least-extension} is the concrete face of the \emph{lax-idempotency} of $\PMC$~\citep{slattery:2023,slattery:2024:thesis}: extensions are left Kan extensions along the unit, so that every $1$-cell between algebras acquires a canonical lax structure. Proposition~\ref{prop:auto-lax} is the multi-ary instance of the same phenomenon. %
\end{remark}

\begin{remark}[The lax multicategory]\label{rem:lax-multicat}
	The lax multimorphisms of Definition~\ref{def:lax-multimap} likewise organise into a $\POSET$-enriched multicategory $\underline{\mathcal{L}}_{\mathrm{lax}}$, of which $\underline{\mathcal{L}}$ is the wide sub-multicategory of strict multimaps. By Proposition~\ref{prop:auto-lax} the multimaps of $\underline{\mathcal{L}}_{\mathrm{lax}}$ are simply \emph{all} monotone maps between products of algebras, so their identities and composition are inherited.
\end{remark}

\begin{lemma}[Multicomposition]\label{lem:composition}
	Let $f\colon\prod_{k:K}B_k\to C$ be multilinear and, for each $k:K$, let $g_k\colon\prod_{j:J_k}A_{k,j}\to B_k$ be multilinear. Then $h\coloneq f\circ\prod_{k}g_k\colon\prod_{\prn{k,j}:\widetilde{J}}A_{k,j}\to C$ is multilinear, where $\widetilde{J}=\prn{k:K}\times J_k$.
\end{lemma}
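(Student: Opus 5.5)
We prove that $h$ is multilinear by a direct calculation. Monotonicity of $h$ is immediate, since it is a composite of monotone maps, so all the content lies in the multilinearity equation. Fix a family of types $\prn{X_{k,j}}_{\prn{k,j}:\widetilde{J}}$, maps $a_{k,j}\colon X_{k,j}\to A_{k,j}$, and $u:\prod_{\prn{k,j}}\PMC\prn{X_{k,j}}$. We must show
\[
x\gets\prn{\Scope~\prn{k,j}:\widetilde{J}~\In~u_{k,j}};\ h\prn{\lambda\prn{k,j}\mathpunct{.}a_{k,j}\prn{x_{k,j}}}
=
h\prn{\lambda\prn{k,j}\mathpunct{.}\ x\gets u_{k,j};\ a_{k,j}\prn{x}}\text{.}
\]
The plan is to rewrite the left-hand side in three moves, one for each layer of structure.

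First, we split the $\widetilde{J}$-indexed scope into a nested scope using Lemma~\ref{lem:theta-comp}. This turns the left-hand side into
\[
x\gets\prn{\Scope~k:K~\In~\prn{\Scope~j:J_k~\In~u_{k,j}}};\ f\prn{\lambda k\mathpunct{.}g_k\prn{\lambda j\mathpunct{.}a_{k,j}\prn{x_{k,j}}}}\text{.}
\]
We now regard this expression as an instance of the outer multilinearity of $f$. Set $Y_k\coloneq\prod_{j:J_k}X_{k,j}$, $b_k\colon Y_k\to B_k$ with $b_k\prn{y}\coloneq g_k\prn{\lambda j\mathpunct{.}a_{k,j}\prn{y_j}}$, and $w_k\coloneq\Scope~j:J_k~\In~u_{k,j}:\PMC\prn{Y_k}$. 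One point needs care here. The variable $x$ bound against the nested scope is an element of $\prod_k\prod_j X_{k,j}$, so we must check that $x_{k,j}$ in the composite agrees with $\prn{x_k}_j$. This follows by unfolding the definition of $\boldsymbol{\vartheta}$ on both sides of Lemma~\ref{lem:theta-comp} via the currying isomorphism.

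Second, we apply multilinearity of $f$ to the data $\prn{Y_k,b_k,w_k}$. This rewrites the expression to
\[
f\prn{\lambda k\mathpunct{.}\ y\gets w_k;\ g_k\prn{\lambda j\mathpunct{.}a_{k,j}\prn{y_j}}}\text{.}
\]

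Third, inside each coordinate $k$, the term $y\gets\prn{\Scope~j:J_k~\In~u_{k,j}};\ g_k\prn{\lambda j\mathpunct{.}a_{k,j}\prn{y_j}}$ is exactly the left-hand side of the multilinearity equation for $g_k$. We therefore replace it by $g_k\prn{\lambda j\mathpunct{.}\ x\gets u_{k,j};\ a_{k,j}\prn{x}}$. The result is $f\prn{\lambda k\mathpunct{.}g_k\prn{\lambda j\mathpunct{.}\ x\gets u_{k,j};\ a_{k,j}\prn{x}}}$, which is by definition the right-hand side with $h$ unfolded.

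The only step that is more than bookkeeping is the first, namely the reindexing identity between the $\widetilde{J}$-indexed scope and the nested scope. This includes the naturality needed to transport the continuation along the isomorphism $\prod_{\widetilde{J}}X\cong\prod_k\prod_j X$, and it is where Lemma~\ref{lem:compact-sum} is implicitly used so that the $\widetilde{J}$-indexed scope makes sense at all. We would discharge this step by appealing to Lemma~\ref{lem:theta-comp} together with the naturality clause of Lemma~\ref{lem:theta-unit}, applied to that isomorphism. Once these are in place, the second and third moves are direct instances of the hypotheses, because those hypotheses quantify over arbitrary families of types and maps.
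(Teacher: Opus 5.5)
Your proof is correct and is essentially the paper's argument. The paper uses the same three steps: multilinearity of each $g_k$ with $b_k = g_k\circ\prod_j a_{k,j}$, then multilinearity of $f$, then Lemma~\ref{lem:theta-comp} to merge the nested scopes into the $\widetilde{J}$-indexed one. It writes them as a point-free chain from the pointwise side to the bundled side, whereas you run the same chain pointwise in the opposite direction, with extra (harmless) attention to the currying identification $\prod_{\widetilde{J}}X\cong\prod_k\prod_j X$ that the paper leaves implicit.
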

\begin{proof}
	Let $a_{k,j}\colon X_{k,j}\to A_{k,j}$ and put $b_k\coloneq g_k\circ\prod_j a_{k,j}$. Using functoriality of products, multilinearity of each $g_k$, multilinearity of $f$, and finally Lemma~\ref{lem:theta-comp}:
	\begin{align*}
		h\circ\textstyle\prod_{k,j}a_{k,j}^\dagger
		&=f\circ\textstyle\prod_k\prn[\big]{g_k\circ\prod_j a_{k,j}^\dagger}
		\\
		&=f\circ\textstyle\prod_k\prn[\big]{b_k^\dagger\circ\boldsymbol{\vartheta}_{J_k,X_k}}\\
		&=f\circ\prn[\big]{\textstyle\prod_k b_k^\dagger}\circ\prn[\big]{\textstyle\prod_k\boldsymbol{\vartheta}_{J_k,X_k}}\\
		&=\prn[\big]{f\circ\textstyle\prod_k b_k}^\dagger\circ\boldsymbol{\vartheta}_{K,\prn{\prod_j X}}\circ\prn[\big]{\textstyle\prod_k\boldsymbol{\vartheta}_{J_k,X_k}}\\
		&=\prn[\big]{h\circ\textstyle\prod_{k,j}a_{k,j}}^\dagger\circ\boldsymbol{\vartheta}_{\widetilde{J},X}\text{.}\qedhere
	\end{align*}
\end{proof}

\begin{theorem}[Multimaps out of free algebras]\label{thm:engine}
	  Let $\prn{Z_k}_{k:K}$ be types, $K$ compact, and $B$ an algebra. The assignment $g\mapsto g^\dagger \circ\boldsymbol{\vartheta}_{K,Z}$ is a bijection from maps $g\colon\prod_k Z_k\to B$ to multilinear maps $\prod_k\PMC\prn{Z_k}\to B$, with inverse $f\mapsto f\circ\prod_k\eta_{Z_k}$.
	\end{theorem}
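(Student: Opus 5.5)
We have to check three things: the forward map lands in multilinear maps, and the two assignments are mutually inverse in both directions.

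\emph{Step 1: $g^\dagger\circ\boldsymbol{\vartheta}_{K,Z}$ is multilinear.} Monotonicity is immediate, since $\boldsymbol{\vartheta}_{K,Z}$ is monotone (supports are combined by $\forall$, values pointwise) and $g^\dagger$ is monotone by Definition~\ref{def:algebra}. Now fix $a_k\colon X_k\to\PMC\prn{Z_k}$ and $u:\prod_k\PMC\prn{X_k}$. The right-hand side of the multilinearity equation is
\[
g^\dagger\prn[\big]{\boldsymbol{\vartheta}_{K,Z}\prn{\lambda k\mathpunct{.}a_k^\dagger\prn{u_k}}}.
\]
By Lemma~\ref{lem:scope-extension} this equals
\[
g^\dagger\prn[\big]{\prn{\boldsymbol{\vartheta}_{K,Z}\circ\textstyle\prod_k a_k}^\dagger\prn{\boldsymbol{\vartheta}_{K,X}\prn{u}}}.
\]
The composition law of the algebra $B$ then rewrites this as
\[
\prn[\big]{g^\dagger\circ\boldsymbol{\vartheta}_{K,Z}\circ\textstyle\prod_k a_k}^\dagger\prn{\boldsymbol{\vartheta}_{K,X}\prn{u}},
\]
which is exactly the left-hand side $x\gets\prn{\Scope~k:K~\In~u_k};\ f\prn{\lambda k\mathpunct{.}a_k\prn{x_k}}$ with $f=g^\dagger\circ\boldsymbol{\vartheta}_{K,Z}$.

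\emph{Step 2: left inverse.} We need $\prn{g^\dagger\circ\boldsymbol{\vartheta}_{K,Z}}\circ\prod_k\eta_{Z_k}=g$. By Lemma~\ref{lem:theta-unit}, $\boldsymbol{\vartheta}_{K,Z}\circ\prod_k\eta_{Z_k}=\eta_{\prod_k Z_k}$. The unit law of $B$ then gives $g^\dagger\circ\eta=g$.

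\emph{Step 3: right inverse.} This is the key step. Given a multilinear $f$, set $g\coloneq f\circ\prod_k\eta_{Z_k}$; we must show $f=g^\dagger\circ\boldsymbol{\vartheta}_{K,Z}$. Apply Definition~\ref{def:multilinear} with $X_k\coloneq Z_k$ and $a_k\coloneq\eta_{Z_k}$, where each $A_k=\PMC\prn{Z_k}$ is a free algebra whose extension is Kleisli extension. The multilinearity equation reads
\[
g^\dagger\prn{\boldsymbol{\vartheta}_{K,Z}\prn{u}}=f\prn{\lambda k\mathpunct{.}\eta_{Z_k}^\dagger\prn{u_k}}.
\]
The right unit law of the monad gives $\eta^\dagger=\mathrm{id}$, so the right-hand side is $f\prn{u}$ for every $u$, as required.

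No step looks like a real obstacle. The only subtlety is in Step 1: using the extension of the \emph{free} algebra on each $\PMC\prn{Z_k}$ in the multilinearity condition, and making sure Lemma~\ref{lem:scope-extension} is applied at the right families, namely the $Z_k$ as codomains and the $X_k$ as domains.
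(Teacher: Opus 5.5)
Your proof is correct and follows the paper's argument step for step: you get multilinearity of $g^\dagger\circ\boldsymbol{\vartheta}_{K,Z}$ from Lemma~\ref{lem:scope-extension} and the composition law of $B$, the first round trip from Lemma~\ref{lem:theta-unit} and the unit law, and the second round trip by instantiating multilinearity at $a_k\coloneq\eta_{Z_k}$ with $\eta^\dagger=\mathrm{id}$. Your explicit check that $g^\dagger\circ\boldsymbol{\vartheta}_{K,Z}$ is monotone is a small addition that the paper leaves implicit.
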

\begin{proof}
	  For the first claim, let $u_k\colon X_k\to\PMC\prn{Z_k}$; the free extension operator is $\prn{-}^{\dagger}$, so we must show $g^\dagger\circ\boldsymbol{\vartheta}_{K,Z}\circ\prod_k u_k^{\dagger}=\prn[\big]{g^\dagger\circ\boldsymbol{\vartheta}_{K,Z}\circ\prod_k u_k}^\dagger\circ\boldsymbol{\vartheta}_{K,X}$. By the algebra-associativity law the right-hand side equals $g^\dagger\circ\prn[\big]{\boldsymbol{\vartheta}_{K,Z}\circ\prod_k u_k}^{\dagger}\circ\boldsymbol{\vartheta}_{K,X}$, which coincides with the left-hand side by commutativity (Lemma~\ref{lem:scope-extension}). For the round-trips: $\prn{g^\dagger\circ\boldsymbol{\vartheta}_{K,Z}}\circ\prod_k\eta_{Z_k}=g^\dagger\circ\eta_{\prod Z}=g$ by Lemma~\ref{lem:theta-unit} and the unit law; conversely, evaluating the multilinearity condition for a multilinear $f$ at the maps $u_k\coloneq\eta_{Z_k}$, where $\eta_{Z_k}^{\dagger}=\mathrm{id}$, gives $f=\prn[\big]{f\circ\prod_k\eta_{Z_k}}^\dagger\circ\boldsymbol{\vartheta}_{K,Z}$.
	\end{proof}

\begin{corollary}\label{cor:criterion}
	  A map $f\colon\prod_k\PMC\prn{Z_k}\to B$ out of free algebras is multilinear if and only if $f=\prn[\big]{f\circ\prod_k\eta_{Z_k}}^\dagger\circ\boldsymbol{\vartheta}_{K,Z}$; that is, if and only if $f$ is the canonical multilinear extension of its restriction to total inputs.
\end{corollary}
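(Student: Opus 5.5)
The corollary follows directly from Theorem~\ref{thm:engine}; the plan is to prove the two directions separately.

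\emph{Forward direction.} Suppose $f$ is multilinear. Theorem~\ref{thm:engine} says that $g\mapsto g^\dagger\circ\boldsymbol{\vartheta}_{K,Z}$ and $f\mapsto f\circ\prod_k\eta_{Z_k}$ are mutually inverse between maps $\prod_k Z_k\to B$ and multilinear maps $\prod_k\PMC\prn{Z_k}\to B$. The round trip starting from a multilinear $f$ is the identity, so
\[
f=\prn[\big]{f\circ\textstyle\prod_k\eta_{Z_k}}^\dagger\circ\boldsymbol{\vartheta}_{K,Z}.
\]
Equivalently, we can instantiate the multilinearity condition of Definition~\ref{def:multilinear} with $X_k\coloneq Z_k$ and $a_k\coloneq\eta_{Z_k}$. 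The unit law for free algebras gives $\eta_{Z_k}^\dagger=\mathrm{id}$, so the right-hand side of the condition becomes $f\prn{u}$, and the condition reads off exactly the displayed equation.

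\emph{Reverse direction.} Suppose $f$ equals $g^\dagger\circ\boldsymbol{\vartheta}_{K,Z}$, where $g\coloneq f\circ\prod_k\eta_{Z_k}\colon\prod_k Z_k\to B$ is just some map. The first claim of Theorem~\ref{thm:engine} says every map of the form $g^\dagger\circ\boldsymbol{\vartheta}_{K,Z}$ is multilinear, so $f$ is multilinear. Monotonicity, which Definition~\ref{def:multilinear} presupposes, is automatic: $\boldsymbol{\vartheta}_{K,Z}$ is monotone, since its support $\forall k\mathpunct{.}f\prn{k}\IsDef$ grows with each $f\prn{k}$ and its values agree; and $g^\dagger$ is monotone by Definition~\ref{def:algebra}.

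\emph{Main obstacle.} There is essentially none; the proof is bookkeeping. The one point needing care is that $\eta_{Z_k}$ is a legitimate instance of the maps $a_k\colon X_k\to\PMC\prn{Z_k}$ quantified over in Definition~\ref{def:multilinear}, and that the free-algebra extension of $\eta$ is the identity. Both hold by the relative monad laws of Construction~\ref{con:L-relative-monad}.
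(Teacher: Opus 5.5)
Your proof is correct and follows the paper: the corollary is stated without a separate proof as an immediate consequence of Theorem~\ref{thm:engine}, whose proof already contains your forward argument (instantiating multilinearity at $a_k\coloneq\eta_{Z_k}$ and using $\eta_{Z_k}^\dagger=\mathrm{id}$), and whose first claim gives your reverse direction. Your check that $g^\dagger\circ\boldsymbol{\vartheta}_{K,Z}$ is monotone is correct bookkeeping that the paper leaves implicit.
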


\section{Proof of Theorem~\ref{thm:multilinearity-of-combinators}, multilinearity of elaboration combinators}

For a nullary combinator $\mathbf{1}\to A$, the multilinearity condition reduces to the unit law of the target algebra. Hence $\underline{\ElabVar}$, $\underline{\ElabRefl}$, $\underline{\ElabAns}$, $\underline{\ElabYes}$, and $\underline{\ElabNo}$ are multilinear.

\begin{proof}[Case (multilinearity of $\underline{\ElabThe}$)]
	We must verify the multilinearity condition from Definition~\ref{def:multilinear} for \emph{finite} arity $K=\mathbf{2}$. To that end, we fix the following maps and partial inputs
	\[
		A\colon X_A\to \TypScript,
		M\colon X_M\to \ChkScript,
		u_A:\PMC\prn{X_A},
		u_M:\PMC\prn{X_M}
	\]
	to check that
	\[
		\ElabThe~
			\prn{x_A\gets u_A; A\prn{x_A}}~
			\prn{x_M\gets u_M; M\prn{x_M}}
		=
		x_A\gets u_A;
		x_M\gets u_M;
		\ElabThe~A\prn{x_A}~M\prn{x_M}
		\text{.}
	\]

		We derive the above by calculation from left to right:

	\iblock{
	  \mhang{
  	  \ElabThe~
  		\prn{x_A\gets u_A; A\prn{x_A}}~
  		\prn{x_M\gets u_M; M\prn{x_M}}
	  }{
	    \commentrow{// by definition of\, $\ElabThe$}
	    \mrow{
	      =
	      \alpha\gets \prn{x_A\gets u_A; A\prn{x_A}};
		    s \gets \prn{x_M\gets u_M;M\prn{x_M}}\prn{\alpha};
		    \Return~\prn{\alpha,s}
	    }
	    \commentrow{// by definition of\, $\ChkScript$ as a product of algebras}
	    \mrow{
	      =
	      \alpha\gets \prn{x_A\gets u_A; A\prn{x_A}};
    		s \gets \prn{x_M\gets u_M; M\prn{x_M}\prn{\alpha}};
    		\Return~\prn{\alpha,s}
	    }
	    \commentrow{// by associativity and commutativity of\, $\PMC$}
	    \mrow{
  	    =
    		x_A\gets u_A;
    		x_M\gets u_M;
    		\alpha\gets A\prn{x_A};
    		s\gets M\prn{x_M}\prn{\alpha};
    		\Return~\prn{\alpha,s}
	    }
	    \commentrow{// by definition of\, $\ElabThe$}
	    \mrow{
  	    = x_A\gets u_A;
    		x_M\gets u_M;
    		\ElabThe~A\prn{x_A}~M\prn{x_M}
	    }
	    \qedhere
	  }
	}
\end{proof}

\begin{proof}[Case (multilinearity of $\underline{\ElabPair}$)]
	We fix the following maps and partial inputs
	\[
	M\colon X_M\to \ChkScript,
	N\colon X_N\to \ChkScript,
	u_M:\PMC\prn{X_M},
	u_N:\PMC\prn{X_N}
	\]
	to check that
	\[
	\ElabPair~
	\prn{x_M\gets u_M; M\prn{x_M}}~
	\prn{x_N\gets u_N; N\prn{x_N}}
	=
	x_M\gets u_M;
	x_N\gets u_N;
	\ElabPair~M\prn{x_M}~N\prn{x_N}
	\text{.}
	\]

	Because $\ChkScript$ is a pointwise product of algebras, two elements are equal if they are equal at every index. We fix an arbitrary target type $\Var{typ}:\Tp$ and calculate from left to right:

	\iblock{
	  \mhang{
  	  \ElabPair~
  		\prn{x_M\gets u_M; M\prn{x_M}}~
  		\prn{x_N\gets u_N; N\prn{x_N}}
  		~\Var{typ}
	  }{
	    \commentrow{// by definition of\, $\ElabPair$}
	    \mrow{
	      =\!\!\begin{array}[t]{l}
  	      \prn{\Var{base},\Var{fam}}\gets \Await~\TpSg^{-1}\prn{\Var{typ}}\\
  	      \Var{fst} \gets \prn[\big]{x_M\gets u_M; M\prn{x_M}}\prn{\Var{base}}\\
  	      \Var{snd} \gets \prn[\big]{x_N\gets u_N; N\prn{x_N}}\prn{\Var{fam}\prn{\Var{fst}}}\\
  	      \Return~\TmPair\prn{\Var{base},\Var{fam},\Var{fst},\Var{snd}}
	      \end{array}
	    }
	    \commentrow{// by definition of\, $\ChkScript$ as a product of algebras}
	    \mrow{
	      =\!\!\begin{array}[t]{l}
  	      \prn{\Var{base},\Var{fam}}\gets \Await~\TpSg^{-1}\prn{\Var{typ}}\\
  	      \Var{fst} \gets \prn[\big]{x_M\gets u_M; M\prn{x_M}\prn{\Var{base}}}\\
  	      \Var{snd} \gets \prn[\big]{x_N\gets u_N; N\prn{x_N}\prn{\Var{fam}\prn{\Var{fst}}}}\\
  	      \Return~\TmPair\prn{\Var{base},\Var{fam},\Var{fst},\Var{snd}}
	      \end{array}
	    }
	    \commentrow{// by associativity and commutativity of\, $\PMC$}
	    \mrow{
	      =\!\!\begin{array}[t]{l}
  	      x_M\gets u_M; x_N\gets u_N\\
  	      \prn{\Var{base},\Var{fam}}\gets \Await~\TpSg^{-1}\prn{\Var{typ}}\\
  	      \Var{fst} \gets M\prn{x_M}\prn{\Var{base}}\\
  	      \Var{snd} \gets N\prn{x_N}\prn{\Var{fam}\prn{\Var{fst}}}\\
  	      \Return~\TmPair\prn{\Var{base},\Var{fam},\Var{fst},\Var{snd}}
	      \end{array}
	    }
	    \commentrow{// by definition of\, $\ElabPair$}
	    \mrow{
	      = x_M\gets u_M; x_N\gets u_N; \ElabPair~M\prn{x_M}~N\prn{x_N}
	    }
	    \qedhere
	  }
	}
\end{proof}

\begin{proof}[Case (multilinearity of $\underline{\ElabConv}$)]
	We fix a map and partial input
	\[
	E\colon X\to \SynScript,
	u:\PMC\prn{X}
	\]
	to check that
	\[
	\ElabConv~\prn{x\gets u; E\prn{x}}
	=
	x\gets u;
	\ElabConv~E\prn{x}
	\text{.}
	\]

	Because $\ChkScript$ is a pointwise product of algebras, two elements are
	equal if they are equal at every index. We fix an arbitrary target type
	$\tau:\Tp$ and calculate from left to right:

	\iblock{
	  \mhang{\ElabConv~\prn{x\gets u; E\prn{x}}~\tau}{
	    \commentrow{// by definition of\, $\ElabConv$}
	    \mrow{=
  	    \prn{\sigma,s}\gets \prn{x\gets u; E\prn{x}};
  			\_\gets\Await~\sigma =_{\Tp} \tau;
  			\Return~s
	    }
	    \commentrow{// by associativity of\, $\PMC$}
	    \mrow{
	    	= x\gets u;
  			\prn{\sigma,s}\gets E\prn{x};
  			\_\gets\Await~\sigma =_{\Tp} \tau;
  			\Return~s
	    }
	    \commentrow{// by definition of\, $\ElabConv$}
	    \mrow{
	      = x\gets u; \ElabConv~E\prn{x}\prn{\tau}
	    }
	    \commentrow{// by definition of\, $\ChkScript$ as a product of algebras}
	    \mrow{
	      = \prn[\big]{x\gets u; \ElabConv~E\prn{x}}\prn{\tau}
	    }
	    \qedhere
	  }
	}
\end{proof}

\begin{proof}[Case (multilinearity of $\underline{\ElabApp}$)]
	We fix the following maps and partial inputs
	\[
	E\colon X_E\to \SynScript,
	M\colon X_M\to \ChkScript,
	u_E:\PMC\prn{X_E},
	u_M:\PMC\prn{X_M}
	\]
	to check that
	\[
	\ElabApp~
	\prn{x_E\gets u_E; E\prn{x_E}}~
	\prn{x_M\gets u_M; M\prn{x_M}}
	=
	x_E\gets u_E;
	x_M\gets u_M;
	\ElabApp~E\prn{x_E}~M\prn{x_M}
	\text{.}
	\]

	We calculate from left to right:

	\iblock{
	  \mhang{
  	  \ElabApp~
  			\prn{x_E\gets u_E; E\prn{x_E}}~
  			\prn{x_M\gets u_M; M\prn{x_M}}
	  }{
	    \commentrow{// by definition of\, $\ElabApp$}
	    \mrow{
	      =\!\!\begin{array}[t]{l}
	        \prn{\Var{typ},\Var{fun}}\gets \prn{x_E\gets u_E; E\prn{x_E}}\\
	        \prn{\Var{dom},\Var{fam}}\gets \Await~\TpPi^{-1}\prn{\Var{typ}}\\
	        \Var{arg}\gets \prn[\big]{x_M\gets u_M; M\prn{x_M}}\prn{\Var{dom}}\\
	        \Return~\prn{\Var{fam}\prn{\Var{arg}},\TmApp\prn{\Var{dom},\Var{fam},\Var{fun},\Var{arg}}}
	      \end{array}
	    }
	    \commentrow{// by definition of\, $\ChkScript$ as a product of algebras}
	    \mrow{
	      =\!\!\begin{array}[t]{l}
  	      \prn{\Var{typ},\Var{fun}}\gets \prn{x_E\gets u_E; E\prn{x_E}}\\
  	      \prn{\Var{dom},\Var{fam}}\gets \Await~\TpPi^{-1}\prn{\Var{typ}}\\
  	      \Var{arg}\gets \prn{x_M\gets u_M; M\prn{x_M}\prn{\Var{dom}}}\\
  	      \Return~\prn{\Var{fam}\prn{\Var{arg}},\TmApp\prn[\big]{\Var{dom},\Var{fam},\Var{fun},\Var{arg}}}
	      \end{array}
	    }
	    \commentrow{// by associativity and commutativity of\, $\PMC$}
	    \mrow{
	      =\!\!\begin{array}[t]{l}
	        x_E\gets u_E; x_M\gets u_M\\
	        \prn{\Var{typ},\Var{fun}}\gets E\prn{x_E}\\
	        \prn{\Var{dom},\Var{fam}}\gets \Await~\TpPi^{-1}\prn{\Var{typ}}\\
	        \Var{arg}\gets M\prn{x_M}\prn{\Var{dom}}\\
	        \Return~\prn{\Var{fam}\prn{\Var{arg}},\TmApp\prn{\Var{dom},\Var{fam},\Var{fun},\Var{arg}}}
	      \end{array}
	    }
	    \commentrow{// by definition of\, $\ElabApp$}
	    \mrow{
	      = x_E\gets u_E; x_M\gets u_M; \ElabApp~E\prn{x_E}~M\prn{x_M}
	    }
	    \qedhere
	  }
	}

\end{proof}

\begin{proof}[Case (multilinearity of $\underline{\ElabFst}$ and $\underline{\ElabSnd}$)]
	We fix
	$E\colon X\to\SynScript$ and $u:\PMC\prn{X}$ and calculate:

	\iblock{
	  \mhang{\ElabFst~\prn{x\gets u; E\prn{x}}}{
	    \commentrow{// by definition of\, $\ElabFst$}
	    \mrow{
	      =\!\!\begin{array}[t]{l}
  	      \prn{\Var{typ},\Var{pair}}\gets \prn{x\gets u; E\prn{x}}\\
  	      \prn{\Var{base},\Var{fam}}\gets \Await~\TpSg^{-1}\prn{\Var{typ}}\\
  	      \Return~\prn{\Var{base},\TmFst\prn{\Var{base},\Var{fam},\Var{pair}}}
	      \end{array}
	    }
	    \commentrow{// by associativity of\, $\PMC$}
	    \mrow{
	      =\!\!\begin{array}[t]{l}
	        x\gets u\\
  	      \prn{\Var{typ},\Var{pair}}\gets E\prn{x}\\
  	      \prn{\Var{base},\Var{fam}}\gets \Await~\TpSg^{-1}\prn{\Var{typ}}\\
  	      \Return~\prn{\Var{base},\TmFst\prn{\Var{base},\Var{fam},\Var{pair}}}
	      \end{array}
	    }
	    \commentrow{// by definition of\, $\ElabFst$}
	    \mrow{
	      = x\gets u; \ElabFst~E\prn{x}
	    }
	  }
	}

	The computation for $\ElabSnd$ is identical apart from the value
	returned.
\end{proof}

\begin{proof}[Case (multilinearity of $\underline{\ElabId}_l$)]
	We fix the following maps and partial inputs
	\[
	E\colon X_E\to \SynScript,
	M\colon X_M\to \ChkScript,
	u_E:\PMC\prn{X_E},
	u_M:\PMC\prn{X_M}
	\]
	to check that
	\[
	\ElabId_l~
	\prn{x_E\gets u_E; E\prn{x_E}}~
	\prn{x_M\gets u_M; M\prn{x_M}}
	=
	x_E\gets u_E;
	x_M\gets u_M;
	\ElabId_l~E\prn{x_E}~M\prn{x_M}
	\text{.}
	\]

	We compute from left to right:

	\iblock{
	  \mhang{
  	  \ElabId_l~
			\prn{x_E\gets u_E; E\prn{x_E}}~
			\prn{x_M\gets u_M; M\prn{x_M}}
	  }{
  	  \commentrow{// by definition of $\ElabId_l$}
  	  \mrow{
  	    =\!\!\begin{array}[t]{l}
  	      \prn{\Var{typ},\Var{lhs}}\gets \prn[\big]{x_E\gets u_E; E\prn{x_E}}\\
  	      \Var{rhs}\gets \prn[\big]{x_M\gets u_M; M\prn{x_M}}\prn{\Var{typ}}\\
  	      \Return~\TpId\prn{\Var{typ},\Var{lhs},\Var{rhs}}
  	    \end{array}
  	  }
  	  \commentrow{// by definition of $\ChkScript$ as a product of algebras}
  	  \mrow{
  	    =\!\!\begin{array}[t]{l}
  	      \prn{\Var{typ},\Var{lhs}}\gets \prn[\big]{x_E\gets u_E; E\prn{x_E}}\\
  	      \Var{rhs}\gets \prn[\big]{x_M\gets u_M; M\prn{x_M}\prn{\Var{typ}}}\\
  	      \Return~\TpId\prn{\Var{typ},\Var{lhs},\Var{rhs}}
  	    \end{array}
  	  }
  	  \commentrow{// by associativity and commutativity of\, $\PMC$}
  	  \mrow{
  	    =\!\!\begin{array}[t]{l}
  	      x_E\gets u_E; x_M\gets u_M\\
  	      \prn{\Var{typ},\Var{lhs}}\gets E\prn{x_E}\\
  	      \Var{rhs}\gets M\prn{x_M}\prn{\Var{typ}}\\
  	      \Return~\TpId\prn{\Var{typ},\Var{lhs},\Var{rhs}}
  	    \end{array}
  	  }
  	  \commentrow{// by definition of $\ElabId_l$}
  	  \mrow{
  	    = x_E\gets u_E; x_M\gets u_M; \ElabId_l~E\prn{x_E}~M\prn{x_M}
  	  }
  	  \qedhere
	  }
	}
\end{proof}

\begin{proof}[Case (multilinearity of $\underline{\ElabId}$)]
	We fix the following maps and partial inputs
	\[
	A\colon X_A\to \TypScript,
	M\colon X_M\to \ChkScript,
	N\colon X_N\to \ChkScript,
	u_A:\PMC\prn{X_A},
	u_M:\PMC\prn{X_M},
	u_N:\PMC\prn{X_N}
	\]
	to check that
	\[
	\begin{aligned}
		&\ElabId~
		\prn{x_A\gets u_A; A\prn{x_A}}~
		\prn{x_M\gets u_M; M\prn{x_M}}~
		\prn{x_N\gets u_N; N\prn{x_N}}
		\\
		&\quad=
		x_A\gets u_A;
		x_M\gets u_M;
		x_N\gets u_N;
		\ElabId~A\prn{x_A}~M\prn{x_M}~N\prn{x_N}
		\text{.}
	\end{aligned}
	\]

	We calculate from left to right:

	\iblock{
	  \mhang{
  	  \ElabId~
			\prn{x_A\gets u_A; A\prn{x_A}}~
			\prn{x_M\gets u_M; M\prn{x_M}}~
			\prn{x_N\gets u_N; N\prn{x_N}}
	  }{
	    \commentrow{// by Lemma~\ref{lem:id-in-terms-of-idl}}
	    \mrow{
	      =
	      \ElabId_l~\prn[\Big]{
	        \ElabThe~\prn[\big]{
	          x_A\gets u_A; A\prn{x_A}
	        }~\prn[\big]{x_M\gets u_M; M\prn{x_M}}
	      }
	      ~\prn[\big]{x_N\gets u_N; N\prn{x_N}}
	    }
	    \commentrow{// by multilinearity of\, $\ElabThe$}
	    \mrow{
	      =
	      \ElabId_l~\prn[\big]{
	        x_A\gets u_A; x_M\gets u_M;
	        \ElabThe~A\prn{x_A}~M\prn{x_M}
	      }
	      ~\prn[\big]{x_N\gets u_N; N\prn{x_N}}
	    }
	    \commentrow{// by multilinearity of\, $\ElabId_l$}
	    \mrow{
	      = x_A\gets u_A; x_M\gets u_M; x_N\gets u_N; \ElabId_l~\prn[\big]{\ElabThe~A\prn{x_A}~M\prn{x_M}}~N\prn{x_N}
	    }
	    \commentrow{// by Lemma~\ref{lem:id-in-terms-of-idl}}
	    \mrow{
  	    x_A\gets u_A;
	  	  x_M\gets u_M;
	    	x_N\gets u_N;
  		  \ElabId~A\prn{x_A}~M\prn{x_M}~N\prn{x_N}
	    }
	    \qedhere
	  }
	}
\end{proof}

\begin{proof}[Case (multilinearity of $\underline{\ElabRefl}'$)]
	We fix $E\colon X\to\SynScript$ and $u:\PMC\prn{X}$ and calculate:

	\iblock{
	  \mhang{
	    \ElabRefl'~\prn{x\gets u; E\prn{x}}
	  }{
	    \commentrow{// by definition of\, $\ElabRefl'$}
	    \mrow{
	      =
	      \prn{\Var{typ},\Var{tm}}\gets \prn[\big]{x\gets u; E\prn{x}};
	      \Return~\prn{\TpId\prn{\Var{typ},\Var{tm},\Var{tm}},\TmRefl\prn{\Var{typ},\Var{tm}}}
	    }
	    \commentrow{// by associativity of\, $\PMC$}
	    \mrow{
	      = x\gets u;
	      \prn{\Var{typ},\Var{tm}}\gets E\prn{x};
	      \Return~\prn{\TpId\prn{\Var{typ},\Var{tm},\Var{tm}},\TmRefl\prn{\Var{typ},\Var{tm}}}
	    }
	    \commentrow{// by definition of\, $\ElabRefl'$}
	    \mrow{
	      = x\gets u; \ElabRefl'~E\prn{x}
	    }
	    \qedhere
	  }
	}
\end{proof}

\begin{proof}[Case (multilinearity of $\underline{\ElabIdElim}$)]
	We fix the following maps and partial inputs:
	\[
	E\colon X_E\to \SynScript,
	C\colon X_C\to \HypScript^3\pitchfork\TypScript,
	M\colon X_M\to \HypScript\pitchfork\ChkScript,
	u_E:\PMC\prn{X_E},
	u_C:\PMC\prn{X_C},
	u_M:\PMC\prn{X_M}
	\]
	to check that
	\begin{align*}
		&\ElabIdElim~\prn{x_E\gets u_E; E\prn{x_E}}~\prn{x_C\gets u_C; C\prn{x_C}}~\prn{x_M\gets u_M; M\prn{x_M}}
		\\
		&\quad=
		x_E\gets u_E;\ x_C\gets u_C;\ x_M\gets u_M;\ \ElabIdElim~E\prn{x_E}~C\prn{x_C}~M\prn{x_M}\text{.}
	\end{align*}

	Write $\widetilde{J}\coloneq\prn{a:\Tm\prn{\Var{typ}}}\times\prn{b:\Tm\prn{\Var{typ}}}\times\Tm\prn{\TpId\prn{\Var{typ},a,b}}$, which is compact by two applications of Lemma~\ref{lem:compact-sum}. Then $\prn{\Var{lhs},\Var{rhs},\Var{target}}:\widetilde{J}$, and this inhabits the motive's scope, while $\Var{lhs}:\Tm\prn{\Var{typ}}$ inhabits the baseCase's scope.

	We calculate from left to right:

	\iblock{
		\mhang{
			\ElabIdElim~\prn{x_E\gets u_E; E\prn{x_E}}~\prn{x_C\gets u_C; C\prn{x_C}}~\prn{x_M\gets u_M; M\prn{x_M}}
		}{
			\commentrow{// by definition of\, $\ElabIdElim$, and definition of product algebras for $\HypScript^3\pitchfork\TypScript$ and $\HypScript\pitchfork\ChkScript$}
			\mrow{
				=\!\!\begin{array}[t]{l}
					\prn{\Var{idtyp},\Var{target}}\gets \prn{x_E\gets u_E; E\prn{x_E}}\\
					\prn{\Var{typ},\Var{lhs},\Var{rhs}}\gets \Await~\TpId^{-1}\prn{\Var{idtyp}}\\
					\Var{motive}\gets \Scope~\prn{a,b,p}:\widetilde{J}~\In~x_C\gets u_C;\ C\prn{x_C}\prn{\prn{\Var{typ},a}, \prn{\Var{typ},b}, \prn{\TpId\prn{\Var{typ},a,b},p}}\\
					\Var{baseCase}\gets \Scope~a:\Tm\prn{\Var{typ}}~\In~x_M\gets u_M;\ M\prn{x_M}\prn{a}\prn{\Var{motive}\prn{a,a,\TmRefl\prn{\Var{typ},a}}}\\
					\Return~\prn{\Var{motive}\prn{\Var{lhs},\Var{rhs},\Var{target}},\TmJ\prn{\Var{typ},\Var{lhs},\Var{rhs},\Var{target},\Var{motive},\Var{baseCase}}}
				\end{array}
			}
			\commentrow{// by Lemma~\ref{lem:inhabited-scope} (using the inhabitants fixed above), and associativity and commutativity of\, $\PMC$}
			\mrow{
				=\!\!\begin{array}[t]{l}
					x_E\gets u_E;\ x_C\gets u_C;\ x_M\gets u_M\\
					\prn{\Var{idtyp},\Var{target}}\gets E\prn{x_E}\\
					\prn{\Var{typ},\Var{lhs},\Var{rhs}}\gets \Await~\TpId^{-1}\prn{\Var{idtyp}}\\
					\Var{motive}\gets \Scope~\prn{a,b,p}:\widetilde{J}~\In~C\prn{x_C}\prn{\prn{\Var{typ},a}, \prn{\Var{typ},b}, \prn{\TpId\prn{\Var{typ},a,b},p}}\\
					\Var{baseCase}\gets \Scope~a:\Tm\prn{\Var{typ}}~\In~M\prn{x_M}\prn{a}\prn{\Var{motive}\prn{a,a,\TmRefl\prn{\Var{typ},a}}}\\
					\Return~\prn{\Var{motive}\prn{\Var{lhs},\Var{rhs},\Var{target}},\TmJ\prn{\Var{typ},\Var{lhs},\Var{rhs},\Var{target},\Var{motive},\Var{baseCase}}}
				\end{array}
			}
			\commentrow{// by definition of\, $\ElabIdElim$}
			\mrow{
				= x_E\gets u_E; x_C\gets u_C; x_M\gets u_M; \ElabIdElim~E\prn{x_E}~C\prn{x_C}~M\prn{x_M}
			}
			\qedhere
		}
	}
\end{proof}

Now suppose that the judgemental structure satisfies strengthening.

\begin{proof}[Case (multilinearity of $\underline{\ElabPi}$)]
	We fix the following maps and partial inputs:
	\[
		A\colon X_A\to \TypScript,
		B\colon X_B\to \HypScript\pitchfork\TypScript,
		u_A:\PMC\prn{X_A},
		u_B:\PMC\prn{X_B}
	\]
	to check that
	\[
		\ElabPi~\prn{x_A\gets u_A; A\prn{x_A}}~\prn{x_B\gets u_B; B\prn{x_B}} =
		x_A\gets u_A; x_B \gets u_B; \ElabPi~A\prn{x_A}~B\prn{x_B}\text{.}
	\]

	We calculate from left to right:

	\iblock{
	  \mhang{
  	  \ElabPi~\prn{x_A\gets u_A; A\prn{x_A}}~\prn{x_B\gets u_B; B\prn{x_B}}
	  }{
	    \commentrow{// by definition of\, $\ElabPi$}
	    \mrow{
	      =\!\!\begin{array}[t]{l}
  	      \Var{dom}\gets\prn[\big]{x_A\gets u_A; A\prn{x_A}}\\
  	      \Var{fam}\gets\Scope~a:\Tm\prn{\Var{dom}}~\In~\prn[\big]{x_B\gets u_B; B\prn{x_B}}\prn{a}\\
  	      \Return~\TpPi\prn{\Var{dom},\Var{fam}}
	      \end{array}
	    }
	    \commentrow{// because $\HypScript\pitchfork\TypScript$ is a pointwise product of algebras}
	    \mrow{
	      =\!\!\begin{array}[t]{l}
    	    \Var{dom}\gets\prn[\big]{x_A\gets u_A; A\prn{x_A}}\\
    	    \Var{fam}\gets\Scope~a:\Tm\prn{\Var{dom}}~\In~\prn[\big]{x_B\gets u_B; B\prn{x_B}\prn{a}}\\
    	    \Return~\TpPi\prn{\Var{dom},\Var{fam}}
    	  \end{array}
	    }
	    \commentrow{// by associativity of $\PMC$ and Lemma~\ref{lem:scope-extension}}
	    \mrow{
	      =\!\!\begin{array}[t]{l}
	        x_A\gets u_A\\
	        \Var{dom}\gets A\prn{x_A}\\
	        x_B\gets \Scope~a:\Tm\prn{\Var{dom}}~\In~u_B\\
	        \Var{fam}\gets\Scope~a:\Tm\prn{\Var{dom}}~\In~B\prn{x_B\prn{a}}\prn{a}\\
	        \Return~\TpPi\prn{\Var{dom},\Var{fam}}
    	  \end{array}
	    }
	    \commentrow{// by strengthening and commutativity of\, $\PMC$}
	    \mrow{
	      =\!\!\begin{array}[t]{l}
	        x_A\gets u_A; x_B\gets u_B\\
	        \Var{dom}\gets A\prn{x_A}\\
	        \Var{fam}\gets\Scope~a:\Tm\prn{\Var{dom}}~\In~B\prn{x_B}\prn{a}\\
	        \Return~\TpPi\prn{\Var{dom},\Var{fam}}
    	  \end{array}
	    }
	    \commentrow{// by definition of\, $\ElabPi$}
	    \mrow{
	      = x_A\gets u_A; x_B \gets u_B; \ElabPi~A\prn{x_A}~B\prn{x_B}
	    }
  	  \qedhere
	  }
	}
\end{proof}

The case for $\underline{\ElabSg}$ is identical.

\begin{proof}[Case (multilinearity of $\underline{\ElabLam}$)]
  We fix the following maps and partial inputs:
  \[
    M\colon X_M\to \HypScript\pitchfork\ChkScript, u_M:\PMC\prn{X_M}
  \]
  to check that
  \[
    \ElabLam~\prn{x_M\gets u_M; M\prn{x_M}} =
    x_M\gets u_M; \ElabLam~M\prn{x_M}
  \]

    We calculate from left to right:

  \iblock{
    \mhang{
      \ElabLam~\prn{x_M\gets u_M; M\prn{x_M}}~\Var{typ}
    }{
      \commentrow{// by definition of\, $\ElabLam$}
      \mrow{
        =\!\!\begin{array}[t]{l}
          \prn{\Var{dom},\Var{fam}}\gets\Await~\TpPi^{-1}\prn{\Var{typ}}\\
          \Var{body}\gets \Scope~a:\Tm\prn{\Var{dom}}~\In~\prn[\big]{x_M\gets u_M; M\prn{x_M}}\prn{\Var{dom},a}\prn{\Var{fam}\prn{a}}\\
          \Return~\TmLam\prn{\Var{dom},\Var{fam},\Var{body}}
        \end{array}
      }
      \commentrow{// because $\HypScript\pitchfork\ChkScript$ is a product of products of algebras}
      \mrow{
        =\!\!\begin{array}[t]{l}
          \prn{\Var{dom},\Var{fam}}\gets\Await~\TpPi^{-1}\prn{\Var{typ}}\\
          \Var{body}\gets \Scope~a:\Tm\prn{\Var{dom}}~\In~\prn[\big]{x_M\gets u_M; M\prn{x_M}\prn{\Var{dom},a}\prn{\Var{fam}\prn{a}}}\\
          \Return~\TmLam\prn{\Var{dom},\Var{fam},\Var{body}}
        \end{array}
      }
      \commentrow{// by Lemma~\ref{lem:scope-extension}}
      \mrow{
        =\!\!\begin{array}[t]{l}
          \prn{\Var{dom},\Var{fam}}\gets\Await~\TpPi^{-1}\prn{\Var{typ}}\\
          x_M\gets\Scope~a:\Tm\prn{\Var{dom}}~\In~u_M\\
          \Var{body}\gets \Scope~a:\Tm\prn{\Var{dom}}~\In~M\prn{x_M\prn{a}}\prn{\Var{dom},a}\prn{\Var{fam}\prn{a}}\\
          \Return~\TmLam\prn{\Var{dom},\Var{fam},\Var{body}}
        \end{array}
      }
      \commentrow{// by strengthening and commutativity of\, $\PMC$}
      \mrow{
        =\!\!\begin{array}[t]{l}
          x_M\gets u_M\\
          \prn{\Var{dom},\Var{fam}}\gets\Await~\TpPi^{-1}\prn{\Var{typ}}\\
          \Var{body}\gets \Scope~a:\Tm\prn{\Var{dom}}~\In~M\prn{x_M}\prn{\Var{dom},a}\prn{\Var{fam}\prn{a}}\\
          \Return~\TmLam\prn{\Var{dom},\Var{fam},\Var{body}}
        \end{array}
      }
      \commentrow{// by definition of\, $\ElabLam$}
      \mrow{=x_M\gets u_M; \ElabLam~M\prn{x_M}}
      \qedhere
    }
  }
\end{proof}

\begin{proof}[Case (multilinearity of $\underline{\ElabLam}'$)]
	We fix the following maps and partial inputs:
	\[
	A\colon X_A\to \TypScript,
	E\colon X_E\to \HypScript\pitchfork\SynScript,
	u_A:\PMC\prn{X_A},
	u_E:\PMC\prn{X_E}
	\]
	to check that
	\[
	\ElabLam'~\prn{x_A\gets u_A; A\prn{x_A}}~\prn{x_E\gets u_E; M\prn{x_E}}
	=
	x_A\gets u_A;
	x_E\gets u_E;
	\ElabLam'~A\prn{x_A}~E\prn{x_E}\text{.}
	\]

	We calculate from left to right:

	\iblock{
		\mhang{
			\ElabLam'~\prn{x_A\gets u_A; A\prn{x_A}}~\prn{x_E\gets u_E; M\prn{x_E}}
		}{
			\commentrow{// by definition of\, $\ElabLam'$}
			\mrow{
				=
				\!\!\begin{array}[t]{l}
					\Var{dom}\gets \prn{x_A\gets u_A;A\prn{x_A}}\\
					\Var{body}\gets \Scope~a:\Tm\prn{\Var{dom}}~\In~\prn[\big]{x_E\gets u_E;E\prn{x_E}}\prn{\Var{dom},a}\\
					\Return~\prn{\TpPi\prn{\Var{dom},\pi_1\circ\Var{body}},\TmLam\prn{\Var{dom},\pi_1\circ\Var{body},\pi_2\circ\Var{body}}}
				\end{array}
			}
			\commentrow{// by definition of $\HypScript\pitchfork\SynScript$ as a product of algebras}
			\mrow{
				=
				\!\!\begin{array}[t]{l}
					\Var{dom}\gets \prn{x_A\gets u_A;A\prn{x_A}}\\
					\Var{body}\gets \Scope~a:\Tm\prn{\Var{dom}}~\In~\prn[\big]{x_E\gets u_E;E\prn{x_E}\prn{\Var{dom},a}}\\
					\Return~\prn{\TpPi\prn{\Var{dom},\pi_1\circ\Var{body}},\TmLam\prn{\Var{dom},\pi_1\circ\Var{body},\pi_2\circ\Var{body}}}
				\end{array}
			}
			\commentrow{// by Lemma~\ref{lem:scope-extension}}
			\mrow{
				=
				\!\!\begin{array}[t]{l}
					\Var{dom}\gets \prn{x_A\gets u_A;A\prn{x_A}}\\
					x_E\gets \Scope~a:\Tm\prn{\Var{dom}}~\In~u_E\\
					\Var{body}\gets \Scope~a:\Tm\prn{\Var{dom}}~\In~E\prn{x_E\prn{a}}\prn{\Var{dom},a}\\
					\Return~\prn{\TpPi\prn{\Var{dom},\pi_1\circ\Var{body}},\TmLam\prn{\Var{dom},\pi_1\circ\Var{body},\pi_2\circ\Var{body}}}
				\end{array}
			}
			\commentrow{// by strengthening}
			\mrow{
				=
				\!\!\begin{array}[t]{l}
					\Var{dom}\gets \prn{x_A\gets u_A;A\prn{x_A}}\\
					x_E\gets u_E\\
					\Var{body}\gets \Scope~a:\Tm\prn{\Var{dom}}~\In~E\prn{x_E}\prn{\Var{dom},a}\\
					\Return~\prn{\TpPi\prn{\Var{dom},\pi_1\circ\Var{body}},\TmLam\prn{\Var{dom},\pi_1\circ\Var{body},\pi_2\circ\Var{body}}}
				\end{array}
			}
			\commentrow{// associativity and commutativity of\, $\PMC$}
			\mrow{
				=
				\!\!\begin{array}[t]{l}
					x_A\gets u_A\\
					x_E\gets u_E\\
					\Var{dom}\gets A\prn{x_A}\\
					\Var{body}\gets \Scope~a:\Tm\prn{\Var{dom}}~\In~E\prn{x_E}\prn{\Var{dom},a}\\
					\Return~\prn{\TpPi\prn{\Var{dom},\pi_1\circ\Var{body}},\TmLam\prn{\Var{dom},\pi_1\circ\Var{body},\pi_2\circ\Var{body}}}
				\end{array}
			}
			\commentrow{// by definition of\, $\ElabLam'$}
			\mrow{
				= x_A\gets u_A; x_E\gets u_E; \ElabLam'~A\prn{x_A}~E\prn{x_E}
			}
			\qedhere
		}
	}
\end{proof}

\section{Elaboration procedures for terms}

\begin{procedure}[Elaborating a term in synthesis mode]
  Let $\Psi:\Tele_n\prn{\mathbf{1}}$ be a closed telescope. For any syntactic synthesis-mode term elaboration script
  $
    x_1 : \HypScript,\ldots,x_n:\HypScript
    \vdash
    E\prn{x_1,\ldots,x_n}:\SynScript
  $
  we have a corresponding partial element
  $
    \brk[\big]{\Psi\vdash \mathcal{E}\bbrk{E}} \colon \PMC\prn[\big]{\Env_n\prn{\Psi}\to \widetilde{\Tm}}\prn{\mathbf{1}}\text{.}
  $
  We calculate what such a global section amounts to:
  \begin{align*}
    &\PMC\prn[\big]{\Env_n\prn{\Psi}\to \widetilde{\Tm}}\prn{\mathbf{1}}
    \\
    &\quad\cong
    \prn{\varphi:\OProp\prn{\mathbf{1}}}
    \times
    \textstyle\int_{\Gamma\in \varphi}
    \prn[\big]{\Env_n\prn{\Psi}\to\widetilde{\Tm}}\prn{\Gamma}
    \\
    &\quad\cong
    \prn{\varphi:\OProp\prn{\mathbf{1}}}
    \times
    \textstyle\int_{\Gamma\in\varphi}
    \widetilde{\Tm}\prn{\Gamma\ldots\Psi}
    && \text{by Lemma~\ref{lem:env-representable} via Yoneda}
    \\
    &\quad\cong
    \prn{\varphi:\OProp\prn{\mathbf{1}}}
    \times
    \textstyle\int_{\Gamma\in\varphi}
    \prn{\alpha:\Tp\prn{\Gamma\ldots\Psi}}\times
    \Tm\prn{\Gamma\ldots\Psi,\alpha}
  \end{align*}
\end{procedure}

\begin{procedure}[Elaborating a term in checking mode]
  Let $\Psi:\Tele_n\prn{\mathbf{1}}$ be a closed telescope, and let $\alpha:\Tp\prn{\mathbf{1}\ldots\Psi}$ be a type in context corresponding under Lemma~\ref{lem:env-representable} via Yoneda to a closed family of types $\hat{\alpha}\colon \prn{\Env_n\prn{\Psi}\to\Tp}\prn{\mathbf{1}}$. For any syntactic checking-mode term elaboration script
  $
    x_1 : \HypScript,\ldots,x_n:\HypScript
    \vdash
    M\prn{x_1,\ldots,x_n}:\ChkScript
  $
  we have a corresponding partial element
  $
    \brk[\big]{\Psi\vdash \mathcal{E}\bbrk{M}}_{\mathbf{1}}\prn{\hat\alpha}
     : \PMC\prn[\big]{
      \prn{\psi:\Env_n\prn{\Psi}}\to
      \Tm\prn{\hat{\alpha}\prn{\psi}}
     }
  $.
  We calculate what such a global section amounts to:
  \begin{align*}
    &\PMC\prn[\big]{
      \prn{\psi:\Env_n\prn{\Psi}}\to
      \Tm\prn{\hat{\alpha}\prn{\psi}}
    }
    \\
    &\quad\cong
    \prn{\varphi:\OProp\prn{\mathbf{1}}}
    \times
    \textstyle\int_{\Gamma\in\varphi}
    \prn[\big]{
      \prn{\psi:\Env_n\prn{\Psi}}\to \Tm\prn{\hat{\alpha}\prn{\psi}}
    }\prn{\Gamma}
    \\
    &\quad\cong
    \prn{\varphi:\OProp\prn{\mathbf{1}}}
    \times
    \textstyle\int_{\Gamma\in\varphi}
    \Tm\prn{\Gamma\ldots\Psi, \alpha_{\vert \Gamma\ldots\Psi}}
    &&\text{by Lemma~\ref{lem:env-representable} via Yoneda}
  \end{align*}
\end{procedure}

\end{document}